\documentclass[11pt]{article}
\pdfoutput=1
\usepackage{amsfonts,amsmath,amssymb,amsthm}
\usepackage{subcaption}
\usepackage[pdftex]{graphicx}
\usepackage[hmargin=1in,vmargin=1in]{geometry}
\usepackage[sort&compress]{natbib}				
\usepackage{booktabs}
\usepackage{appendix}
\usepackage{footnote}
\usepackage{multirow}
\usepackage[colorlinks,citecolor=blue]{hyperref}
\usepackage{enumerate}
\usepackage{caption}
\usepackage[finnish, british]{babel}
\usepackage{capt-of}
\usepackage{hyperref}
\usepackage{algorithm}
\usepackage{enumitem}
\usepackage{float}
\usepackage{mathabx}
\usepackage{algorithm,algpseudocode}
\usepackage[flushleft]{threeparttable}
\bibliographystyle{te}

\def\qed{\rule{2mm}{2mm}}
\def\indep{\perp \!\!\! \perp}

\parskip = 1.5ex plus 0.5 ex minus0.2 ex 

\let\footnote=\endnote

\usepackage{tikz,graphicx,pgfplots,pgfkeys,subcaption}   

\def\addlegendimage{\csname pgfplots@addlegendimage\endcsname}

\usepackage[pagewise,mathlines]{lineno}
\synctex=1
\mathchardef\dash="2D
\tikzset{
  treenode/.style = {shape=rectangle, rounded corners,
                     draw, align=center},
  root/.style     = {circle, draw, top color = gray!30, bottom color=gray!30},
  leaf/.style = {treenode, font=\ttfamily\normalsize, top color = green!30, bottom color = green!30},
}
\newtheorem{theorem}{Theorem}[section]
\newtheorem{lemma}{Lemma}[section]
\newtheorem{definition}{Definition}[section]

\newtheorem{proposition}{Proposition}[section]
\theoremstyle{assumption}
\newtheorem{assumption}{Assumption}[section]
\theoremstyle{definition}

\newtheorem{example}{Example}[section]
\newtheorem{remark}{Remark}[section]
\usepackage{etoolbox} 
\AtEndEnvironment{remark}{~\qed}%

\begin{document}

\author{
Max Tabord-Meehan\\
Department of Economics\\
University of Chicago\\
\url{maxtm@uchicago.edu}
}

\title{Stratification Trees for Adaptive Randomization in Randomized Controlled Trials\\
}

\maketitle
\vspace{-10mm}
\begin{abstract}

This paper proposes an adaptive randomization procedure for two-stage randomized controlled trials. The method uses data from a first-wave experiment in order to determine how to stratify in a second wave of the experiment, where the objective is to minimize the variance of an estimator for the average treatment effect (ATE). We consider selection from a class of stratified randomization procedures which we call stratification trees: these are procedures whose strata can be represented as decision trees, with differing treatment assignment probabilities across strata. By using the first wave to estimate a stratification tree, we simultaneously select which covariates to use for stratification, how to stratify over these covariates, and the assignment probabilities within these strata. Our main result shows that using this randomization procedure with an appropriate estimator results in an asymptotic variance which is minimal in the class of stratification trees. Moreover, our results are able to accommodate a large class of assignment mechanisms within strata, including stratified block randomization. In a simulation study, we find that our method, paired with an appropriate cross-validation procedure, can improve on ad-hoc choices of stratification. We conclude by applying our method to the study in \cite{karlan2017}, where we estimate stratification trees using the first wave of their experiment. 

\end{abstract}

\noindent KEYWORDS:  randomized experiments; decision trees; adaptive randomization\\
\noindent JEL classification codes: C14, C21, C93
\clearpage

\section{Introduction} \label{sec:intro}
This paper proposes an adaptive randomization procedure for two-stage randomized controlled trials (RCTs). The method uses data from a first-wave experiment in order to determine how to stratify in a second wave of the experiment, where the objective is to minimize the variance of an estimator for the average treatment effect (ATE). We consider selection from a class of stratified randomization procedures which we call stratification trees: these are procedures whose strata can be represented as decision trees, with differing treatment assignment probabilities across strata.

Stratified randomization is ubiquitous in randomized experiments. In stratified randomization, the space of available covariates is partitioned into finitely many categories (i.e. strata), and randomization to treatment is performed independently across strata. Stratification has the ability to decrease the variance of estimators for the ATE through two parallel channels. The first channel is from ruling out treatment assignments which are potentially uninformative for estimating the ATE. For example, if we have information on the sex of individuals in our sample, and average outcomes vary with sex, then performing stratified randomization over this characteristic can reduce variance (we present an example of this for the standard difference-in-means estimator in Appendix \ref{sec:intro_example}). The second channel through which stratification can decrease variance is by allowing for differential treatment assignment probabilities across strata. For example, if we again consider the setting where we have information on sex, then it could be the case that for males the outcome under one treatment varies much more than under the other treatment. As we show in Section \ref{sec:discuss}, this can be exploited to reduce variance by assigning treatment according to the \emph{Neyman Allocation}, which in this example would assign more males to the more variable treatment. Our proposed method leverages insights from supervised machine-learning to exploit both of these channels, by simultaneously selecting \emph{which} covariates to use for stratification, \emph{how} to stratify over these covariates, as well as the optimal assignment probabilities within these strata, in order to minimize the variance of an estimator for the ATE.  

Our main result shows that using our procedure results in an ``optimal" (to be made precise later) stratification of the covariate space, where we restrict ourselves to stratification in a class of decision trees. A decision tree partitions the covariate space such that the resulting partition  can be interpreted through a series of yes or no questions (see Section \ref{sec:notation} for a formal definition and some examples). We focus on strata formed by decision trees for several reasons. First, since the resulting partition can be represented as a series of yes or no questions, it is easy to communicate and interpret, even with many covariates. This feature could be particularly important in many economic applications, because many RCTs in economics are undertaken in partnership with external organizations (for example, many RCTs described in \citealt{karlan2016} were undertaken in this way), and thus clear communication of the experimental design could be crucial.  Second, as we explain in Section \ref{sec:mainres}, using partitions based on decision trees gives us theoretical and computational tractability. Third, as we explain in Section \ref{sec:extensions}, using decision trees allows us to flexibly address the additional goal of minimizing the variance of estimators for subgroup-specific effects. Lastly, decision trees naturally encompass the type of stratifications usually implemented by practitioners. The use of decision trees in statistics and machine learning goes back at least to the work of Breiman \citep[see][for classic textbook treatments]{breiman1984, gyorfi1996}, and has seen a recent resurgence in econometrics \citep[examples include][]{athey2016, athey2017}.

An important feature of our theoretical results is that we allow for the possibility of so-called restricted randomization procedures \emph{within} strata. Restricted randomization procedures limit the set of potential treatment allocations, in order to force the true treatment assignment proportions to be close to the desired target proportions \citep[common examples used in a variety of fields include][]{efron1971, wei1978, antognini2004, kuznetsova2011, zelen1974}. Restricted randomization induces dependence in the assignments within strata, which complicates the analysis of our procedure. By extending techniques recently developed in \cite{bugni2017}, our results accommodate a large class of restricted randomization schemes, including stratified block randomization, which as we discuss in Example \ref{ex:sbr_treat} is a popular method of randomization.

Although our main focus is on increasing efficiency, stratified randomization has additional practical benefits beyond reducing the variance of ATE estimators. For example, when a researcher wants to analyze subgroup-specific effects, stratifying on these subgroups serves as a form of pre-analysis registration. To that end, we also present results on how to extend our procedure when targeting subgroup-specific effects.

The literature on design and inference in RCTs is vast (references in \citealt{athey2017survey}, \citealt{cox2000}, \citealt{glennerster2013}, \citealt{pukelsheim2006}, \citealt{rosenberger2015}, and from a Bayesian perspective, \citealt{ryan2016}, provide an overview). The classical literature on optimal randomization, going back to the work of \cite{smith1918} \citep[see][for a textbook treatment]{silvey2013}, maintains a parametric relationship for the outcomes with respect to the covariates, and targets efficient estimation of the model parameters. In contrast, our paper follows a recent literature which instead maintains a non-parametric model of potential outcomes, and targets efficient estimation of treatment effect parameters. This recent literature can be broadly divided into ``one-stage" procedures, which do not use prior data on all experimental outcomes to determine how to assign treatment \citep[examples include][]{kallus2013, kasy2016, barrios2014,aufenanger2017,quistorff2020}, and ``multi-stage" procedures, of which our method is an example.
Multi-stage procedures use prior data on the experimental outcomes to determine how to randomize. For example, they may use response information from previous experimental waves to determine how to randomize in subsequent waves of the experiment. We will call these procedures \emph{response-adaptive}. Although response adaptive methods typically require information from a prior experiment, such settings do arise in economic applications. First, many social experiments have a pilot phase or multi-stage structure. For example, \cite{simester2006}, \cite{karlan2008}, and \cite{karlan2017} all feature a multi-stage structure, and \cite{karlan2016} advocate the use of pilot experiments to help avoid potential implementation failures when scaling up to the main study. Second, many research areas have seen a profusion of related work which could be used as a first wave of data in a response-adaptive procedure \citep[see for example the discussion in the introduction of][]{hahn2011}. The study of response-adaptive methods to inform many aspects of experimental design, including how to randomize, has a long history in the  literature on clinical trials, both from a frequentist and Bayesian perspective \citep[see for example the references in][]{hu2006,sverdlov2015,cheng2003}, as well as in the literature on bandit problems \citep[see][]{bubeck2012}. 

Three papers which propose response-adaptive randomization methods in a framework similar to ours are \cite{hahn2011}, \cite{chambaz2014} and \cite{bai2019} \citep[see also][for related work in the presence of network interference]{viviano2020}. \cite{hahn2011} develop a procedure which uses the information from a first-wave experiment to compute the propensity score that minimizes the asymptotic variance of an ATE estimator, over a \emph{discrete} set of covariates (i.e. they stratify the covariate space ex-ante). They then use the resulting propensity score to assign treatment in a second-wave experiment. In contrast, our method computes the optimal assignment proportions over a data-driven discretization of the covariate space. \cite{chambaz2014} propose a multi-stage procedure which uses data from previous experimental waves to compute an optimal propensity score, where the propensity score is constrained through entropy restrictions. However, their method requires the selection of several tuning parameters, as well as additional regularity conditions, and their optimal target depends on these features in a way that may be hard to assess in practice. Their results are also derived in a framework where the number of experimental waves goes to infinity, which may not be an appropriate framework for many settings encountered in economics. Moreover, the results in both \cite{hahn2011} and \cite{chambaz2014} assume that assignment was performed completely independently across individuals in a given wave. In contrast, we reiterate that our results will accommodate a large class of stratified randomization schemes. \cite{bai2019} derives the MSE-optimal blocking of an experimental sample for the difference-in-means estimator, given a \emph{fixed} assignment proportion, and shows that this blocking takes the form of a ``matched-pairs" style design. He then proposes procedures which use information from a first-wave experiment to approximate the optimal blocking in a second-wave experiment. In contrast, in this paper we maintain an asymptotic framework where the number of strata is \emph{fixed}, which precludes the type of blocking designs considered in \cite{bai2019}. However, as explained in \cite{bai2019}, it is in fact possible to combine his procedure with the one proposed in this paper, by implementing his optimal blocking \emph{within} each stratum produced by our method. See Remark \ref{rem:depth} for further discussion.


The paper proceeds as follows: In Section \ref{sec:setup}, we provide a motivating discussion, set up the notation, and formally define the set of randomization procedures we consider. In Section \ref{sec:results}, we present the formal results underlying the method as well as several relevant extensions. In Section \ref{sec:simulations}, we perform a simulation study to assess the performance of our method in finite samples. In Section \ref{sec:application}, we consider an application to the study in \cite{karlan2017}, where we estimate stratification trees using the first wave of their experiment. Section \ref{sec:conclusion} concludes.

\section{Preliminaries} \label{sec:setup}
In this section we discuss some preliminary concepts and definitions. Section \ref{sec:discuss} presents a series of simplified examples which we use to motivate our procedure. Section \ref{sec:notation} establishes some notation and provides the definition of a \emph{stratification tree}, as well as our notion of a \emph{randomization procedure}. 

\subsection{Motivating Discussion}\label{sec:discuss}
We present a series of simplified examples which we use to motivate our proposed method. First we study the problem of optimal experimental assignment without covariates. We work in a standard potential outcomes framework: let $(Y(1),Y(0))$ be potential outcomes for a binary treatment $A \in \{0, 1\}$, and let the observed outcome $Y$ for an individual be defined as
\begin{equation}\label{eq:pot}
Y = Y(1)A + Y(0)(1-A)~.
\end{equation}
Let
$$\mu_a := E[Y(a)], \sigma^2_a := \text{Var}(Y(a))~,$$  
for $a \in \{0, 1\}$. Our quantity of interest is the average treatment effect
$$\theta := \mu_1 - \mu_0~.$$
Suppose we perform an experiment to obtain a size $n$ sample $\{(Y_i, A_i)\}_{i=1}^n$, where the sampling process is determined by $\{(Y_i(1),Y_i(0))\}_{i=1}^n$, which are i.i.d, and the treatment assignments $\{A_i\}_{i=1}^n$, where exactly $n_1 := \sum_{i=1}^nA_i = \lfloor{n\pi\rfloor}$ individuals are \emph{randomly} assigned to treatment $A = 1$, for some $\pi \in (0,1)$ (however, we emphasize that our results will accommodate other methods of randomization). Given this sample, consider estimation of $\theta$ through the standard difference-in-means estimator:
$$\hat{\theta}^{(1)} :=  \frac{1}{n_1}\sum_{i=1}^{n}Y_iA_i - \frac{1}{n - n_1}\sum_{i=1}^nY_i(1-A_i)~.$$
It can then be shown that
$$\sqrt{n}(\hat{\theta}^{(1)} - \theta)  \xrightarrow{d} N(\theta, V^{(1)})~,$$
where 
$$V^{(1)} := \frac{\sigma_1^2}{\pi} + \frac{\sigma_0^2}{1 - \pi}~.$$
Our goal is to choose $\pi$ to minimize the variance of $\hat{\theta}$. Solving this optimization problem yields the following solution:
$$\pi^* := \frac{\sigma_1}{\sigma_1 + \sigma_0}~.$$
This allocation, known as the \emph{Neyman Allocation}, assigns more individuals to the treatment which is more variable. Note that when $\sigma_0^2 = \sigma_1^2$, so that the variances of the potential outcomes are equal, the optimal proportion is $\pi^* = 0.5$, which corresponds to a standard equal treatment allocation. In general, implementing $\pi^*$ is infeasible without knowledge of $\sigma^2_0$ and $\sigma^2_1$. In light of this, if we had prior data $\{(Y_j, A_j)\}_{j=1}^m$ which allowed us to estimate $\sigma^2_0$ and $\sigma^2_1$, then we could use this data to estimate $\pi^*$, and then use this estimate to assign treatment in a subsequent wave of the study. The idea of sequentially updating estimates of unknown population quantities using past observations, in order to inform experimental design in subsequent stages, underlies many procedures developed in the literatures on response adaptive experiments and bandit problems, and is the main idea underpinning our proposed method.

\begin{remark}\label{rem:ethics}
Although the Neyman Allocation minimizes the variance of the difference-in-means estimator, it is entirely agnostic on the welfare of the individuals in the experiment itself. In particular, the Neyman Allocation could assign the majority of individuals in the experiment to the inferior treatment if that treatment has a much larger variance in outcomes (see \citealt{hu2006} for relevant literature in the context of clinical trials, as well as \cite{narita2018} for recent work on this issue in econometrics). While this feature of the Neyman Allocation may introduce ethical or logistical issues in some relevant applications, in this paper we focus exclusively on the problem of estimating the ATE as accurately as possible. 
\end{remark}

Next we repeat the above exercise with the addition of a discrete covariate $S \in \{1, 2, ... , K\}$ over which we stratify. We perform an experiment which produces a sample $\{(Y_i, A_i, S_i)\}_{i=1}^n$, where the sampling process is determined by i.i.d draws  $\{(Y_i(1), Y_i(0), S_i)\}_{i=1}^n$ and the treatment assignments $\{A_i\}_{i=1}^n$. For this example suppose that the $\{A_i\}_{i=1}^n$ are generated as follows: for each $k$, exactly $n_1(k) := \sum_{i=1}^n{\bf 1}\{S_i = k\}A_i = \lfloor{n(k)\pi(k)\rfloor}$ individuals are randomly assigned to treatment $A = 1$, with $n(k) := \sum_{i=1}^n{\bf 1}\{S_i = k\}$.

Note that when the assignment proportions $\pi(k)$ are not equal across strata, the difference-in-means estimator $\hat{\theta}^{(1)}$ is no longer consistent for $\theta$. Hence we consider the following weighted estimator of $\theta$:
$$\hat{\theta}^{(2)} := \sum_k\frac{n(k)}{n}\hat{\theta}(k)~,$$
where
$\hat{\theta}(k)$ is the difference-in-means estimator for $S = k$:
$$\hat{\theta}(k) := \frac{1}{n_1(k)}\sum_{i=1}^{n}Y_iA_i{\bf 1}\{S_i = k\} - \frac{1}{n(k) - n_1(k)}\sum_{i=1}^nY_i(1-A_i){\bf 1}\{S_i = k\}~.$$
In words, $\hat{\theta}^{(2)}$ is obtained by computing the difference in means for each $k$ and then taking a weighted average over each of these estimates. Note that when $K = 1$ (i.e. when $S$ can take only one value), this estimator simplifies to the difference-in-means estimator. It can be shown under appropriate conditions that
$$\sqrt{n}(\hat{\theta}^{(2)} - \theta) \xrightarrow{d} N(0, V^{(2)})~,$$
where
$$V^{(2)} := \sum_{k=1}^K P(S = k) \left[ \left(\frac{\sigma_0^2(k)}{1-\pi(k)} + \frac{\sigma_1^2(k)}{\pi(k)}\right) + \left(E[Y(1) - Y(0)|S = k] - E[Y(1)-Y(0)]\right)^2\right]~,$$
with
$\sigma^2_a(k) := E[Y(a)^2|S = k] - E[Y(a)|S = k]^2$. The first term in $V^{(2)}$ is the weighted average of the conditional variances of the difference in means estimator for each $S = k$. The second term in $V^{(2)}$ arises due to the additional variability in sample sizes for each $S = k$. We note that this variance takes the form of the semi-parametric efficiency bound derived by \cite{hahn1998} for estimators of the ATE which use the covariate $S$. Following a similar logic to what was proposed above without covariates, we could use first-wave data $\{(Y_j, A_j, S_j)\}_{j=1}^m$ to form a sample analog of $V^{(2)}$, and choose $\{\pi^*(k)\}_{k=1}^K$ to minimize this quantity. 

Now we introduce the setting that we consider in this paper: suppose we observe covariates $X \in \mathcal{X} \subset \mathbb{R}^d$, so that our covariate space is now multi-dimensional with potentially continuous components. How could we practically extend the logic of the previous examples to this setting? A natural solution is to \emph{discretize} (i.e. \emph{stratify}) $\mathcal{X}$ into $K$ categories (strata), by specifying a mapping $S:\mathcal{X} \rightarrow \{1, 2, 3, ..., K\}$, with $S_i := S(X_i)$, and then proceed as in the above example. As we argued in the introduction, stratified randomization is a popular technique in practice, and possesses several attractive theoretical and practical properties. In this paper we propose a method which uses first-wave data to estimate (1) the optimal stratification, and (2) the optimal assignment proportions within these strata. In other words, given first-wave data $\{(Y_j, A_j, X_j)\}_{j=1}^m$, where $X \in \mathcal{X} \subset \mathbb{R}^d$, we propose a method which selects $\{\pi(k)\}_{k=1}^K$ \emph{and} the function $S(\cdot)$, in order to minimize the variance of our estimator $\hat{\theta}^{(2)}$. In particular, our proposed solution selects a randomization procedure amongst the class of what we call \emph{stratification trees}, which we introduce in the next section.

\begin{remark}\label{rem:treatchoice}
Our focus on the minimization of asymptotic variance is motivated by standard asymptotic optimality results for regular estimators \citep[see for example Theorems 25.20, 25.21 in][]{van1998}, and our procedure directly impacts the size of confidence sets and the power of tests constructed using our normal approximation. However, accurate estimation of the ATE is not the only objective one could consider when designing an RCT. For example, we could instead consider designing the RCT with the ultimate goal of finding a treatment allocation which maximizes population welfare (see for example \citealt{manski2004}, \citealt{kitagawa2017} and references therein). Some recent work \citep[see for example][]{kasy2021} has focused on adaptive assignment mechanisms with this objective in mind. To what extent these objectives can be incorporated into our framework would be an interesting direction for future work. 
\end{remark}


\subsection{Notation and Definitions}\label{sec:notation}
In this section we establish our notation and define the class of randomization procedures that we will consider. As in Section \ref{sec:discuss} let $(Y(1), Y(0))$ be potential outcomes for a binary treatment $A \in \{0,1\}$ and let $X \in \mathcal{X} \subset \mathbb{R}^d$ denote a vector of observed pre-treatment covariates. Let $Q$ denote the distribution of $(Y(1), Y(0), X)$. Throughout the paper we assume that all of our observations are generated by i.i.d draws from $Q$. We restrict $Q$ as follows:
\begin{assumption}\label{ass:bounded}
Q satisfies the following properties:
\begin{itemize}[topsep = 1pt]
\item $Y(a) \in [-M, M]$ for some $M < \infty$, for $a \in \{0, 1\}$.
\item $X \in \mathcal{X} = \bigtimes_{j=1}^d [b_j, c_j]$, for some $\{b_j, c_j\}_{j=1}^d$ finite.
\item $X = (X_C, X_D)$, where $X_C \in \mathbb{R}^{d_1}$ for some $d_1 \in \{0, 1, 2, ..., d\}$ is continuously distributed with a bounded, strictly positive density. $X_D \in \mathbb{R}^{d - d_1}$ is discretely distributed with finite support.
\end{itemize}
\end{assumption}
\begin{remark}\label{rem:bound_X}
The assumptions imposed on $(Y(1),Y(0),X)$ in Assumption \ref{ass:bounded} are used frequently throughout the proofs of our results. However, they may be stronger than desirable in some applications. For example, the assumption that $X$ has only discrete or continuous components which are supported on a rectangle may fail in certain practical examples (see for example the set of covariates considered in Section \ref{sec:application}). However, in the simulations presented in Section \ref{sec:simulations} and Appendix \ref{sec:supp_sim} we consider designs where $Y(a)$ has unbounded support and $X$ is not supported on a rectangle, and these results suggest that Assumption \ref{ass:bounded} could be reasonably weakened.  We further note that although a user does not need to specify a choice of $M$ to implement our procedure, they are required to specify a choice of $\mathcal{X}$. We illustrate this point in the application presented in Section \ref{sec:application}.
 \end{remark}

Our quantity of interest is the average treatment effect (ATE) given by:
$$\theta := E[Y_i(1) - Y_i(0)]~.$$

An experiment on an i.i.d sample $\{(Y_i(1), Y_i(0), X_i)\}_{i=1}^n$ produces the following data:
$$\{W_i\}_{i=1}^n:= \{(Y_i, A_i, X_i)\}_{i=1}^n~,$$
whose joint distribution is determined by $Q$, the potential outcomes expression (\ref{eq:pot}), and the \emph{randomization procedure} which generates $\{A_i\}_{i=1}^n$. We focus on the class of stratified randomization procedures: these randomization procedures first stratify according to baseline covariates and then assign treatment status independently across each of these strata. However, we attempt to make minimal assumptions on the exact specification of the randomization procedure, and in particular we do \emph{not} require the treatment assignment within each stratum to be independent across observations.

We will now describe the structure we impose on the class of possible strata we consider. For $L$ a positive integer, let $K = 2^L$ and let $[K] := \{1, 2, ... ,K\}$. Consider a function $S: \mathcal{X} \rightarrow [K]$, then $\{S^{-1}(k)\}_{k=1}^K$ forms a partition of $\mathcal{X}$ with $K$ strata. For a given positive integer $L$, we work in the class $S(\cdot) \in \mathcal{S}_L$ of functions whose partitions form \emph{tree partitions} of depth $L$ on $\mathcal{X}$, which we now define. Our definition is recursive, so we begin with the definition for a tree partition of depth one:

\begin{definition}\label{def:tree_part1}
Let $\mathcal{X} = \bigtimes_{j=1}^d [b_j, c_j]$, and let $x = (x_1, x_2, ..., x_d) \in \mathcal{X}$. A tree partition of depth one on $\mathcal{X}$ is a partition $\{\mathcal{X}_D(j,\gamma), \mathcal{X}_U(j,\gamma)\}$ of $\mathcal{X}$, where 
$$\mathcal{X}_D(j,\gamma) := \{x \in \mathcal{X}: x_j \le \gamma\}~,$$
$$\mathcal{X}_U(j,\gamma) := \{x \in \mathcal{X}: x_j > \gamma\}~,$$
for some $j \in [d]$ and $\gamma \in (b_j, c_j)$. We call $\mathcal{X}_D(j,\gamma)$ and $\mathcal{X}_U(j,\gamma)$ leaves (or sometimes terminal nodes).
\end{definition}

\begin{example}\label{ex:tree_part1}
Figure \ref{fig:tree_part1} presents two different representations of a tree partition of depth one on $[0,1]^2$. The first representation we call \emph{graphical}: it depicts the partition on a square drawn in the plane. The second depiction we call a \emph{tree representation}: it illustrates how to describe a depth one tree partition as a yes or no question. In this case, the question is ``is $x_1$ less than or greater than 0.5?".
\end{example}

\begin{figure}[H]
\centering
\captionsetup{justification = centering}
\begin{tikzpicture}
\draw[transform canvas={yshift=5cm}];
\draw [fill = gray!20] (1,1) rectangle (3,5);
\draw [fill = gray!80] (3,1) rectangle (5,5);
\draw [very thick] (3,1) -- (3,5);
\node at (5.1,0.7) {$x_1$};
\node at (0.7,5.1) {$x_2$};
\node at (2,3) {$1$};
\node at (4,3) {$2$};
\node at (3,0.7) {\footnotesize{$0.5$}};
\end{tikzpicture}
\hspace{13mm}
\raisebox{15mm}{
\begin{tikzpicture}
  [
    grow                    = down,
    sibling distance        = 10em,
    level distance          = 3em,
    edge from parent/.style = {draw, -latex},
    every node/.style       = {font=\footnotesize},
    sloped
  ]
  \node [root]  {}
    child { node [leaf] {$1$}
      edge from parent node [above] {$x_1 \le 0.5$} }
    child { node [leaf] {$2$}
      edge from parent node [above] {$x_1 > 0.5$} };
 \end{tikzpicture}
 }
\caption{Two representations of a tree partition of depth 1 on $[0,1]^2$. \newline Graphical representation (left), tree representation (right).}\label{fig:tree_part1}
\end{figure}
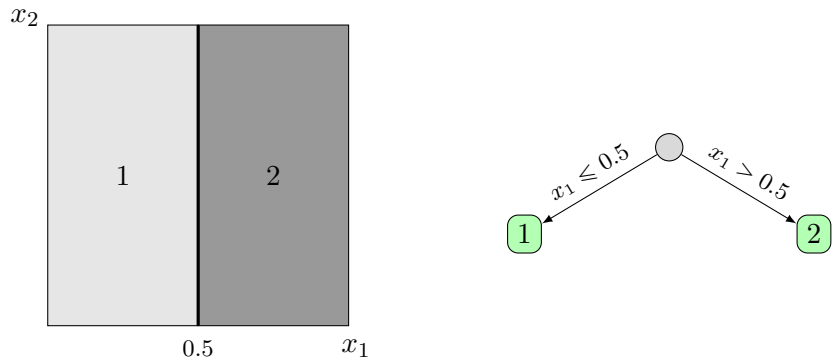

Next we define a tree partition of depth $L > 1$ recursively:

\begin{definition}\label{def:tree_partL}
A tree partition of depth $L > 1$ on $\mathcal{X} = \bigtimes_{j=1}^d [b_j, c_j]$ is a partition $\{\mathcal{X}^{(L-1)}_D(j, \gamma), \mathcal{X}^{(L-1)}_U(j, \gamma)\}$ of $\mathcal{X}$, where
$$\mathcal{X}^{(L-1)}_D(j, \gamma) \hspace{2mm} \text{is a tree partition of depth $L - 1$ on $\mathcal{X}_D(j, \gamma)$}~,$$
$$\mathcal{X}^{(L-1)}_U(j, \gamma) \hspace{2mm} \text{is a tree partition of depth $L - 1$ on $\mathcal{X}_U(j, \gamma)$}~,$$
for some $j \in [d]$ and $\gamma \in (b_j, c_j)$. We call $\mathcal{X}^{(L-1)}_D$ and $\mathcal{X}^{(L-1)}_U$ left and right subtrees, respectively.
\end{definition}

\begin{example}\label{ex:tree_part2}
Figure \ref{fig:tree_part2} depicts two representations of a tree partition of depth two on $[0,1]^2$. 
\end{example}

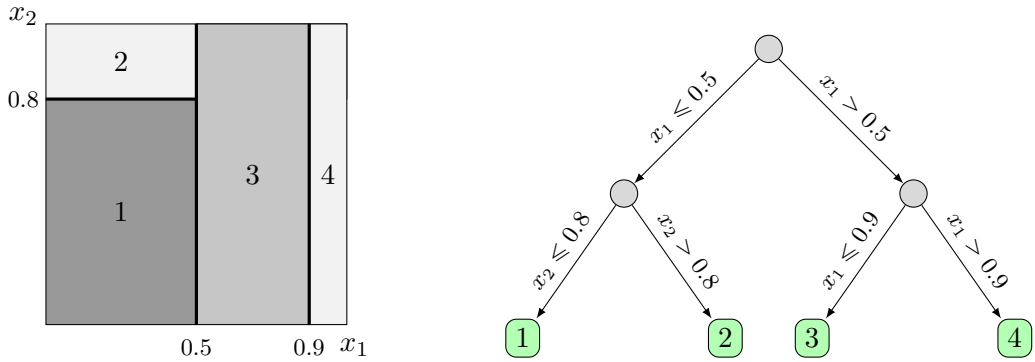
\begin{figure}[H]\center
\captionsetup{justification = centering}

      \begin{tikzpicture}
\draw [fill = gray!80] (1,1) rectangle (3,4);
\draw [fill = gray!10] (1,4) rectangle (5,5);
\draw [fill = gray!45] (3,1) rectangle (4.5,5);
\draw [fill = gray!10] (4.5,1) rectangle (5,5);
\draw [very thick] (3,1) -- (3,5);
\draw [very thick] (1,4) -- (3,4);
\draw [very thick] (4.5,1) -- (4.5,5);
\node at (5.1,0.7) {$x_1$};
\node at (0.7,5.1) {$x_2$};
\node at (2,2.5) {$1$};
\node at (2,4.5) {$2$};
\node at (3.75,3) {$3$};
\node at (4.75,3){$4$};
\node at (3,0.7) {\footnotesize{$0.5$}};
\node at (4.5,0.7){\footnotesize{$0.9$}};
\node at (0.7,4){\footnotesize{$0.8$}};
\end{tikzpicture}
\hspace{13mm}
\raisebox{1mm}{
\begin{tikzpicture}
  [
    grow                    = down,
    level distance          = 5em,
    edge from parent/.style = {draw, -latex},
    sloped,
    level 1/.style={sibling distance=10em,font=\footnotesize},
    level 2/.style={sibling distance=7em,font=\footnotesize},
  ]
  \node [root] {}
    child { node [root] {}
    	child {node [leaf]{$1$}
	  edge from parent node [above] {$x_2 \le 0.8$} }
	child {node [leaf]{$2$}
	  edge from parent node [above] {$x_2 > 0.8$} }
      edge from parent node [above] {$x_1 \le 0.5$} }
    child { node [root] {}
    	child {node[leaf]{$3$}
	  edge from parent node [above] {$x_1 \le 0.9$} }
	child {node[leaf]{$4$}
	  edge from parent node [above] {$x_1 > 0.9$} }
      edge from parent node [above] {$x_1 > 0.5$} };
      \end{tikzpicture}
}
\caption{Two representations of a tree partition of depth 2 on $[0,1]^2$. \newline Graphical representation (left), tree representation (right).}\label{fig:tree_part2}
\end{figure}

We focus on strata that form tree partitions for several reasons. First, relative to more ``flexible" partitions of $\mathcal{X}$, these types of strata are easy to represent and interpret, especially in higher dimensions, via their tree representations or as a series of yes or no questions. We argued in the introduction that this could be of particular importance in economic applications. Second, as we explain in Remarks \ref{rem:compdet} and \ref{rem:tech_challenge}, restricting ourselves to tree partitions helps with computational and theoretical tractability. In particular, computing an optimal stratification function is a difficult discrete optimization problem, but restricting ourselves to tree partitions allows us to employ an effective search heuristic known as an evolutionary algorithm. Third, the recursive aspect of tree partitions makes the targeting of subgroup-specific effects convenient, as we explain in Section \ref{sec:extensions}. 

For each $k \in [K]$, define $\pi := (\pi(k))_{k=1}^K$ to be the vector of target proportions of units assigned to treatment $1$ in each stratum. A \emph{stratification tree} is a pair $(S,\pi)$, where $S(\cdot)$ forms a tree partition, and $\pi$ specifies the target proportions in each stratum.  We denote the set of stratification trees of depth $L$ as $\mathcal{T}_L$.

\begin{remark}\label{rem:quotient}
To be precise, any element $T = (S, \pi) \in \mathcal{T}_L$ is equivalent to another element $T' = (S', \pi') \in \mathcal{T}_L$ whenever $T'$ can be realized as a re-labeling of $T$. For instance, if we consider Example \ref{ex:tree_part1} with the labels $1$ and $2$ reversed, the resulting tree is identical to the original except for this re-labeling. $\mathcal{T}_L$ should be understood as the quotient set that results from this equivalence.
\end{remark}
%

\begin{example}\label{ex:tree_part3}
Figure \ref{fig:strat_tree} depicts a representation of a stratification tree of depth two. Note that the terminal nodes of the tree have been replaced with labels that specify the target proportions in each stratum.
\end{example}

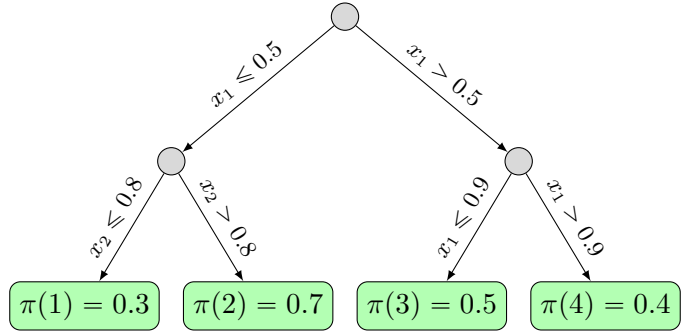
\begin{figure}[H]\center
\begin{tikzpicture}
  [
    grow                    = down,
    level distance          = 5em,
    edge from parent/.style = {draw, -latex},
    sloped,
    level 1/.style={sibling distance=12em,font=\footnotesize},
    level 2/.style={sibling distance=6em,font=\footnotesize},
  ]
  \node [root] {}
    child { node [root] {}
    	child {node [leaf]{$\pi(1) = 0.3$}
	  edge from parent node [above] {$x_2 \le 0.8$} }
	child {node [leaf]{$\pi(2) = 0.7$}
	  edge from parent node [above] {$x_2 > 0.8$} }
      edge from parent node [above] {$x_1 \le 0.5$} }
    child { node [root] {}
    	child {node[leaf]{$\pi(3) = 0.5$}
	  edge from parent node [above] {$x_1 \le 0.9$} }
	child {node[leaf]{$\pi(4) = 0.4$}
	  edge from parent node [above] {$x_1 > 0.9$} }
      edge from parent node [above] {$x_1 > 0.5$} };
      \end{tikzpicture}
\caption{Representation of a Stratification Tree of Depth 2}\label{fig:strat_tree}
\end{figure}

Note that in our definition, a stratification tree of depth $L$ has exactly $K = 2^L$  leaves, so that our trees are ``perfectly balanced". For the optimization problem we consider in our paper this is without loss of generality, because it can be shown \citep[for example using the argument from Theorem 6.1 in][]{bugni2021} that the asymptotic variance which results from a given stratification tree can be made weakly smaller by taking any leaf and subdividing it arbitrarily into two leaves with the same assignment target as their parent.

We further impose that the set of trees cannot have arbitrarily small cells, nor can they have arbitrarily extreme treatment assignment targets:

\begin{assumption}\label{ass:prop/cell_size}
We constrain the set of stratification trees $T = (S, \pi) \in \mathcal{T}_L$ such that, for some fixed $\nu > 0$ and $\delta > 0$, $\pi(k) \in [\nu, 1- \nu]$ and $P(S(X)=k) > \delta$.
\end{assumption}
We impose the restriction in Assumption \ref{ass:prop/cell_size} to ensure, for example, that $1/\pi(k)$ is uniformly bounded for all $T \in \mathcal{T}_L$. For technical reasons relating to the potential non-measurability of our estimator, we will impose one additional restriction on $\mathcal{T}_L$. 
\begin{assumption}\label{ass:cell_grid}
Let  $\mathcal{T}_L^{\dagger} \subset \mathcal{T}_L$ be a countable, closed subset of the set of stratification trees\footnotemark[1].  We then consider the set of stratification trees restricted to this subset. By an abuse of notation, we continue to denote the set of stratification trees we will consider as $\mathcal{T}_L$.
\end{assumption}

\begin{remark}
We emphasize that this assumption is \emph{only} used as a sufficient condition to guarantee measurability, in order to invoke Fubini's theorem. Note that, in practice, restricting the set of stratification trees to those constructed from a finite grid satisfies Assumption \ref{ass:cell_grid}. However, our results will also apply more generally.
\end{remark}

\footnotetext[1]{Here ``closed" is with respect to an appropriate topology on $\mathcal{T}_L$, see Appendix \ref{sec:rho} for details.}

For each $T \in \mathcal{T}_L$, and given an i.i.d sample $\{(Y_i(0),Y_i(1), X_i)\}_{i=1}^n$ of size $n$, an experimental assignment is described by a random vector $(A_i(T))_{i=1}^n$. For our purposes a \emph{randomization procedure} (or randomization scheme) is a family of such random vectors indexed by $T = (S, \pi) \in \mathcal{T}_L$.  For $T = (S, \pi)$, let $S_i:= S(X_i)$, $(S_i)_{i=1}^n$ be the random vector of stratification labels of the observed data. We impose two assumptions on the randomization procedure $(A_i(T))_{i=1}^n$.

First, we require the following exogeneity assumption:

\begin{assumption}\label{ass:indep_treat}
The randomization procedure is such that, for each $T = (S, \pi) \in \mathcal{T}_L$,
$$\left[\{(Y_i(0), Y_i(1), X_i)\}_{i=1}^n \indep (A_i(T))_{i=1}^n \right]\bigg| (S_i)_{i=1}^n~.$$
\end{assumption}
This assumption asserts that the randomization procedure can depend on the observables only through the strata labels. Next, let $p(k;T) := P(S_i = k)$ be the population proportions of each stratum, then we also require that  the randomization procedure satisfy the following ``consistency" property:

\begin{assumption}\label{ass:unif_treat}
The randomization procedure is such that
$$\sup_{T \in \mathcal{T}_L} \left|\frac{n_1(k;T)}{n} - \pi(k)p(k;T)\right| \xrightarrow{p} 0~,$$
for each $k \in [K]$, where
$$n_1(k;T) := \sum_{i=1}^n{\bf 1}\{A_i(T) = 1, S_i = k\}~.$$
\end{assumption}
This assumption asserts that the assignment procedure must approach the target proportion asymptotically, and do so in a uniform sense over all stratification trees in $\mathcal{T}_L$. 

Other than Assumptions \ref{ass:indep_treat} and \ref{ass:unif_treat}, we do not require any additional assumptions about how assignment is performed. Examples \ref{ex:iid_treat} and \ref{ex:sbr_treat} illustrate two randomization schemes which satisfy these assumptions and are popular in economics. \cite{bugni2017} make similar assumptions for a \emph{fixed} stratification and show that they are satisfied for a wide range of assignment procedures, including procedures often considered in the literature on clinical trials: see for example \cite{efron1971}, \cite{wei1978}, \cite{antognini2004}, and \cite{kuznetsova2011}. In Proposition \ref{prop:SBR_unif} below, we verify that Assumptions \ref{ass:indep_treat} and \ref{ass:unif_treat} hold for stratified block randomization (see Example \ref{ex:sbr_treat}), which is a common assignment procedure in economic applications.

 \begin{example}\label{ex:iid_treat}
\emph{Simple random assignment} assigns each individual within stratum $k$ to treatment via a coin-flip with weight $\pi(k)$. Formally, for each $T$, $(A_i(T))_{i=1}^n$ is a vector with independent components such that
$$P(A_i(T) = 1|S_i = k) = \pi(k)~.$$
Simple random assignment is theoretically convenient, and features prominently in papers on adaptive randomization. However, it is considered unattractive in practice because it results in a ``noisy" assignment for a given target $\pi(k)$, and hence could be far off the target assignment for any given random draw. Moreover, this extra noise increases the finite-sample variance of ATE estimators relative to other assignment procedures which target $\pi(k)$ more directly \citep[see for example the discussion in][]{kasy2013}.
\end{example}

\begin{example}\label{ex:sbr_treat}
\emph{Stratified block randomization} (SBR) assigns a fixed proportion $\pi(k)$ of individuals within stratum $k$ to treatment $1$. Formally, let $n(k)$ be the number of units in stratum $k$, and let $n_1(k)$ be the number of units assigned to treatment 1 in stratum $k$. In SBR, $n_1(k)$ is given by
$$n_1(k) = \lfloor n(k) \pi(k) \rfloor~.$$
SBR proceeds by randomly assigning $n_1(k)$ units to treatment $1$ for each $k$, where all 
$${n(k) \choose n_1(k)}~,$$
possible assignments are equally likely. This assignment procedure has the attractive feature that it targets the proportion $\pi(k)$ as directly as possible. An early discussion of SBR can be found in \cite{zelen1974}. 
\end{example}

We conclude this section by showing that Assumptions \ref{ass:indep_treat} and \ref{ass:unif_treat} are satisfied by SBR:

\begin{proposition}\label{prop:SBR_unif}
Suppose randomization is performed through SBR (see Example \ref{ex:sbr_treat}), then Assumptions \ref{ass:indep_treat} and \ref{ass:unif_treat} are satisfied.
\end{proposition}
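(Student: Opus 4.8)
The plan is to treat the two assumptions separately, with Assumption \ref{ass:indep_treat} following almost immediately from the construction of SBR and the real work concentrated in verifying the uniform consistency required by Assumption \ref{ass:unif_treat}.

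For Assumption \ref{ass:indep_treat}, I would argue directly from the definition in Example \ref{ex:sbr_treat}. Conditional on the vector of labels $S^{(n)}$, SBR draws within each stratum $k$ a uniformly random subset of size $n_1(k) = \lfloor n(k)\pi(k)\rfloor$ of the units in that stratum to receive treatment $1$, independently across strata, using randomization external to (independent of) the data $(Y_i(0), Y_i(1), X_i)_{i=1}^n$. Since the conditional law of $A^{(n)}(T)$ given $S^{(n)}$ depends on the data only through the stratum memberships and sizes already encoded in $S^{(n)}$, and is otherwise generated by this external randomness, conditional on $S^{(n)}$ the vector $A^{(n)}(T)$ is independent of $(Y_i(0), Y_i(1), X_i)_{i=1}^n$. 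This is exactly Assumption \ref{ass:indep_treat}, and it holds for every $T \in \mathcal{T}_L$.

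For Assumption \ref{ass:unif_treat}, I would first peel off the floor operator. Writing $n(k;T) = \sum_{i=1}^n \mathbf{1}\{S_i = k\}$, under SBR we have $n_1(k;T) = \lfloor n(k;T)\pi(k)\rfloor$, so
$$\left|\frac{n_1(k;T)}{n} - \pi(k)\frac{n(k;T)}{n}\right| \le \frac{1}{n}$$
uniformly in $T$. By the triangle inequality and $\pi(k) \le 1$, it then suffices to show that
$$\sup_{T\in\mathcal{T}_L}\left|\frac{n(k;T)}{n} - p(k;T)\right| \xrightarrow{p} 0~.$$
The key observation is that this is a uniform law of large numbers over the class of cells of tree partitions. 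Indeed, $n(k;T)/n = \hat{P}_n(S^{-1}(k))$ with $\hat{P}_n(C) := \frac{1}{n}\sum_{i=1}^n \mathbf{1}\{X_i \in C\}$, while $p(k;T) = P(X \in S^{-1}(k))$, so the supremum is dominated by $\sup_{C\in\mathcal{C}}|\hat{P}_n(C) - P(X \in C)|$, where $\mathcal{C}$ is the collection of all cells $S^{-1}(k)$ arising from trees in $\mathcal{T}_L$. By Definitions \ref{def:tree_part1} and \ref{def:tree_partL}, every such cell is an intersection of at most $L$ axis-aligned half-spaces of the form $\{x_j \le \gamma\}$ or $\{x_j > \gamma\}$ (the path from the root to the corresponding leaf). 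The class of such half-spaces, ranging over $j \in [d]$ and $\gamma$, has finite VC dimension, and finite intersections of sets drawn from a VC class again form a VC class; hence $\mathcal{C}$ is a VC class. The Glivenko--Cantelli property for VC classes then yields the displayed convergence. Measurability of the supremum is not an issue, since Assumption \ref{ass:cell_grid} restricts $\mathcal{T}_L$ to a countable subset, so the relevant supremum is taken over a countable subcollection of $\mathcal{C}$, whose VC dimension is no larger.

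I expect the main obstacle to be precisely this uniform law of large numbers over the class of tree cells: the pointwise law of large numbers for a \emph{fixed} $T$ is trivial, but obtaining convergence uniformly over all trees in $\mathcal{T}_L$ requires the VC / empirical-process argument, and it is here that the combinatorial structure of depth-$L$ tree partitions (a bounded number of axis-aligned splits per cell) is essential. Everything else—the exogeneity claim and the $1/n$ bound from the floor—is routine.
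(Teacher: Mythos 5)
Your proposal is correct and follows essentially the same route as the paper: bound the floor operator by $1/n$, reduce to the uniform convergence of $n(k;T)/n$ to $p(k;T)$, and conclude via the fact that the tree cells form a VC class of rectangles, so the Glivenko--Cantelli theorem applies. The only difference is that you spell out the verification of Assumption \ref{ass:indep_treat} and the VC/measurability details more explicitly than the paper does, which is fine but not a different argument.
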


\section{Results} \label{sec:results}
In this section we formally define our proposed procedure and present results about its asymptotic behavior. Section \ref{sec:mainres} sets up the problem and presents the main results about the asymptotic normality of our estimator. Section \ref{sec:extensions} considers several extensions: a cross-validation procedure to select the depth $L$ of the stratification tree, asymptotic results for a ``pooled" estimator of the ATE, and extensions for the targeting of subgroup specific effects.


\subsection{Main Results}\label{sec:mainres}
In this section we describe our procedure and present our main formal results. Recall our discussion at the end of Section \ref{sec:discuss}: given first-wave data, our goal is to estimate a stratification tree which minimizes the asymptotic variance in a certain class of ATE estimators, which we now introduce. For a fixed $T \in \mathcal{T}_L$, let $\{(Y_i, A_i, X_i)\}_{i=1}^n$ be an experimental sample generated from a randomized experiment with randomization procedure $(A_i(T))_{i=1}^n$. Consider estimation of the following equation by OLS:
$$Y_i = \sum_k \alpha(k ;T) {\bf 1}\{S_i = k\} + \sum_k \beta(k ;T) {\bf 1}\{A_i = 1, S_i = k\} + u_i~.$$
Then our ATE estimator is given by
$$\hat{\theta}(T) := \sum_k\frac{n(k; T)}{n}\hat{\beta}(k; T)~,$$
where $n(k; T) := \sum_i {\bf 1}\{S_i = k\}$.
In words, this estimator takes the difference in means between treatments within each stratum, and then averages these over the strata.
Given appropriate regularity conditions, the results in \cite{bugni2017} establish asymptotic normality for a \emph{fixed} $T = (S, \pi) \in \mathcal{T}_L$:
$$\sqrt{n}(\hat{\theta}(T) - \theta) \xrightarrow{d} N(0,V(T))~,$$
where 
$$V(T) := \sum_{k=1}^K P(S(X) = k) \left[\left(E[Y(1) - Y(0)|S(X)=k] - E[Y(1)-Y(0)]\right)^2 + \left(\frac{\sigma_0^2(k)}{1-\pi(k)} + \frac{\sigma_1^2(k)}{\pi(k)}\right)\right]~,$$
and 
$$\sigma_a^2(k) := E[Y(a)^2|S(X)=k] - E[Y(a)|S(X)=k]^2~.$$

Again we remark that this variance takes the form of the semi-parametric efficiency bound of \cite{hahn1998} amongst all estimators that use the strata indicators as covariates. We propose a two-stage adaptive randomization procedure which asymptotically achieves the minimal variance $V(T)$ across all $T \in \mathcal{T}_L$. For the rest of the paper, we denote the sample size of the first wave by $m$ and the sample size of the second wave by $n$. We index first-wave observations by $j = 1, \ldots, m$ and second-wave observations by $i = 1, \ldots, n$. In the first stage, we use first-wave data $\{(Y_j,A_j,X_j)\}_{j=1}^m$ to estimate some ``optimal" tree $\hat{T}_m$ which is designed to minimize $V(T)$. 
In the second stage, we perform a randomized experiment using stratified randomization with $(A_i(\hat{T}_m))_{i=1}^n$ to obtain second-wave data $\{(Y_i,A_i,X_i)\}_{i=1}^n$. Finally, to analyze the results of the experiment, we consider both the ``unpooled" estimator $\hat{\theta}(\hat{T}_m)$ defined above, which uses only the second-wave data to estimate the ATE, as well as a ``pooling" estimation strategy, which use both waves of data to construct an ATE estimator (see Section \ref{sec:extensions}).

\begin{remark}\label{rem:depth}
The depth $L$ of the set of stratification trees will remain fixed but arbitrary throughout Section \ref{sec:mainres}. The primary reason for this is technical: in order to allow for a wide variety of possible assignment procedures we leverage and extend the results in \cite{bugni2017}, but these results are derived in an asymptotic framework where the number of strata is fixed. Considering extensions of their results to settings where the number of strata grows is beyond the scope of this paper. However, we provide three arguments for why considering a fixed-$L$ asymptotic framework may not be a major limitation in our setting: (1) accurately estimating stratification trees with many strata is both computationally difficult and may result in poor finite-sample performance unless the first-wave sample size is unrealistically large (we return to the question of how to choose $L$ with this consideration in mind in Section \ref{sec:extensions}), (2) in Appendix \ref{sec:supp_sim} we provide some preliminary simulation evidence which suggests that there are decreasing returns to increasing $L$, and (3) a simple compromise for practitioners who wish to stratify more finely is to perform ad-hoc stratification \emph{within} each of the strata (with accompanying assignment proportions) produced by our method. By arguing as in Theorem 6.1 in \cite{bugni2021} it can be shown that this is guaranteed to weakly decrease the asymptotic variance of our estimator, and moreover, as long as the ad-hoc stratification is not \emph{too} fine, Proposition \ref{prop:theta_robust} establishes that our proposed inference procedure will still be valid. If a practitioner wishes to stratify \emph{very} finely, then one possibility is to perform the optimal blocking procedure derived in \cite{bai2019} \emph{within} each of the strata produced by our method. However in this case our proposed inference procedure may no longer be appropriate. See \cite{bai2019} for details.
\end{remark}

We now present the main theoretical properties of our method. First, we establish conditions under which the estimator $\hat{\theta}(\hat{T}_m)$ constructed using the second wave of data is asymptotically normal, with minimal variance in the class of estimators defined above. Additionally, we provide a consistent estimator of the asymptotic variance of our estimator, and establish a form of ``robustness" of our estimator to potential inconsistency of $\hat{T}_m$.  

%
%

From now on, to be concise, we will call data from the first-wave the \emph{pilot} data, and data from the second-wave the \emph{main} data. As in the paragraph above, denote the pilot data as $\{W_j\}_{j=1}^m := \{(Y_j,X_j,A_j)\}_{j=1}^m$. Given this pilot sample, we require the following high-level consistency property for our estimator $\hat{T}_m$:

\begin{assumption}\label{ass:tree_estimate}
The estimator $\hat{T}_m$ is a $\sigma\{(W_j)_{j=1}^m\}/\mathcal{B}(\mathcal{T}_L)$ measurable function of the pilot data\footnotemark[2]  and satisfies
$$|V(\hat{T}_m) - V^*| \xrightarrow{p} 0~,$$
where 
$$V^* := \inf_{T \in \mathcal{T}_L} V(T)~,$$
as $m \rightarrow \infty$.
\end{assumption}

\footnotetext[2]{$\mathcal{B}(\mathcal{T}_L)$ is the Borel-sigma algebra on $\mathcal{T}_L$ generated by an appropriate topology and $\sigma\{(W_i)_{i=1}^m\}$ is the sigma-algebra generated by the pilot data. See Appendix \ref{sec:rho} for details.}

Note that Assumption \ref{ass:tree_estimate} does not require that $V^*$ is \emph{uniquely} minimized at some $T \in \mathcal{T}_L$. Moreover, Assumption \ref{ass:tree_estimate} imposes no explicit restrictions on how $\hat{T}_m$ is constructed, or even on the nature of the pilot data itself. In Appendix \ref{sec:rho}, Lemma \ref{lem:S_const} we show that Assumption \ref{ass:tree_estimate} is sufficient to guarantee that the sequence of trees $\hat{T}_m$ ``approaches" the set of minimizers of $V(\cdot)$, but this does not guarantee that this sequence converges to any \emph{fixed} tree within that set. Similar results have been derived in a maximum-likelihood context in \cite{redner1981}.

In Proposition \ref{prop:emp_const} below, we establish sufficient conditions on the pilot data under which an appropriate $\hat{T}_m$ can be constructed by solving the following empirical minimization problem:
$$\hat{T}_m^{EM} \in \arg\min_{T \in \mathcal{T}_L} \widetilde{V}_m(T)~,$$
where
\[\widetilde{V}_m(T) := \sum_{k=1}^K\frac{m(k;T)}{m}\left[\left(\hat{E}[Y(1) - Y(0)|S(X) = k] - \hat{E}[Y(1) - Y(0)]\right)^2 + \left(\frac{\hat{\sigma}^2_{0,S}(k)}{1 - \pi(k)} + \frac{\hat{\sigma}^2_{1,S}(k)}{\pi(k)}\right)\right]\]
with
\begin{align*}
\hat{\sigma}^2_{a,S}(k) &:= \hat{E}[Y(a)^2|S(X)=k] - \hat{E}[Y(a)|S(X)=k]^2~, \\
\hat{E}[Y(a)^t|S(X) = k] &:= \frac{1}{m_a(k;T)}\sum_{j=1}^mY_j^t{\bf 1}\{A_j = a\}{\bf 1}\{S(X_j)=k\} \text{ for $t \in \{1, 2\}$}~,\\
\hat{E}[Y(a)] &:= \frac{1}{m_a}\sum_{j=1}^mY_j{\bf 1}\{A_j = a\}~.
\end{align*}

In general, computing $\hat{T}_m^{EM}$ involves solving a nonlinear discrete optimization problem. Although this problem does not have a unique solution, Proposition \ref{prop:emp_const} establishes conditions such that \emph{any} (measurable) sequence of minimizers which results from solving this minimization problem will satisfy Assumption \ref{ass:tree_estimate}, due to the fact that the empirical objective $\widetilde{V}_m(\cdot)$ approximates $V(\cdot)$ uniformly well as $m \rightarrow \infty$.

\begin{remark}\label{rem:compdet}
In Appendix \ref{sec:compdet} we describe an evolutionary algorithm which effectively performs a stochastic search for the global minimizer of the empirical minimization problem, and provide rough guidelines for implementation. We make a few comments here about the effectiveness of this algorithm in practice: first, the algorithm finds the global minimum in simple verified examples. Second, the algorithm always returns the \emph{same} tree in repeated runs of the algorithm (up to negligible perturbations), with appropriate tuning parameters. Third, our current implementation of the algorithm (implemented in Julia 1.6) terminates fairly quickly for moderate depths and sample sizes: typically in less than one hour on a personal computer.
\end{remark}

In Proposition \ref{prop:emp_const} we verify Assumption \ref{ass:tree_estimate} for $\hat{T}_m^{EM}$ when the pilot data comes from a stratified RCT with equal assignment proportions across strata:

\begin{proposition}\label{prop:emp_const}
Suppose the pilot data come from a RCT performed using stratified randomization with stratification function $\zeta: \mathcal{X} \rightarrow \{1, 2, ..., Z\}$ (where $Z < \infty$ is a fixed number) such that $P(\zeta(X) = z) > 0$ for all $z = 1, \ldots, Z$. Let $m(z; \zeta) := \sum_{j = 1}^m {\bf 1}\{\zeta(X) = z\}$, $m_a(z; \zeta) := \sum_{j = 1}^m{\bf 1}\{A_j = a, \zeta(X) = z\}$. Suppose that the randomization procedure for the pilot data satisfies:
\[\left[\{(Y_j(0), Y_j(1), X_j)\}_{j=1}^m \indep (A_j)_{j =1}^m\right] \Bigg| (\zeta_j)_{j =1}^m ~,\]
\[\frac{m_1(z; \zeta)}{m(z;\zeta)} \xrightarrow{p} \pi \hspace{1mm} \text{for some $\pi \in (0, 1)$, for all $z = 1, \ldots Z$}~.\]
Under Assumptions \ref{ass:bounded}, \ref{ass:prop/cell_size}, and \ref{ass:cell_grid}, Assumption \ref{ass:tree_estimate} is satisfied for $\hat{T}_m^{EM}$.
\end{proposition}

We now state the first main result of the paper: an optimality result for the estimator $\hat{\theta}(\hat{T}_m)$. In Remark \ref{rem:tech_challenge} we comment on some of the technical challenges that arise in the proof of the result.

\begin{theorem}\label{thm:main}
Given Assumptions \ref{ass:bounded}, \ref{ass:prop/cell_size}, \ref{ass:cell_grid}, \ref{ass:indep_treat}, \ref{ass:unif_treat}, and \ref{ass:tree_estimate}, we have that
$$\sqrt{n}(\hat{\theta}(\hat{T}_m) - \theta) \xrightarrow{d} N(0,V^*)~,$$
as $m, n \rightarrow \infty$.
\end{theorem}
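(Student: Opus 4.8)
The plan is to prove the result by establishing convergence of characteristic functions and then invoking L\'evy's continuity theorem. Write $g_n(t;T) := E[\exp(it\sqrt{n}(\hat\theta(T)-\theta))]$ for the characteristic function of the fixed-tree main-wave estimator. Since the two waves are drawn independently and $\hat T_m$ is measurable with respect to the pilot data (Assumption \ref{ass:tree_estimate}), I would condition on the pilot sample: given $\hat T_m = T$, the randomization procedure (a family indexed by $T$) runs the main wave with tree $T$, so the main-wave quantity $\hat\theta(\hat T)$ is conditionally distributed exactly as $\hat\theta(T)$. This yields
$$E\left[\exp\left(it\sqrt{n}\,(\hat\theta(\hat T)-\theta)\right)\right] = E\left[g_n(t;\hat T_m)\right]~,$$
where Assumption \ref{ass:cell_grid} supplies the measurability needed to justify the step via Fubini's theorem. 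It then suffices to show $E[g_n(t;\hat T_m)] \to \exp(-t^2 V^*/2)$ for each fixed $t$.

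The core of the argument is a \emph{uniform} central limit theorem: I would upgrade the fixed-tree convergence $g_n(t;T)\to \exp(-t^2 V(T)/2)$ guaranteed by \cite{bugni2017} to hold uniformly over the class,
$$\sup_{T\in\mathcal{T}_L}\left|g_n(t;T) - \exp\left(-\tfrac{1}{2}t^2 V(T)\right)\right| \longrightarrow 0 \quad\text{as } n\to\infty~,$$
a deterministic statement since the characteristic functions are themselves deterministic. To establish it I would derive an asymptotically linear representation $\sqrt{n}(\hat\theta(T)-\theta) = \tfrac{1}{\sqrt n}\sum_{i=1}^n \psi_i(T) + R_n(T)$ with $\sup_{T}|R_n(T)| = o_p(1)$; the uniformity of the remainder is controlled by the fact that depth-$L$ tree partitions form a VC (hence Donsker) class, which delivers stochastic equicontinuity of the ``between-strata'' regressogram process $\tfrac{1}{\sqrt n}\sum_i(E[Y(1)-Y(0)\mid S(X_i)] - \theta)$ and of the within-strata sums. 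Passing from the linear part to the Gaussian limit uniformly then follows from the uniform moment bounds implied by Assumption \ref{ass:bounded} together with the cell-size and propensity bounds of Assumption \ref{ass:prop/cell_size}.

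The main obstacle is controlling the within-strata term \emph{uniformly} over $T$ in the presence of the dependence induced by restricted randomization such as SBR. For a fixed tree this is precisely the difficulty addressed in \cite{bugni2017}, whose coupling arguments relate the dependent within-stratum assignments to an i.i.d.\ structure; the challenge here is to make that coupling hold uniformly as the strata vary with $T$. This is exactly where Assumption \ref{ass:quant_holder} enters: pointwise equicontinuity of the conditional quantile functions of $Y(a)\mid S(X)=k$ ensures that the quantile-transform coupling behaves uniformly across trees, so the within-strata contribution to the variance is approximated uniformly, while Assumption \ref{ass:unif_treat} supplies uniform control of the realized assignment proportions.

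Finally I would combine the uniform CLT with Assumption \ref{ass:tree_estimate} and bounded convergence. On the probability-one set where $V(\hat T_m)\to V^*$, the triangle inequality gives
$$\left|g_n(t;\hat T_m) - e^{-t^2 V^*/2}\right| \le \sup_{T\in\mathcal{T}_L}\left|g_n(t;T) - e^{-t^2 V(T)/2}\right| + \left|e^{-t^2 V(\hat T_m)/2} - e^{-t^2 V^*/2}\right|~,$$
where the first term vanishes as $n\to\infty$ by the uniform CLT (independently of the random $\hat T_m$) and the second vanishes as $m\to\infty$ by Assumption \ref{ass:tree_estimate} and continuity. Hence $g_n(t;\hat T_m)\to e^{-t^2 V^*/2}$ almost surely as $m,n\to\infty$; since $|g_n(t;\hat T_m)|\le 1$, bounded convergence yields $E[g_n(t;\hat T_m)]\to e^{-t^2 V^*/2}$, the characteristic function of $N(0,V^*)$. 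L\'evy's continuity theorem then delivers $\sqrt{n}(\hat\theta(\hat T)-\theta)\xrightarrow{d} N(0,V^*)$.
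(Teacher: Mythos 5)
Your proposal is correct in substance and follows the same overall architecture as the paper's proof: condition on the pilot sample, use measurability of $\hat{T}_m$ and independence of the two waves to reduce to a fixed-tree statement, apply Fubini and bounded convergence, and close with $V(\hat{T}_m)\to V^*$. The one genuine difference is where the uniformity lives. You prove a uniform CLT over \emph{all} of $\mathcal{T}_L$, i.e.\ $\sup_{T\in\mathcal{T}_L}|g_n(t;T)-e^{-t^2V(T)/2}|\to 0$, and then finish with a triangle inequality; the paper instead proves convergence only along sequences of trees converging (in its metric $\rho$) to the compact set of minimizers $\mathcal{T}^*$ (Lemmas \ref{lem:main_1} and \ref{lem:main_2}), using a subsequence argument, which requires first converting $V(\hat{T}_m)\to V^*$ into $\rho$-convergence to $\mathcal{T}^*$ via compactness and continuity of $V$ (Lemma \ref{lem:S_const}). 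Your route is arguably cleaner at the final step --- it never needs $\hat{T}_m$ to approach $\mathcal{T}^*$ in any metric, only $V(\hat{T}_m)\to V^*$ as a number --- and the uniform statement you posit is essentially what the paper proves separately as Proposition \ref{prop:theta_robust}. The cost is that the uniform CLT itself still requires the same machinery (a compact metric on $\mathcal{T}_L$, continuity of $V$, the VC property of tree cells, Donskerness of the class of conditional quantile functions, and Assumption \ref{ass:quant_holder}), so nothing is saved at the technical core. One point to tighten: under restricted randomization such as SBR there is no asymptotically linear representation $\sqrt{n}(\hat\theta(T)-\theta)=n^{-1/2}\sum_i\psi_i(T)+R_n(T)$ with $\sup_T|R_n(T)|=o_p(1)$ on the original probability space; the within-strata terms are only controlled through an equality \emph{in distribution} with a sequential empirical process built from auxiliary uniforms (the quantile-transform coupling you allude to), so the remainder bound should be stated as holding for the coupled process rather than for the estimator itself. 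With that correction your argument goes through.
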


\begin{remark}\label{rem:tech_challenge}
Here we comment on some of the technical challenges that arise in proving Theorem \ref{thm:main}. First, we develop a theory of convergence for stratification trees by defining a novel metric on $\mathcal{S}_L$ based on the Frechet-Nikodym metric, and establish basic properties about the resulting metric space. In particular, we use this construction to show that a set of minimizers of $V(T)$ exists given our assumptions, and that $\hat{T}_m$ converges to this set of minimizers in an appropriate sense. For these results we frequently exploit the fact that for a fixed index $k \in [K]$, the class of sets $\{S^{(-1)}(k): S \in \mathcal{S}_L\}$ consists of rectangles, and hence forms a VC class.

Next, because Assumptions \ref{ass:indep_treat} and \ref{ass:unif_treat} impose so little on the dependence structure of the randomization procedure, it is not clear how to apply standard central limit theorems. When the stratification is fixed, \cite{bugni2017} establish asymptotic normality by essentially re-writing the sampling distribution of the estimator as a partial-sum process. In our setting the stratification is \emph{random}, and so to prove our result we generalize their construction in a way that allows us to re-write the sampling distribution of the estimator as a \emph{sequential empirical process} \citep[see][Section 2.12.1 for a definition]{van1996}. We then exploit the asymptotic equicontinuity of this process to establish asymptotic normality (see Lemma \ref{lem:main_2}). 
\end{remark}

Next we construct a consistent estimator for the variance $V^*$. Let
$$\widehat{V}_H(T) := \sum_{k=1}^K \frac{n(k;T)}{n}\left(\hat{\beta}(k;T) - \hat{\theta}(T)\right)^2~,$$
and let 
$$\widehat{V}_Y(T) := \hat{R}'(T)\hat{V}_{hc}(T)\hat{R}(T)~,$$
where $\hat{V}_{hc}(T)$ is the robust variance estimator for the parameters in the saturated regression, and $\hat{R}(T)$ is following vector with $K$ ``leading" zeros:
$$\left(\hat{R}(T)\right)' := \left[0, 0, 0, \ldots, 0, \frac{n(1; T)}{n}, \ldots, \frac{n(K; T)}{n}\right]~.$$

We obtain the following consistency result:
\begin{theorem}\label{thm:var_const}
Given Assumptions \ref{ass:bounded}, \ref{ass:prop/cell_size}, \ref{ass:cell_grid}, \ref{ass:indep_treat}, \ref{ass:unif_treat}, and \ref{ass:tree_estimate}, then
$$\widehat{V}(\hat{T}_m) \xrightarrow{p} V^*~,$$
where
$$\widehat{V}(T) := \widehat{V}_H(T) + \widehat{V}_Y(T)~,$$
as $m, n \rightarrow \infty$.
\end{theorem}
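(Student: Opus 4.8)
The plan is to reduce the claim to a uniform-in-$T$ consistency statement for the variance estimator and then combine it with Assumption \ref{ass:tree_estimate}. Write $V(T) = V_H(T) + V_Y(T)$, where
$$V_H(T) = \sum_k P(S(X)=k)\left(E[Y(1)-Y(0)\mid S(X)=k]-\theta\right)^2$$
is the between-strata heterogeneity term and
$$V_Y(T) = \sum_k P(S(X)=k)\left(\frac{\sigma_0^2(k)}{1-\pi(k)}+\frac{\sigma_1^2(k)}{\pi(k)}\right)$$
is the within-strata conditional-variance term. Since Assumption \ref{ass:tree_estimate} gives $V(\hat{T}) \to V^*$ almost surely, it suffices to show $\widehat{V}_H(\hat{T}) - V_H(\hat{T}) \xrightarrow{p} 0$ and $\widehat{V}_Y(\hat{T}) - V_Y(\hat{T}) \xrightarrow{p} 0$. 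Because $\hat{T}$ is measurable with respect to the pilot data and the main data are drawn independently of it, I would prove the stronger uniform statements $\sup_{T\in\mathcal{T}_L}|\widehat{V}_H(T)-V_H(T)| \xrightarrow{p} 0$ and $\sup_{T\in\mathcal{T}_L}|\widehat{V}_Y(T)-V_Y(T)| \xrightarrow{p} 0$; combined with $V(\hat{T})\to V^*$ via the triangle inequality $|\widehat{V}(\hat{T})-V^*| \le \sup_{T}|\widehat{V}(T)-V(T)| + |V(\hat{T})-V^*|$, this yields the theorem.

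First I would assemble the uniform consistency of the ``building-block'' sample moments indexed by $T=(S,\pi)$: the stratum shares $n(k)/n$, the treated and control shares, and the within-stratum sums $\tfrac1n\sum_i Y_i^r A_i\mathbf{1}\{S_i=k\}$ for $r\in\{1,2\}$ (and their control analogues). Uniformity over the stratum-indicator component follows from the Glivenko--Cantelli property of the class $\{S^{-1}(k):S\in\mathcal{S}_L\}$, which is VC because its members are rectangles, as exploited in Remark \ref{rem:tech_challenge}; the dependence on the assignment labels $A_i(T)$, whose joint law is only weakly restricted, is controlled using Assumption \ref{ass:indep_treat} together with the uniform proportion control of Assumption \ref{ass:unif_treat}. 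These are exactly the uniform bounds already developed in the proof of Theorem \ref{thm:main}, so I would invoke them rather than rederive them. Boundedness of $Y(a)$ (Assumption \ref{ass:bounded}) supplies the integrability needed for the second-moment terms, and Assumptions \ref{ass:prop/cell_size} and \ref{ass:unif_treat} keep $n_1(k)/n$ and $(n(k)-n_1(k))/n$ bounded away from zero on nonempty cells, so that all the ratios below are continuous functions of the building blocks.

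With these in hand, the two pieces follow by the continuous mapping theorem. The heterogeneity estimator $\widehat{V}_H(T)=\sum_k \tfrac{n(k)}{n}(\hat{\beta}(k)-\hat{\theta})^2$ is a fixed polynomial in the building blocks with denominators bounded away from zero; substituting the uniform limits gives $\hat\beta(k)\to E[Y(1)-Y(0)\mid S(X)=k]$ and $\hat\theta\to\theta$ uniformly, whence $\widehat{V}_H\to V_H$ uniformly. For the within-strata term I would use the block structure of the robust variance estimator $\hat{V}_{hc}$: in the saturated regression the coefficient $\hat\beta(k)$ is the within-stratum difference in means, and its $n$-scaled robust variance entry reproduces $\frac{\hat\sigma_1^2(k)}{n_1(k)/n}+\frac{\hat\sigma_0^2(k)}{(n(k)-n_1(k))/n}$. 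The quadratic form $R'\hat{V}_{hc}R$ therefore collapses to $\sum_k (n(k)/n)^2$ times these entries, which under the uniform substitutions converges to $V_Y(T)$. Here the fact that the regression is \emph{fully saturated} is essential: it is precisely the saturated specification for which the Eicker--White estimator remains consistent under restricted within-stratum assignment (as in \citealt{bugni2017}), so that no conservative bias survives in the limit.

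The main obstacle is the uniform consistency of the within-stratum moments involving the assignment labels $A_i(T)$: because Assumptions \ref{ass:indep_treat} and \ref{ass:unif_treat} permit strong within-stratum dependence (for instance the without-replacement structure of SBR), the sums $\tfrac1n\sum_i Y_i^r A_i\mathbf{1}\{S_i=k\}$ are not i.i.d. averages and cannot be controlled by off-the-shelf empirical-process bounds. I would overcome this exactly as in Theorem \ref{thm:main}, conditioning on $S^{(n)}$ and recasting the relevant partial sums through the sequential-empirical-process construction, then transferring the uniform control to the moment estimators. A secondary point to verify is that the robust variance estimator indeed targets the i.i.d.-form within-stratum variance under restricted randomization; this is inherited from the consistency of robust standard errors for the saturated regression in the fixed-$T$ analysis, and Assumption \ref{ass:quant_holder}, carried over from Theorem \ref{thm:main}, supplies the regularity needed as $T$ varies when $Y(a)$ is continuous.
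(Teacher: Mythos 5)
Your proposal is correct in substance, but it is organized around a different reduction than the paper's. The paper's proof is a two-line appeal to the machinery of Theorem \ref{thm:main}: it couples $\hat{T}_m$ to a measurable sequence $\bar{T}_m \in \mathcal{T}^*$ (Lemma \ref{lem:S_const}) and, adapting Theorem 3.3 of \cite{bugni2017} together with the sequential-empirical-process/equicontinuity arguments of Lemmas \ref{lem:main_1}--\ref{lem:main_2}, establishes the distributional identity $\widehat{V}(\hat{T}) \,{\buildrel d \over =}\, V(\bar{T}) + o_P(1)$, concluding since $V(\bar{T}) = V^*$. You instead decompose $\widehat{V}$ into the heterogeneity and within-stratum pieces, prove a uniform-in-$T$ law of large numbers for each, and then combine with $|V(\hat{T}) - V^*| \to 0$ from Assumption \ref{ass:tree_estimate} by the triangle inequality. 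Your route has the advantage of never needing the coupling to $\mathcal{T}^*$ for this particular theorem, and your explicit account of how $R'\hat{V}_{hc}R$ collapses in the saturated regression is exactly the content of the adaptation of Bugni et al.\ that the paper leaves implicit. The one imprecision worth correcting: the main-wave data are \emph{not} independent of $\hat{T}$ --- the assignments $A^{(n)}(\hat{T})$ depend on it --- so the quantity $\sup_{T\in\mathcal{T}_L}|\widehat{V}(T)-V(T)|$ is not the supremum of an empirical process under a fixed data-generating process; for each $T$ the law of the observed $(Y_i, A_i)$ changes with the randomization scheme. The statement must be read as consistency along the (conditionally deterministic) sequence of trees $\hat{T}_m$, obtained by conditioning on the pilot data and applying Fubini as in the proof of Theorem \ref{thm:main}. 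You effectively concede this in your final paragraph by deferring the treatment of the dependent within-stratum assignments to the sequential-empirical-process construction, at which point the hard technical content of your argument coincides with the paper's; so this is a presentational wrinkle rather than a gap, but the phrase ``the main data are drawn independently of it'' should not survive into a final write-up.
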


Although Theorem \ref{thm:main} guarantees that $\hat{\theta}(\hat{T}_m)$ is asymptotically normal as $m, n \rightarrow \infty$, we may be concerned about the validity of this approximation when conducting inference in settings where the pilot sample size is not large. Accordingly, we finish this section by presenting a result about the asymptotic validity of hypothesis tests constructed from  $\hat{\theta}(\hat{T}_m)$ and $\widehat{V}(\hat{T}_m)$ when $\hat{T}_m$ is not necessarily itself consistent in the sense of Assumption \ref{ass:tree_estimate}. Consider the problem of testing
\begin{equation}\label{eq:null}
H_0 : \theta = \theta_0 \hspace{2mm} \text{versus} \hspace{2mm} H_1:\theta \ne \theta_0~,
\end{equation}
at level $\alpha \in (0,1)$, using a standard test given by
\[\phi_n(T) := {\bf 1}\{\left|W(T)\right| > z_{1 - \frac{\alpha}{2}}\}~,\]
where 
\[W(T) := \frac{\sqrt{n}(\hat{\theta}(T) - \theta_0)}{\sqrt{\hat{V}(T)}}~,\]
and $z_{1-\frac{\alpha}{2}}$ is the $1 - \frac{\alpha}{2}$ quantile of a standard normal random variable.

\begin{proposition}\label{prop:theta_robust}
Let $\widetilde{T}_m$ be any sequence of trees constructed from the pilot data. Suppose that 
\[V(T) > 0 \hspace{2mm} \text{for all} \hspace{2mm} T \in \mathcal{T}_L~.\] Given Assumptions \ref{ass:bounded}, \ref{ass:prop/cell_size}, \ref{ass:cell_grid}, \ref{ass:indep_treat}, and \ref{ass:unif_treat}, under the null hypothesis given by (\ref{eq:null}),
\[\lim_{m,n \rightarrow \infty} E[\phi_n(\widetilde{T}_m)] = \alpha~.\]
\end{proposition}

Note that Proposition \ref{prop:theta_robust} also accommodates the case where the pilot sample size is fixed at some number $m'$ by simply defining the sequence $\widetilde{T}_m$ to be equal to $\widetilde{T}_{m'}$ for $m \ge m'$. We conclude from Proposition \ref{prop:theta_robust} that, regardless of whether or not $\hat{T}_m$ is consistent for an optimal tree, we can use $W(\hat{T}_m)$ and the critical values from a standard normal distribution to conduct asymptotically valid inference. Indeed, we will see in the simulations of Section \ref{sec:simulations} that even in situations where $\hat{T}_m$ is a very poor estimate of an optimal tree, the coverage of a confidence interval or size of a test constructed using $\phi_n(\hat{T}_m)$ are close to the nominal level.


\subsection{Extensions}\label{sec:extensions}

In this section we present some extensions to the main results. First we present a version of $\hat{T}_m$ whose depth is selected by cross-validation. Second, we describe a method to combine estimates of the ATE from both waves of data, and establish properties of the resulting ``pooled" estimator. Finally, we explain how to accommodate the targeting of subgroup-specific effects. 

\subsubsection{Cross-validation to select $L$}
In this subsection we present a method to help select the depth $L$ in practice. To choose $L$, we revisit the first-stage estimation problem via the lens of the general model-selection paradigm described in, for example, \cite{arlot2010}. The tradeoff which arises when choosing between various choices of $L$ in the first-stage estimation problem can be framed as a classical tradeoff between \emph{approximation} error and  \emph{estimation} error, as we now describe. For each $L$, let 
\[V^*_L := \min_{T \in \mathcal{T}_L}V(T)~.\]
On one hand, using a larger $L$ allows us to attain a (weakly) lower value for the asymptotic variance $V^*_L$. On the other hand, using a larger $L$ makes the set of trees $\mathcal{T}_L$ more complex, and thus makes estimation of the optimal tree more difficult in a finite sample, since we run the risk of ``overfitting". More formally, let $\bar{L}$ be some upper bound on the depth of trees to be considered (in practice, this would correspond to some computational limit, or potentially to some exogenous logistical constraint), and let $[\bar{L}] = \{0, 1, 2, \ldots, \bar{L}\}$, where we understand $L = 0$ to mean no stratification. Let $\hat{T}_m^{(L)}$ be a stratification tree of depth $L$ estimated from the pilot data. We can then decompose the excess variance obtained by using $\hat{T}_m^{(L)}$ relative to $V^*_{\bar{L}}$ as
\[V(\hat{T}_m^{(L)}) - V^*_{\bar{L}} = \left(V^*_{L} - V^*_{\bar{L}}\right) +  \left(V(\hat{T}_m^{(L)})- V^*_{L}\right)~.\]
The first term on the right-hand-side of this expression can be understood as the approximation error that results from optimizing in the class of trees $\mathcal{T}_L$ instead of the class of trees $\mathcal{T}_{\bar{L}}$. The second term on the right-hand-side of the expression can be understood as the estimation error that results from using the estimated tree $\hat{T}_m^{(L)}$ instead of the optimal tree for the class $\mathcal{T}_L$. Using this decomposition we see immediately that by Assumption \ref{ass:tree_estimate} we are guaranteed that setting $L = \bar{L}$ achieves the smallest possible excess variance asymptotically, however, our goal here is to attempt to balance these two tradeoffs in \emph{finite samples}. In particular, the ideal ``oracle depth" $L^*_m$ for a pilot sample of size $m$ is given by 
\[L^*_m := \arg\min_{L \in [\bar{L}]} \left(V(\hat{T}_m^{(L)}) - V^*_{\bar{L}}\right)~,\]
which exactly balances the tradeoff between the estimation and approximation errors. Of course, this is infeasible, and so instead we propose selecting a depth $\hat{L}_m$  via a standard ``$B$-fold" cross validation procedure which is designed to balance the estimation and approximation tradeoffs described above.

For simplicity we describe $2$-fold cross validation, but we comment on other choices of $B$ in Remark \ref{rem:vfold} below. The cross-validation procedure proceeds as follows. First, split the pilot sample randomly into two halves and denote these by $\mathcal{D}_1$ and $\mathcal{D}_2$. For each $L$, let $\hat{T}^{(L,1)}_m$ and $\hat{T}^{(L,2)}_m$ be stratification trees of depth $L$ estimated on $\mathcal{D}_1$ and $\mathcal{D}_2$, respectively. Let $\widetilde{V}^{(1)}_m(\cdot)$ and $\widetilde{V}^{(2)}_m(\cdot)$ be the empirical variances computed on $\mathcal{D}_1$ and $\mathcal{D}_2$ (where, in the event that a cell in the tree partition is empty, we assign a value of infinity to the empirical variance). Then we define the following cross-validation criterion:
\[\widetilde{V}^{CV}_L := \frac{1}{2}\left(\widetilde{V}^{(1)}_m\left(\hat{T}^{(L,2)}_m \right) + \widetilde{V}^{(2)}_m\left(\hat{T}^{(L,1)}_m\right)\right)~.\]
In words, for each $L$, we estimate a stratification tree on each half of the sample, compute the empirical variance of these estimates by using the \emph{other} half of the sample, and then average the results. Intuitively, as we move from small values of $L$ to large values of $L$, we would expect that this cross-validation criterion should generally decrease with $L$, and then eventually increase, in accordance with the tradeoff between estimation error and approximation error.   Given $\widetilde{V}^{CV}_L$, we select our depth $\hat{L}_m$ as follows: $$\hat{L}_m := \arg\min_{L\in [\bar{L}]} \widetilde{V}^{CV}_L~,$$
where in the event of a tie we choose the smallest such $L$. Accordingly, our cross-validated stratification tree is defined as
\[\hat{T}_m^{CV} := \hat{T}^{(\hat{L}_m)}_m~,\]
i.e. $\hat{T}_m^{CV}$ is chosen to be the stratification tree whose depth minimizes the cross-validation criterion $\widetilde{V}^{CV}_L$. 


We assess the finite-sample performance of $\hat{T}_m^{CV}$ via simulation in Section \ref{sec:simulations}, and note there that trees constructed using the cross-validation procedure can outperform trees constructed using $\bar{L}$ when the pilot is small, and perform similarly when the pilot is large (we prove a formal result about the large pilot behavior of the cross-validation procedure in Appendix \ref{sec:CV_large}). As a result we recommend that researchers fix some maximum allowable depth $\bar{L}$ (again, this will frequently correspond to a computational limit), and then use cross-validation to select an appropriate depth between $0$ and $\bar{L}$. In Section \ref{sec:application}, we use this cross-validation procedure to select the depth of the stratification trees we estimate for the experiment undertaken in \cite{karlan2017}.

\begin{remark}\label{rem:vfold}
Our description of cross-validation above defines $B$-fold cross-validation for $B = 2$. It is straightforward to extend this to general $B$, where the dataset is split into $B$ folds. In many statistical applications $5$ or $10$ folds has become the practical standard. However, we illustrate in Appendix \ref{sec:supp_sim} that, when the first-wave sample size is small, using more than $2$ folds can potentially \emph{reduce} performance. With that in mind, increasing the number of folds can be beneficial when the first-wave sample size is not too small; in particular, using more folds results in more stability, in the sense that $\hat{L}$ fluctuates less across different splits of the data. \end{remark}

\subsubsection{A pooling estimator of the ATE}\label{sec:pool}
In this subsection we study an estimator which allows us to ``pool" data from both datasets when estimating the ATE. Pooling may be particularly useful in formal two-stage randomized experiments where the first wave sample-size is large relative to the total sample-size (for example, in the application we consider in Section \ref{sec:application}). 

Let $\hat{\theta}_1$ be an estimator of the ATE constructed from the pilot data, and let $\hat{\theta}(\hat{T}_m)$ be the estimator defined in Section \ref{sec:mainres}. 
We impose the following high level assumption on the asymptotic behavior of $\hat{\theta}_1$:

\begin{assumption}\label{ass:pilot_normal}
$\hat{\theta}_1$ is an asymptotically normal estimator for the ATE:
\[\sqrt{m}(\hat{\theta}_1 - \theta) \xrightarrow{d} N(0, V_1)~,\]
as $m \rightarrow \infty$.
\end{assumption}

Assumption \ref{ass:pilot_normal} holds for a variety of standard estimators under various assignment schemes: see for example the results in \cite{bugni2015}, \cite{bugni2017}, and \cite{bai2018}. We also impose the following assumption on the relative rates of growth of the pilot and main sample. 

\begin{assumption}\label{ass:m/N}
Let $m$ be the pilot data sample size, $n$ the main data sample size, and $N = m + n$. We assume that
\[\frac{m}{N} \rightarrow \lambda~,\]
for some $\lambda \in [0, 1]$.
\end{assumption}

We propose the following sample-size weighted estimator:
\[\hat{\theta}_{MW} := \widehat{\lambda}\hat{\theta}_1 + (1 - \widehat{\lambda})\hat{\theta}(\hat{T}_m)~,\]
where $\hat{\lambda} := m/N$. Theorem \ref{thm:pool_est} derives the limiting distribution of this estimator:

\begin{theorem}\label{thm:pool_est}
Given Assumptions \ref{ass:bounded}, \ref{ass:prop/cell_size}, \ref{ass:cell_grid}, \ref{ass:indep_treat}, \ref{ass:unif_treat}, \ref{ass:tree_estimate}, \ref{ass:pilot_normal}, and \ref{ass:m/N}, we have that
\[\sqrt{N}(\hat{\theta}_{MW} - \theta) \xrightarrow{d} N(0,  V^*_\lambda)~,\]
where $N := n+m$ and $V^*_\lambda := \lambda V_1 + (1 - \lambda)V^*$, as $m, n \rightarrow \infty$.
\end{theorem}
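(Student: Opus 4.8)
The key structural fact is that $\hat\theta_{SW}$ is a convex combination of two estimators computed on \emph{disjoint} samples. The plan is to exploit this independence to handle each term separately and then combine them via a joint central limit argument. First I would write the scaled error as
\[
\sqrt{N}(\hat\theta_{SW} - \theta) = \sqrt{\tfrac{m}{N}}\cdot\sqrt{m}(\hat\theta_1 - \theta)\cdot\sqrt{\tfrac{N}{m}}\cdot\tfrac{m}{N} + \sqrt{\tfrac{n}{N}}\cdot\sqrt{n}(\hat\theta(\hat T) - \theta)\cdot\sqrt{\tfrac{N}{n}}\cdot\tfrac{n}{N},
\]
which after simplification becomes
\[
\sqrt{N}(\hat\theta_{SW}-\theta) = \sqrt{\tfrac{m}{N}}\,\sqrt{m}(\hat\theta_1-\theta) + \sqrt{\tfrac{n}{N}}\,\sqrt{n}(\hat\theta(\hat T)-\theta).
\]
Under Assumption \ref{ass:m/N} the deterministic weights satisfy $m/N \to \lambda$ and $n/N \to 1-\lambda$, so the scaling coefficients converge to $\sqrt\lambda$ and $\sqrt{1-\lambda}$ respectively.

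The two pieces converge in distribution individually: by Assumption \ref{ass:pilot_normal}, $\sqrt{m}(\hat\theta_1-\theta) \xrightarrow{d} N(0,V_1)$, and by Theorem \ref{thm:main}, $\sqrt{n}(\hat\theta(\hat T)-\theta) \xrightarrow{d} N(0,V^*)$. The crucial step is establishing \emph{joint} convergence so that I may apply the continuous mapping theorem (or Slutsky's theorem together with a joint CLT). Here the essential observation is that the pilot data $\{W_j\}_{j=1}^m$ and the main data $\{W_i\}_{i=1}^n$ are drawn from independent samples; the only channel of dependence is that $\hat T$ is estimated from the pilot data and then used to randomize in the main stage. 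Conditional on the pilot data (equivalently, conditional on $\hat T$), the main-stage error $\sqrt{n}(\hat\theta(\hat T)-\theta)$ depends only on the fresh main-stage draws and the randomization $A^{(n)}(\hat T)$, which by Assumption \ref{ass:indep_treat} is exogenous. Thus I would argue that, conditionally on the pilot sample, the two terms are independent, and their limiting laws are $N(0,V_1)$ and $N(0,V^*)$ respectively.

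The main obstacle will be making the conditional-independence argument rigorous in a way that yields unconditional joint convergence to a \emph{fixed} bivariate normal limit, given that $\hat T$ is random and need not converge to any fixed tree (as emphasized after Assumption \ref{ass:tree_estimate}). The natural route is to condition on the pilot $\sigma$-algebra $\sigma\{(W_j)_{j=1}^m\}$ and show that the conditional characteristic function of $\sqrt{n}(\hat\theta(\hat T)-\theta)$ converges to $\exp(-t^2 V^*/2)$ along almost every sequence of pilot realizations — this is precisely the content guaranteed by the proof of Theorem \ref{thm:main}, which already handles the randomness of $\hat T$ and only relies on $V(\hat T_m)\to V^*$ almost surely. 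I would then compute the joint characteristic function
\[
E\!\left[\exp\!\left(\mathrm{i}\,t\,\sqrt{N}(\hat\theta_{SW}-\theta)\right)\right]
= E\!\left[\exp\!\left(\mathrm{i}\,t\sqrt{\tfrac{m}{N}}\,\sqrt{m}(\hat\theta_1-\theta)\right)
E\!\left[\exp\!\left(\mathrm{i}\,t\sqrt{\tfrac{n}{N}}\,\sqrt{n}(\hat\theta(\hat T)-\theta)\right)\,\Big|\,\sigma\{(W_j)_{j=1}^m\}\right]\right],
\]
where the factorization uses that $\hat\theta_1$ is pilot-measurable and the inner conditional expectation is over the main-stage randomness. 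Taking limits, the inner factor converges to $\exp\!\left(-\tfrac{1}{2}t^2(1-\lambda)V^*\right)$ uniformly enough to pass through the outer expectation (dominated convergence, since characteristic functions are bounded), while the outer factor converges to $\exp\!\left(-\tfrac{1}{2}t^2\lambda V_1\right)$. Multiplying gives the characteristic function of $N\!\left(0,\lambda V_1 + (1-\lambda)V^*\right)$, and Lévy's continuity theorem delivers the claimed limit. The delicate point throughout is justifying the interchange of limit and outer expectation, which I expect to handle by invoking the almost-sure convergence in Assumption \ref{ass:tree_estimate} together with boundedness of the integrand.
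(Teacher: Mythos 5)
Your proposal is correct and follows essentially the same route as the paper: both rest on the decomposition $\sqrt{N}(\hat{\theta}_{SW}-\theta)=\sqrt{m/N}\,\sqrt{m}(\hat{\theta}_1-\theta)+\sqrt{n/N}\,\sqrt{n}(\hat{\theta}(\hat{T})-\theta)$, establish asymptotic joint convergence to independent normals by conditioning on the pilot $\sigma$-algebra via Fubini, invoke the almost-sure convergence of the conditional law of the main-stage term (already delivered by the proof of Theorem \ref{thm:main}), pass to the limit by dominated convergence, and finish with Slutsky. The only cosmetic difference is that you work with characteristic functions while the paper works with joint CDFs, adding and subtracting $P(\sqrt{m}(\hat{\theta}_1-\theta)\le t_1)\Phi^*(t_2)$; the substance is the same.
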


In words, we see that the pooled estimator $\hat{\theta}_{MW}$ has an asymptotic variance which is a weighted combination of the optimal variance and the variance from estimation in the pilot experiment, with weights which correspond to their relative sizes. In the asymptotic regime where $\lambda = 0$, $V^*_\lambda = V^*$, and hence pooling has no impact on the asymptotic behavior of the estimator. In contrast, in an asymptotic regime where $\lambda \ne 0$, $\hat{\theta}_{MW}$ and $\hat{\theta}(\hat{T}_m)$ are computed on sample sizes which differ asymptotically. To compare their variances, note that 
\[\text{Var}\left(\hat{\theta}_{MW}\right) \approx \frac{V^*_\lambda}{N}~,\]
whereas
\[\text{Var}\left(\hat{\theta}(\hat{T}_m)\right) \approx \frac{V^*}{n}~,\]
so that, asymptotically, pooling will be beneficial when $(1 - \lambda)V^*_\lambda < V^*$. In practice, we expect that this will often be the case when the pilot data make up a large proportion of the total sample size (as in for example the application in Section \ref{sec:application}).


\begin{remark}\label{rem:pool}
The pooled estimator we present in this section is myopic, in the sense that $\hat{T}_m$ and $\hat{\theta}_{MW}$ are estimated as if the researcher did not anticipate that they would pool the data in the second stage. This has the benefit of being straightforward to analyze under very general assumptions on the pilot experiment. However, we could also consider  ``smart" versions of pooling, where the researcher estimates $\hat{T}_m$ taking into account that pooling will occur in the second stage. We present a preliminary discussion of such a strategy in Appendix \ref{sec:smart_pool}, but due to the increased technical complications of this approach we do not pursue a formal analysis of the procedure in this paper. 
\end{remark}

\subsubsection{Stratification Trees for Subgroup Targeting}
In this subsection we explain how the method can flexibly accommodate the problem of variance reduction for estimators of subgroup-specific ATEs, while still minimizing the variance of the unconditional ATE estimator in a restricted set of trees. It is common practice in RCTs for the strata to be specified such that they are the subgroups that a researcher is interested in studying \citep[see for example the recommendations in][]{glennerster2013}. This serves two purposes: the first is that it enforces a pre-specification of the subgroups of interest, which guards against ex-post data mining. Second, it allows the researcher to improve the efficiency of the subgroup specific estimates. 

Let $S' \in \mathcal{S}_{L'}$ be a tree of depth $L' < L$, whose terminal nodes represent the subgroups of interest. Suppose these nodes are labelled by $g = 1, 2, ..., G$, and that $P(S'(X) = g) > 0$ for each $g$. The subgroup-specific ATEs are defined as follows:
$$\theta^{(g)} := E[Y(1) - Y(0)|S'(X) = g]~.$$
We introduce the following new notation: let $\mathcal{T}_L(S') \subset \mathcal{T}_L$ be the set of stratification trees of depth $L$ which can be constructed as \emph{extensions} of $S'$. For a given $T \in \mathcal{T}_L(S')$, let $\mathcal{K}_g(T) \subset [K]$ be the set of terminal nodes of $T$ which pass through the node $g$ in $S'$ (see Figure \ref{fig:subtree} for an example). 
\vspace{5mm}
\begin{figure}[H]
\centering
\captionsetup{justification = centering}
\raisebox{20 mm}{
\begin{tikzpicture}
  [
    grow                    = down,
    sibling distance        = 10em,
    level distance          = 5em,
    edge from parent/.style = {draw, -latex},
    every node/.style       = {font=\footnotesize},
    sloped
  ]
  \node [root]  {}
    child { node [leaf] {$1$}
      edge from parent node [above] {$x_1 \le 0.5$} }
    child { node [leaf] {$2$}
      edge from parent node [above] {$x_1 > 0.5$} };
 \end{tikzpicture}
}
\hspace{10mm}
\begin{tikzpicture}
  [
    grow                    = down,
    level distance          = 5em,
    edge from parent/.style = {draw, -latex},
    sloped,
    level 1/.style={sibling distance=12em,font=\footnotesize},
    level 2/.style={sibling distance=6em,font=\footnotesize},
  ]
  \node [root] {}
    child { node [root] {}
    	child {node [leaf]{$\pi(1) = 0.3$}
	  edge from parent node [above] {$x_2 \le 0.8$} }
	child {node [leaf]{$\pi(2) = 0.7$}
	  edge from parent node [above] {$x_2 > 0.8$} }
      edge from parent node [above] {$x_1 \le 0.5$} }
    child { node [root] {}
    	child {node[leaf]{$\pi(3) = 0.5$}
	  edge from parent node [above] {$x_1 \le 0.9$} }
	child {node[leaf]{$\pi(4) = 0.4$}
	  edge from parent node [above] {$x_1 > 0.9$} }
      edge from parent node [above] {$x_1 > 0.5$} };
      \end{tikzpicture}

\caption{On the left: a tree $S'$ whose nodes represent the subgroups of interest. \newline
On the right: an extension $T \in \mathcal{T}_2(S')$. Here $\mathcal{K}_1(T) = \{1,2\}, \mathcal{K}_2(T) = \{3,4\}$}\label{fig:subtree}
\end{figure}

Given a tree $T \in \mathcal{T}_L(S')$, a natural estimator of $\theta^{(g)}$ is then given by
$$\hat{\theta}^{(g)}(T) := \sum_{k \in \mathcal{K}_g}\frac{n(k; T)}{n'(g)}\hat{\beta}(k; T)~,$$
where $n'(g) = \sum_{i=1}^n {\bf 1}\{S'(X_i) = g\}$ and $\hat{\beta}(k)$ are the regression coefficients of the saturated regression over $T$. It is straightforward to see from the recursive structure of stratification trees that choosing $T$ as a solution to the following problem:
$$\min_{T \in \mathcal{T}_L(S')} V(T)~,$$
will minimize the asymptotic variance of the subgroup specific estimators $\hat{\theta}^{(g)}$, while still minimizing the variance of the global ATE estimator $\hat{\theta}$ in the restricted set of trees $\mathcal{T}_L(S')$. Moreover, to compute a minimizer of $V(T)$ over $\mathcal{T}_L(S')$, it suffices to compute the optimal tree for each subgroup, and then append these to $S'$ to form the stratification tree. 

In Section \ref{sec:application} we illustrate the application of this idea to the setting in \cite{karlan2017}. In their paper, they study the effect of information about a charity's effectiveness on subsequent donations to the charity, and in particular the treatment effect heterogeneity between large and small prior donors. For their application we specify $S'$ to be a tree of depth $1$, whose terminal nodes correspond to the subgroups of large and small prior donors. We then estimate the optimal tree for each of these subgroups and append them to $S'$ to form a stratification tree which simultaneously minimizes the variance of the subgroup-specific estimators, while still minimizing the variance of the global estimator in this restricted class.

\section{Simulations}\label{sec:simulations}
In this section we analyze the finite sample behaviour of our method via a simulation study, and in particular analyze the performance of the cross-validation procedure presented in Section \ref{sec:extensions}. We consider three DGPs in the spirit of the designs considered in \cite{athey2016}. We emphasize that although these designs are artificial, they highlight several interesting qualitative patterns. For all three designs in this section, the outcomes are specified as follows:
$$Y_i(a) = \kappa_a(X_i) + \nu_a(X_i)\cdot\epsilon_{a,i}~.$$
Where the $\epsilon_{a,i}$ are i.i.d $N(0, 0.1)$, and $\kappa_a(\cdot)$, $\nu_a(\cdot)$ are specified individually for each DGP below. In all cases, $X_i \in [0,1]^d$, with components independently and identically distributed as $Beta(2,5)$.
The specifications are given by:

\noindent {\bf Model 1}: $d = 2$, $\kappa_0(x) = 0.2$, $\nu_0(x) = 5$, 
$$\kappa_1(x) = 10x_1^2{\bf 1}\{x_1 > 0.4\} - 5x_2^2{\bf 1}\{x_2 > 0.4\}~,$$
$$\nu_1(x)  = 1 + 10x_1^2{\bf 1}\{x_1 > 0.6\} + 5x_2^2{\bf 1}\{x_2 > 0.6\}~.$$
This is a ``low-dimensional" design with two covariates. The first covariate is given a higher weight than the second in the outcome equation for $Y(1)$.

\noindent {\bf Model 2}: $d = 10$, $\kappa_0(x) = 0.5$, $\nu_0(x) = 5$, 
$$\kappa_1(x) = \sum_{j = 1}^{10} (-1)^{j-1}10^{-j+2}x_j^2{\bf 1}\{x_j > 0.4\}~,$$
$$\nu_1(x) =  1 + \sum_{j = 1}^{10} 10^{-j+2}x_j^2{\bf 1}\{x_j > 0.6\}~.$$
This is a ``moderate-dimensional" design with ten covariates. Here the first covariate has the largest weight in the outcome equation for $Y(1)$, and the weight of subsequent covariates decreases quickly. 

\noindent {\bf Model 3}: $d = 10$, $\kappa_0(x) = 0.2$, $\nu_0(x) = 9$, 
$$\kappa_1(x) = \sum_{j = 1}^{3} (-1)^{j-1}10x_j^2\cdot{\bf 1}\{x_j > 0.4\} +  \sum_{j = 4}^{10}(-1)^{j-1}5x_j^2\cdot{\bf 1}\{x_j > 0.4\}~,$$
$$\nu_1(x) =  1 + \sum_{j = 1}^{3}10x_j^2\cdot{\bf 1}\{x_j > 0.6\} +  \sum_{j = 4}^{10}5x_j^2\cdot{\bf 1}\{x_j > 0.6\}~.$$
This is a ``moderate-dimensional" design with ten covariates. Here the first three covariates have similar weight in the outcome equation for $Y(1)$, and the next seven covariates have a smaller but still significant weight.

In each case, $\kappa_0(\cdot)$ is calibrated so that the average treatment effect is close to $0.1$, and $\nu_0(\cdot)$ is calibrated so that $Y_i(1)$ and $Y_i(0)$ have similar unconditional variances (see Appendix \ref{sec:compdet} for details). In each simulation we test six different methods of stratification. In all cases, when we stratify we consider a maximum of $8$ strata (which corresponds to a stratification tree of depth 3). In all cases we use SBR to perform assignment. We consider the following methods of stratification:

\begin{itemize}[topsep = 1pt]
\item No Stratification: Here we assign the treatment to half the sample, with no stratification.
\item Ad-hoc: Here we stratify in an ``ad-hoc" fashion and then assign treatment to half the sample in each stratum. To construct the strata we iteratively select a covariate and stratum at random, and stratify on the midpoints of the currently defined stratum. 
\item Ad-hoc $+$ Neyman: Here we split the sample and perform a pilot experiment to estimate the Neyman allocation for each stratum defined in Ad-hoc. We then use the resulting stratification to assign treatment in the second wave.  
\item Stratification Tree: Here we split the sample and perform a pilot experiment to estimate a stratification tree. We then use this tree to assign treatment in the second wave.
\item Cross-Validated Tree: Here we estimate a stratification tree as above, while selecting the depth via 2-fold cross validation. If the depth of the resulting tree is less than 3, then we perform ad-hoc stratification within each of the computed leaves as described in Remark \ref{rem:depth}.
\item Infeasible Optimal Tree: Here we estimate an ``optimal" tree by using a large auxiliary sample (see Appendix \ref{sec:compdet} for details). We then split the sample and perform a pilot experiment in the first wave, while using the optimal tree to assign treatment in the second wave.
\end{itemize}

We perform the simulations with a sample size of $5,000$, and consider three different splits of the total sample for the pilot experiment and main experiment. The pilot experiment was performed using SBR with ad-hoc stratification. To estimate the stratification trees we minimize an empirical analog of the asymptotic variance as described in Section \ref{sec:mainres}. The estimator of the ATE we use throughout is the pooled estimator described in Section \ref{sec:extensions}.

We assess the performance of the randomization procedures through the following criteria: the empirical coverage of a $95\%$ confidence interval formed using a normal approximation, the percentage reduction in average length of the $95\%$ CI relative to no stratification, the power of a $t$-test for an ATE of 0, and the percentage reduction in root mean-squared error (RMSE) relative to no stratification. For each design we perform $6,000$ Monte Carlo iterations. Tables \ref{tab:model1}, \ref{tab:model2}, and \ref{tab:model3} below present our simulation results.

\begin{remark}\label{rem:supp_sim}
In Appendix \ref{sec:supp_sim} we present additional simulation results. In particular, we explore alternative choices of $B$ in our $B$-fold cross validation procedure and alternative choices for the maximum number of strata. 
\end{remark}

\begin{table}[p]
  \centering
    \begin{tabular}{ccccccc}
       \toprule
    \multicolumn{2}{c}{Sample Size} & \multirow{2}[4]{*}{Randomization Procedure} & \multicolumn{4}{c}{Criteria} \\
\cmidrule{1-2}\cmidrule{4-7}    Pilot & Main  &       & Coverage &  $\%\Delta$Length & Power & $\%\Delta$RMSE \\
    \midrule
    \multirow{6}[2]{*}{100} & \multirow{6}[2]{*}{4900} & No Stratification & 95.3  & 0.0   & 77.2  & 0.0 \\
          &       & Ad-Hoc & 95.0  & -6.9  & 82.5  & -7.1 \\
          &       & Ad-Hoc Neyman & 94.9  & -6.1  & 82.7  & -6.3 \\
          &       & Strat. Tree & 94.9  & -1.2  & 78.9  & -0.4 \\
          &       & CV Tree & 95.0  & -9.6  & 85.0  & -9.2 \\
          &       & Optimal Tree & 95.0  & -18.7 & 91.3  & -18.6 \\
    \midrule
    \multirow{6}[2]{*}{500} & \multirow{6}[2]{*}{4500} & No Stratification & 95.1  & 0.0   & 78.1  & 0.0 \\
          &       & Ad-Hoc & 94.8  & -6.9  & 83.6  & -6.2 \\
          &       & Ad-Hoc Neyman & 94.5  & -8.9  & 85.4  & -7.6 \\
          &       & Strat. Tree & 94.8  & -14.3 & 89.2  & -12.3 \\
          &       & CV Tree & 94.3  & -14.1 & 87.9  & -11.3 \\
          &       & Optimal Tree & 94.5  & -17.6 & 91.1  & -15.5 \\
    \midrule
    \multirow{6}[2]{*}{1500} & \multirow{6}[2]{*}{3500} & No Stratification & 94.3  & 0.0   & 77.0  & 0.0 \\
          &       & Ad-Hoc & 94.8  & -6.9  & 83.3  & -7.8 \\
          &       & Ad-Hoc Neyman & 94.8  & -8.8  & 85.0  & -10.2 \\
          &       & Strat. Tree & 95.0  & -14.3 & 88.4  & -14.8 \\
          &       & CV Tree & 94.9  & -13.8 & 88.6  & -15.3 \\
          &       & Optimal Tree & 95.0  & -15.1 & 89.4  & -16.2 \\
    \bottomrule
    \end{tabular}%
   \caption{Simulation Results for Model 1}
  \label{tab:model1}%
\end{table}%
  

\begin{table}[p]
  \centering
    \begin{tabular}{ccccccc}
    \toprule
    \multicolumn{2}{c}{Sample Size} & \multirow{2}[4]{*}{Randomization Procedure} & \multicolumn{4}{c}{Criteria} \\
\cmidrule{1-2}\cmidrule{4-7}    Pilot & Main  &       & Coverage &  $\%\Delta$Length & Power & $\%\Delta$RMSE \\
    \midrule
    \multirow{6}[2]{*}{100} & \multirow{6}[2]{*}{4900} & No Stratification & 95.2  & 0.0   & 56.4  & 0.0 \\
          &       & Ad-Hoc & 95.1  & -2.1  & 59.3  & -1.2 \\
          &       & Ad-Hoc Neyman & 95.0  & 2.4   & 54.2  & 3.2 \\
          &       & Strat. Tree & 94.8  & 8.0   & 50.6  & 10.2 \\
          &       & CV Tree & 94.9  & -7.8  & 63.7  & -5.8 \\
          &       & Optimal Tree & 94.8  & -19.1 & 74.5  & -17.6 \\
    \midrule
    \multirow{6}[2]{*}{500} & \multirow{6}[2]{*}{4500} & No Stratification & 94.7  & 0.0   & 56.4  & 0.0 \\
          &       & Ad-Hoc & 95.0  & -2.1  & 58.7  & -3.6 \\
          &       & Ad-Hoc Neyman & 94.9  & -1.4  & 57.0  & -2.4 \\
          &       & Strat. Tree & 94.6  & -12.8 & 67.4  & -12.2 \\
          &       & CV Tree & 94.6  & -14.0 & 69.3  & -13.9 \\
          &       & Optimal Tree & 95.0  & -17.5 & 73.1  & -18.0 \\
    \midrule
    \multirow{6}[2]{*}{1500} & \multirow{6}[2]{*}{3500} & No Stratification & 95.0  & 0.0   & 55.5  & 0.0 \\
          &       & Ad-Hoc & 94.7  & -2.1  & 57.5  & -1.1 \\
          &       & Ad-Hoc Neyman & 94.6  & -2.3  & 58.7  & -0.9 \\
          &       & Strat. Tree & 94.9  & -12.8 & 68.0  & -12.2 \\
          &       & CV Tree & 94.9  & -12.5 & 67.7  & -12.0 \\
          &       & Optimal Tree & 95.0  & -13.9 & 69.0  & -13.1 \\
    \bottomrule
    \end{tabular}%
\caption{Simulation Results for Model 2}
  \label{tab:model2}%
\end{table}%

\begin{table}[htbp]
  \centering
    \begin{tabular}{ccccccc}
    \toprule
    \multicolumn{2}{c}{Sample Size} & \multirow{2}[4]{*}{Stratification Method} & \multicolumn{4}{c}{Criteria} \\
\cmidrule{1-2}\cmidrule{4-7}    Pilot & Main  &       & Coverage &  $\%\Delta$Length & Power & $\%\Delta$RMSE \\
    \midrule
    \multirow{6}[2]{*}{100} & \multirow{6}[2]{*}{4900} & No Stratification & 95.4  & 0.0   & 30.6  & 0.0 \\
          &       & Ad-Hoc & 95.3  & -2.0  & 31.3  & -2.0 \\
          &       & Ad-Hoc Neyman & 95.1  & 0.9   & 30.5  & 1.5 \\
          &       & Strat. Tree & 94.7  & 15.5  & 23.9  & 17.8 \\
          &       & CV Tree & 95.4  & -1.4  & 30.9  & -0.8 \\
          &       & Optimal Tree & 95.5  & -7.1  & 34.1  & -8.2 \\
    \midrule
    \multirow{6}[2]{*}{500} & \multirow{6}[2]{*}{4500} & No Stratification & 95.0  & 0.0   & 30.5  & 0.0 \\
          &       & Ad-Hoc & 94.6  & -2.1  & 31.4  & -1.2 \\
          &       & Ad-Hoc Neyman & 95.3  & -1.4  & 31.2  & -2.0 \\
          &       & Strat. Tree & 94.8  & -2.2  & 31.3  & -2.2 \\
          &       & CV Tree & 95.0  & -2.8  & 31.0  & -3.6 \\
          &       & Optimal Tree & 94.5  & -6.7  & 34.4  & -5.0 \\
    \midrule
    \multirow{6}[2]{*}{1500} & \multirow{6}[2]{*}{3500} & No Stratification & 94.9  & 0.0   & 31.7  & 0.0 \\
          &       & Ad-Hoc & 94.5  & -2.0  & 31.0  & -1.3 \\
          &       & Ad-Hoc Neyman & 95.1  & -1.9  & 32.4  & -1.6 \\
          &       & Strat. Tree & 94.7  & -4.5  & 33.2  & -3.9 \\
          &       & CV Tree & 94.7  & -3.9  & 33.4  & -3.5 \\
          &       & Optimal Tree & 94.3  & -5.6  & 34.3  & -4.3 \\
    \bottomrule
    \end{tabular}%
     \caption{Simulation Results for Model 3}
  \label{tab:model3}%
\end{table}%


For all three designs, we find that both the stratification tree and CV tree generally outperform no stratification and ad-hoc stratification, with particularly sizable gains for Model 2. However, when using a small pilot, the performance of the stratification tree without cross validation is often found to be worse than not stratifying at all. In such settings, we find that the CV tree effectively protects against overfitting. With larger sized pilots, we see that both trees perform comparably to the optimal tree in all three designs. Overall, we conclude that our proposed cross-validation procedure does a good job of protecting against overfitting, and we recommend that practitioners use cross validation to help select the depth of their trees in practice. With that in mind, we would still caution against using our method with small pilots even when using cross validation to select the depth.

\section{An Application}\label{sec:application}
In this section we study the behavior of our method in an application, using the experimental data from \cite{karlan2017}. First we provide a brief review of the empirical setting: \cite{karlan2017} study how donors to the charity Freedom from Hunger respond to new information about the charity's effectiveness. The experiment, which proceeded in two separate waves corresponding to regularly scheduled fundraising campaigns, randomly mailed one of two different marketing solicitations to previous donors, with one solicitation emphasizing the scientific research on FFH's impact, and the other emphasizing an emotional appeal to a specific beneficiary of the charity. The outcome of interest was the amount donated in response to the mailer. \cite{karlan2017} found that, although the effect of the research insert was small and insignificant, there was substantial heterogeneity in response to the treatment: for those who had given a large amount of money in the past, the effect of the research insert was positive, whereas for those who had given a small amount, the effect was negative. They argue that this evidence is consistent with the behavioral mechanism proposed by \cite{kahneman2003}, where small prior donors are driven by a ``warm-glow" of giving (akin to Kahneman's System I decision making), in contrast to large prior donors, who are driven by altruism (akin to Kahneman's System II decision making). However, the resulting confidence intervals of their estimates are wide, and often contain zero \citep[see for example Figure 1 in][]{karlan2017}. The covariates available in the dataset for stratification are as follows:

\begin{itemize}[topsep = 1pt]
\item Total amount donated prior to mailer 
\item Amount of most recent donation prior to mailer (denoted {\tt pre gift} below)
\item Amount of largest donation prior to mailer 
\item Number of years as a donor (denoted {\tt \# years} below)
\item Number of donations per year (denoted {\tt freq} below)
\item Average years of education in census tract
\item Median zipcode income
\item Prior giving year (either 2004/05 or 2006/07) (denoted {\tt p.year} below)
\end{itemize} 

As a basis for comparison, Figure \ref{fig:app_kw} depicts the stratification used for the first wave in \cite{karlan2017}. 

\begin{figure}[H]\center
\begin{tikzpicture}
  [
    grow                    = down,
    level distance          = 6em,
    edge from parent/.style = {draw, -latex},
    sloped,
    level 1/.style={sibling distance=12em,font=\footnotesize},
    level 2/.style={sibling distance=6em,font=\footnotesize},
  ]
  \node [root] {}
    child { node [root] {}
    	child {node [leaf]{$\pi(1) = 0.5$}
	  edge from parent node [above] {${\tt p.year = 04/05}$} }
	child {node [leaf]{$\pi(2) = 0.5$}
	  edge from parent node [above] {${\tt p.year = 06/07}$} }
      edge from parent node [above] {${\tt pre gift \le 100}$} }
    child { node [root] {}
    	child {node[leaf]{$\pi(3) = 0.5$}
	  edge from parent node [above] {${\tt p.year = 04/05}$} }
	child {node[leaf]{$\pi(4) = 0.5$}
	  edge from parent node [above] {${\tt p.year = 06/07}$} }
      edge from parent node [above] {${\tt pre gift > 100}$} };
      \end{tikzpicture}
 \caption{Stratification used in \cite{karlan2017}}\label{fig:app_kw}
 \end{figure}
 
We estimate two different stratification trees using data\footnotemark[3] from the first wave of the experiment (with a sample size of $10,869$), that illustrate stratifications which could have been used to assign treatment in the second wave. We compute the trees by minimizing an empirical analog of the variance, as described in Section \ref{sec:results}. The first tree is fully unconstrained, and hence targets efficient estimation of the unconditional ATE estimator, while the second tree is constrained in accordance with Section \ref{sec:extensions} to efficiently target estimation of the subgroup-specific effects for large and small prior donors (see below for a precise definition). In both cases, the depth of the stratification tree was selected using $2$-fold cross validation as described in Section \ref{sec:extensions}, with a maximal depth of $\bar{L} = 5$ (which corresponds to a maximum of $32$ strata). When computing our trees, given that some of these covariates do not have upper bounds a-priori, we impose an upper bound on the allowable range for the strata to be considered (we set the upper bound as roughly the 97th percentile in the dataset, although in practice this could be set using historical data).
\footnotetext[3]{Replication data is available by request from Innovations for Poverty Action. Observations with missing data on median income, average years of education, and those receiving the ``story insert" were dropped.}

Figure \ref{fig:app_ft} depicts the unrestricted tree estimated via cross-validation. We see that the cross-validation procedure selects a tree of depth one, which may suggest that the covariates available to us for stratification are not especially relevant for decreasing the variance of the estimator. However, we do see a wide discrepancy in the assignment proportions for the selected strata. In words, the subgroup of respondents who have been donors for more than $16$ years have a larger variance in outcomes when receiving the research mailer than the control mailer. In contrast the subgroup of respondents who have been donors for less than $16$ years have roughly equal variances in outcomes under both treatments. 
 
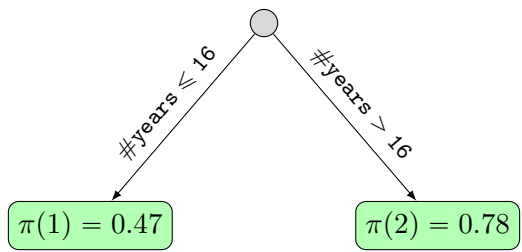
\begin{figure}[H]\center
\begin{tikzpicture}
  [
    grow                    = down,
    level distance          = 7em,
    edge from parent/.style = {draw, -latex},
    sloped,
    level 1/.style={sibling distance=12em,font=\footnotesize},
    level 2/.style={sibling distance=6em,font=\footnotesize},
  ]
  \node [root]  {}
    child { node [leaf] {$\pi(1) = 0.47$}
      edge from parent node [above] {${\tt \# years \le 16}$} }
    child { node [leaf] {$\pi(2) = 0.78$}
      edge from parent node [above] {${\tt\# years > 16}$} };
      \end{tikzpicture}
\caption{Unrestricted Stratification Tree estimated from \cite{karlan2017} data}\label{fig:app_ft}
\end{figure}

Next, we estimate the restricted stratification tree which targets the subgroup-specific treatment effects for large and small prior donors. We specify a large donor as someone who's most recent donation prior to the experiment was larger than $\$100$. We proceed by estimating each subtree using cross-validation. Figure \ref{fig:app_rt} depicts the estimated tree. We see that the cross-validation procedure selects a stratification tree of depth 1 in the left subtree and a tree of depth 0 (i.e. no stratification) in the right subtree, which further reinforces that the covariates we have available may be uninformative for decreasing variance. 



\vspace{5mm}
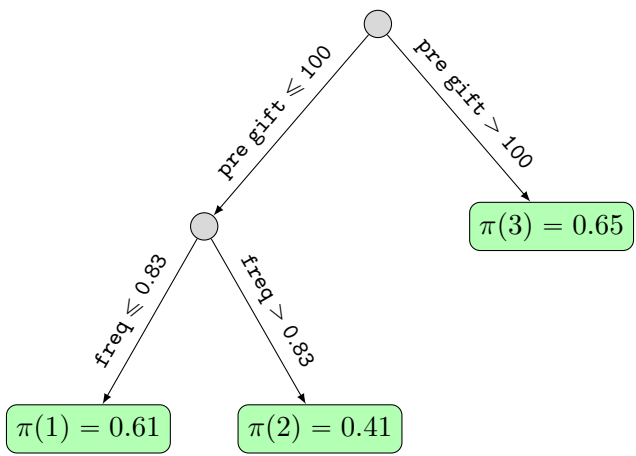
\begin{figure}[H]\center
\begin{tikzpicture}
  [
    grow                    = down,
    level distance          = 7em,
    edge from parent/.style = {draw, -latex},
    sloped,
    level 1/.style={sibling distance=12em,font=\footnotesize},
    level 2/.style={sibling distance=8em,font=\footnotesize},
  ]
  \node [root]  {}
    child { node [root] {}
    	child {node [leaf]{$\pi(1) = 0.60$}
	  edge from parent node [above] {${\tt freq \le 0.83}$}}
	child {node [leaf]{$\pi(2) = 0.42$}
	  edge from parent node [above] {${\tt freq > 0.83}$}}
      edge from parent node [above] {${\tt pre\hspace{1mm}gift \le 100}$}}
    child { node [leaf] {$\pi(3) = 0.65$}
      edge from parent node [above] {${\tt pre\hspace{1mm}gift > 100}$} };
      \end{tikzpicture}
\caption{Restricted Stratification Tree estimated from \cite{karlan2017} data}\label{fig:app_rt}
\end{figure}

These results are not necessarily surprising given the nature of the experiment: with very high probability, a recipient of either mailer is likely to make no donation at all, and hence we might expect limited heterogeneity in the potential outcomes with respect to our observable characteristics. 

\begin{remark}
In Appendix \ref{sec:supp_sim} we repeat the simulation exercise of Section \ref{sec:simulations} with an application-based simulation design. There we find that our method obtains very modest gains in precision relative to the stratification used in \cite{karlan2017}, which may not be surprising given the nature of the experiment. 
\end{remark}

\section{Conclusion}\label{sec:conclusion}
In this paper we proposed an adaptive randomization procedure for two-stage randomized controlled trials, which uses the data from a first-wave experiment to assign treatment in a second wave of the RCT. Our method uses the first-wave data to estimate a stratification tree: a stratification of the covariate space into a tree partition along with treatment assignment probabilities for each of these strata. 

Going forward, there are several extensions of the paper that we would like to consider. First, although we have argued throughout this paper that we find tree partitions to be a natural and convenient constraint, there are serious theoretical and empirical questions about whether or not weakening this restriction could lead to large decreases in asymptotic variance. A potentially easy compromise would be to consider \emph{oblique} tree partitions, where each split of the tree is determined by a linear-index of the covariates. It would be interesting to know to what extent our results generalize to this setting. Second, many RCTs are performed as \emph{cluster} RCTs, that is, where treatment is assigned at a higher level of aggregation such as a school or city. Extending the results of the paper to this setting could be a worthwhile next step. Similarly, we could extend the results to settings with non-compliance by leveraging recent results in \cite{bugni2021}.  Another avenue to consider would be to combine our randomization procedure with other aspects of the experimental design. For example, \cite{carneiro2016} set up a statistical decision problem to optimally select the sample size, as well as the number of covariates to collect from each participant in the experiment, given a fixed budget. It may be interesting to embed our randomization procedure into a similar decision problem. Finally, although our method employs stratified randomization, we assumed throughout that the experimental sample is an i.i.d sample. Further gains may be possible by considering a setting where we are able to conduct stratified \emph{sampling} in the second wave as well as stratified randomization. To that end, \cite{song2014} develop estimators and semi-parametric efficiency bounds for stratified sampling which may be useful.
\pagebreak

\noindent{\Large Acknowledgments}

I am grateful for advice and encouragement from Ivan Canay, Joel Horowitz, and Chuck Manski. I would also like to thank three anonymous referees, Eric Auerbach, Yuehao Bai, Lori Beaman, Stephane Bonhomme, Federico Bugni, Ivan Fernandez-Val, Hidehiko Ichimura, Sasha Indarte, Seema Jayachandran, Vishal Kamat, Dean Karlan, Cynthia Kinnan, Dennis Kristensen, Ryan Lee, Eric Mbakop, Matt Masten, Francesca Molinari, Denis Nekipelov, Sam Norris, Susan Ou, Azeem Shaikh, Mikkel Solvsten, Imran Rasul, Alex Torgovitsky, Chris Udry, Takuya Ura, Andreas Wachter, Joachim Winter, and seminar participants at many institutions for helpful comments and discussions. This research was supported in part through the computational resources and staff contributions provided for the Quest high performance computing facility at Northwestern University, and the Acropolis computing cluster at the University of Chicago.

\pagebreak
\begin{small}
\clearpage
\bibliography{references.bib}

\begin{thebibliography}{68}
\newcommand{\enquote}[1]{``#1''}
\providecommand{\natexlab}[1]{#1}
\providecommand{\url}[1]{\texttt{#1}}
\providecommand{\urlprefix}{URL }
\providecommand{\bibAnnoteFile}[1]{%
  \IfFileExists{#1}{\begin{quotation}\noindent\textsc{Key:} #1\\
  \textsc{Annotation:}\ \input{#1}\end{quotation}}{}}
\providecommand{\bibAnnote}[2]{%
  \begin{quotation}\noindent\textsc{Key:} #1\\
  \textsc{Annotation:}\ #2\end{quotation}}

\bibitem[{Aliprantis and Border(1986)}]{aliprantis1986}
Aliprantis, Charalambos~D and Kim~C Border (1986), \enquote{Infinite
  dimensional analysis: a hitchhikers guide.}
\bibAnnoteFile{aliprantis1986}

\bibitem[{Antognini and Giovagnoli(2004)}]{antognini2004}
Antognini, Alessandro~Baldi and Alessandra Giovagnoli (2004), \enquote{A new
  Ôbiased coin designÕfor the sequential allocation of two treatments.}
  \emph{Journal of the Royal Statistical Society: Series C (Applied
  Statistics)}, 53, 651--664.
\bibAnnoteFile{antognini2004}

\bibitem[{Arlot et~al.(2010)Arlot, Celisse et~al.}]{arlot2010}
Arlot, Sylvain, Alain Celisse, et~al. (2010), \enquote{A survey of
  cross-validation procedures for model selection.} \emph{Statistics surveys},
  4, 40--79.
\bibAnnoteFile{arlot2010}

\bibitem[{Athey and Imbens(2016)}]{athey2016}
Athey, Susan and Guido Imbens (2016), \enquote{Recursive partitioning for
  heterogeneous causal effects.} \emph{Proceedings of the National Academy of
  Sciences}, 113, 7353--7360.
\bibAnnoteFile{athey2016}

\bibitem[{Athey and Imbens(2017)}]{athey2017survey}
Athey, Susan and Guido~W Imbens (2017), \enquote{The econometrics of randomized
  experiments.} \emph{Handbook of Economic Field Experiments}, 1, 73--140.
\bibAnnoteFile{athey2017survey}

\bibitem[{Athey and Wager(2017)}]{athey2017}
Athey, Susan and Stefan Wager (2017), \enquote{Efficient policy learning.}
  \emph{arXiv preprint arXiv:1702.02896}.
\bibAnnoteFile{athey2017}

\bibitem[{Aufenanger(2017)}]{aufenanger2017}
Aufenanger, Tobias (2017), \enquote{Machine learning to improve experimental
  design.} Technical report, FAU Discussion Papers in Economics.
\bibAnnoteFile{aufenanger2017}

\bibitem[{Bai(2019)}]{bai2019}
Bai, Yuehao (2019), \enquote{Optimality of matched-pair designs in randomized
  controlled trials.} \emph{Available at SSRN 3483834}.
\bibAnnoteFile{bai2019}

\bibitem[{Bai et~al.(2019)Bai, Shaikh, and Romano}]{bai2018}
Bai, Yuehao, Azeem Shaikh, and Joseph~P Romano (2019), \enquote{Inference in
  experiments with matched pairs.} \emph{University of Chicago, Becker Friedman
  Institute for Economics Working Paper}.
\bibAnnoteFile{bai2018}

\bibitem[{Barrios(2014)}]{barrios2014}
Barrios, Thomas (2014), \enquote{Optimal stratification in randomized
  experiments.} \emph{Manuscript, Harvard University}.
\bibAnnoteFile{barrios2014}

\bibitem[{Barros et~al.(2012)Barros, Basgalupp, De~Carvalho, and
  Freitas}]{barros2012}
Barros, Rodrigo~Coelho, M{\'a}rcio~Porto Basgalupp, Andre~CPLF De~Carvalho, and
  Alex~A Freitas (2012), \enquote{A survey of evolutionary algorithms for
  decision-tree induction.} \emph{IEEE Transactions on Systems, Man, and
  Cybernetics, Part C (Applications and Reviews)}, 42, 291--312.
\bibAnnoteFile{barros2012}

\bibitem[{Bertsimas and Dunn(2017)}]{bertsimas2017}
Bertsimas, Dimitris and Jack Dunn (2017), \enquote{Optimal classification
  trees.} \emph{Machine Learning}, 1--44.
\bibAnnoteFile{bertsimas2017}

\bibitem[{Bhattacharya and Waymire(2007)}]{bhat2007}
Bhattacharya, Rabindra~Nath and Edward~C Waymire (2007), \emph{A basic course
  in probability theory}, volume~69. Springer.
\bibAnnoteFile{bhat2007}

\bibitem[{Breiman et~al.(1984)Breiman, Friedman, Stone, and
  Olshen}]{breiman1984}
Breiman, Leo, Jerome Friedman, Charles~J Stone, and Richard~A Olshen (1984),
  \emph{Classification and regression trees}. CRC press.
\bibAnnoteFile{breiman1984}

\bibitem[{Bubeck et~al.(2012)Bubeck, Cesa-Bianchi et~al.}]{bubeck2012}
Bubeck, S{\'e}bastien, Nicolo Cesa-Bianchi, et~al. (2012), \enquote{Regret
  analysis of stochastic and nonstochastic multi-armed bandit problems.}
  \emph{Foundations and Trends{\textregistered} in Machine Learning}, 5,
  1--122.
\bibAnnoteFile{bubeck2012}

\bibitem[{Bugni et~al.(2017)Bugni, Canay, and Shaikh}]{bugni2015}
Bugni, Federico~A, Ivan~A Canay, and Azeem~M Shaikh (2017), \enquote{Inference
  under covariate-adaptive randomization.} \emph{Journal of the American
  Statistical Association}.
\bibAnnoteFile{bugni2015}

\bibitem[{Bugni et~al.(2018)Bugni, Canay, and Shaikh}]{bugni2017}
Bugni, Federico~A, Ivan~A Canay, and Azeem~M Shaikh (2018), \enquote{Inference
  under covariate adaptive randomization with multiple treatments.}
\bibAnnoteFile{bugni2017}

\bibitem[{Bugni and Gao(2021)}]{bugni2021}
Bugni, Federico~A and Mengsi Gao (2021), \enquote{Inference under
  covariate-adaptive randomization with imperfect compliance.} \emph{arXiv
  preprint arXiv:2102.03937}.
\bibAnnoteFile{bugni2021}

\bibitem[{Carneiro et~al.(2016)Carneiro, Lee, and Wilhelm}]{carneiro2016}
Carneiro, Pedro~Manuel, Sokbae Lee, and Daniel Wilhelm (2016), \enquote{Optimal
  data collection for randomized control trials.}
\bibAnnoteFile{carneiro2016}

\bibitem[{Cattaneo(2010)}]{cattaneo2010}
Cattaneo, Matias~D (2010), \enquote{Efficient semiparametric estimation of
  multi-valued treatment effects under ignorability.} \emph{Journal of
  Econometrics}, 155, 138--154.
\bibAnnoteFile{cattaneo2010}

\bibitem[{Cerf(1995)}]{cerf1995}
Cerf, Rapha{\"e}l (1995), \enquote{An asymptotic theory for genetic
  algorithms.} In \emph{European Conference on Artificial Evolution}, 35--53,
  Springer.
\bibAnnoteFile{cerf1995}

\bibitem[{Chambaz et~al.(2014)Chambaz, van~der Laan, and Zheng}]{chambaz2014}
Chambaz, Antoine, Mark~J van~der Laan, and Wenjing Zheng (2014),
  \enquote{Targeted covariate-adjusted response-adaptive lasso-based randomized
  controlled trials.} \emph{Modern Adaptive Randomized Clinical Trials:
  Statistical, Operational, and Regulatory Aspects}, 345--368.
\bibAnnoteFile{chambaz2014}

\bibitem[{Chen and Lee(2016)}]{chen2016}
Chen, Le-Yu and Sokbae Lee (2016), \enquote{Best subset binary prediction.}
  \emph{arXiv preprint arXiv:1610.02738}.
\bibAnnoteFile{chen2016}

\bibitem[{Cheng et~al.(2003)Cheng, Su, and Berry}]{cheng2003}
Cheng, Yi, Fusheng Su, and Donald~A Berry (2003), \enquote{Choosing sample size
  for a clinical trial using decision analysis.} \emph{Biometrika}, 90,
  923--936.
\bibAnnoteFile{cheng2003}

\bibitem[{Chung et~al.(2013)Chung, Romano et~al.}]{chung2013}
Chung, EunYi, Joseph~P Romano, et~al. (2013), \enquote{Exact and asymptotically
  robust permutation tests.} \emph{Annals of Statistics}, 41, 484--507.
\bibAnnoteFile{chung2013}

\bibitem[{Cochran(2007)}]{cochran2007}
Cochran, William~G (2007), \emph{Sampling techniques}. John Wiley \& Sons.
\bibAnnoteFile{cochran2007}

\bibitem[{Cox and Reid(2000)}]{cox2000}
Cox, David~Roxbee and Nancy Reid (2000), \emph{The theory of the design of
  experiments}. CRC Press.
\bibAnnoteFile{cox2000}

\bibitem[{Efron(1971)}]{efron1971}
Efron, Bradley (1971), \enquote{Forcing a sequential experiment to be
  balanced.} \emph{Biometrika}, 58, 403--417.
\bibAnnoteFile{efron1971}

\bibitem[{Florios and Skouras(2008)}]{florios2008}
Florios, Kostas and Spyros Skouras (2008), \enquote{Exact computation of max
  weighted score estimators.} \emph{Journal of Econometrics}, 146, 86--91.
\bibAnnoteFile{florios2008}

\bibitem[{Friedman et~al.(2001)Friedman, Hastie, and Tibshirani}]{friedman2001}
Friedman, Jerome, Trevor Hastie, and Robert Tibshirani (2001), \emph{The
  elements of statistical learning}, volume~1. Springer series in statistics
  New York.
\bibAnnoteFile{friedman2001}

\bibitem[{Glennerster and Takavarasha(2013)}]{glennerster2013}
Glennerster, Rachel and Kudzai Takavarasha (2013), \emph{Running randomized
  evaluations: A practical guide}. Princeton University Press.
\bibAnnoteFile{glennerster2013}

\bibitem[{Grubinger et~al.(2011)Grubinger, Zeileis, and
  Pfeiffer}]{grubinger2011}
Grubinger, Thomas, Achim Zeileis, and Karl-Peter Pfeiffer (2011),
  \enquote{evtree: Evolutionary learning of globally optimal classification and
  regression trees in r.} Technical report, Working Papers in Economics and
  Statistics.
\bibAnnoteFile{grubinger2011}

\bibitem[{Gyorfi et~al.(1996)Gyorfi, Lugosi, and Devroye}]{gyorfi1996}
Gyorfi, L Devroye~L, Gabor Lugosi, and L~Devroye (1996), \enquote{A
  probabilistic theory of pattern recognition.}
\bibAnnoteFile{gyorfi1996}

\bibitem[{Hahn(1998)}]{hahn1998}
Hahn, Jinyong (1998), \enquote{On the role of the propensity score in efficient
  semiparametric estimation of average treatment effects.} \emph{Econometrica},
  315--331.
\bibAnnoteFile{hahn1998}

\bibitem[{Hahn et~al.(2011)Hahn, Hirano, and Karlan}]{hahn2011}
Hahn, Jinyong, Keisuke Hirano, and Dean Karlan (2011), \enquote{Adaptive
  experimental design using the propensity score.} \emph{Journal of Business \&
  Economic Statistics}, 29, 96--108.
\bibAnnoteFile{hahn2011}

\bibitem[{Hu and Rosenberger(2006)}]{hu2006}
Hu, Feifang and William~F Rosenberger (2006), \emph{The theory of
  response-adaptive randomization in clinical trials}, volume 525. John Wiley
  \& Sons.
\bibAnnoteFile{hu2006}

\bibitem[{Kahneman(2003)}]{kahneman2003}
Kahneman, Daniel (2003), \enquote{Maps of bounded rationality: Psychology for
  behavioral economics.} \emph{The American economic review}, 93, 1449--1475.
\bibAnnoteFile{kahneman2003}

\bibitem[{Kallus(2018)}]{kallus2013}
Kallus, Nathan (2018), \enquote{Optimal a priori balance in the design of
  controlled experiments.} \emph{Journal of the Royal Statistical Society:
  Series B (Statistical Methodology)}, 80, 85--112.
\bibAnnoteFile{kallus2013}

\bibitem[{Karlan and Appel(2016)}]{karlan2016}
Karlan, Dean and Jacob Appel (2016), \emph{Failing in the Field: What We Can
  Learn When Field Research Goes Wrong}. Princeton University Press.
\bibAnnoteFile{karlan2016}

\bibitem[{Karlan and Wood(2017)}]{karlan2017}
Karlan, Dean and Daniel~H Wood (2017), \enquote{The effect of effectiveness:
  Donor response to aid effectiveness in a direct mail fundraising experiment.}
  \emph{Journal of Behavioral and Experimental Economics}, 66, 1--8.
\bibAnnoteFile{karlan2017}

\bibitem[{Karlan and Zinman(2008)}]{karlan2008}
Karlan, Dean~S and Jonathan Zinman (2008), \enquote{Credit elasticities in
  less-developed economies: Implications for microfinance.} \emph{American
  Economic Review}, 98, 1040--68.
\bibAnnoteFile{karlan2008}

\bibitem[{Kasy(2013)}]{kasy2013}
Kasy, Maximilian (2013), \enquote{Why experimenters should not randomize, and
  what they should do instead.}
\bibAnnoteFile{kasy2013}

\bibitem[{Kasy(2016)}]{kasy2016}
Kasy, Maximilian (2016), \enquote{Why experimenters might not always want to
  randomize, and what they could do instead.} \emph{Political Analysis}, 24,
  324--338.
\bibAnnoteFile{kasy2016}

\bibitem[{Kasy and Sautmann(2021)}]{kasy2021}
Kasy, Maximilian and Anja Sautmann (2021), \enquote{Adaptive treatment
  assignment in experiments for policy choice.} \emph{Econometrica}, 89,
  113--132.
\bibAnnoteFile{kasy2021}

\bibitem[{Kitagawa and Tetenov(2018)}]{kitagawa2017}
Kitagawa, Toru and Aleksey Tetenov (2018), \enquote{Who should be treated?
  empirical welfare maximization methods for treatment choice.}
  \emph{Econometrica}, 86, 591--616.
\bibAnnoteFile{kitagawa2017}

\bibitem[{Kuznetsova and Tymofyeyev(2011)}]{kuznetsova2011}
Kuznetsova, Olga~M and Yevgen Tymofyeyev (2011), \enquote{Brick tunnel
  randomization for unequal allocation to two or more treatment groups.}
  \emph{Statistics in medicine}, 30, 812--824.
\bibAnnoteFile{kuznetsova2011}

\bibitem[{Leonardi and Tamanini(2002)}]{leonardi2002}
Leonardi, Gian~Paolo and Italo Tamanini (2002), \enquote{Metric spaces of
  partitions, and caccioppoli partitions.} \emph{Advances in Mathematical
  Sciences and Applications}, 12, 725--753.
\bibAnnoteFile{leonardi2002}

\bibitem[{Manski(2004)}]{manski2004}
Manski, Charles~F (2004), \enquote{Statistical treatment rules for
  heterogeneous populations.} \emph{Econometrica}, 72, 1221--1246.
\bibAnnoteFile{manski2004}

\bibitem[{Manski(2009)}]{manski2009}
Manski, Charles~F (2009), \emph{Identification for prediction and decision}.
  Harvard University Press.
\bibAnnoteFile{manski2009}

\bibitem[{Mbakop and Tabord-Meehan(2016)}]{mbakop2016}
Mbakop, Eric and Max Tabord-Meehan (2016), \enquote{Model selection for
  treatment choice: Penalized welfare maximization.} \emph{arXiv preprint
  arXiv:1609.03167}.
\bibAnnoteFile{mbakop2016}

\bibitem[{Mogstad et~al.(2017)Mogstad, Santos, and Torgovitsky}]{mogstad2017}
Mogstad, Magne, Andres Santos, and Alexander Torgovitsky (2017), \enquote{Using
  instrumental variables for inference about policy relevant treatment
  effects.} Technical report, National Bureau of Economic Research.
\bibAnnoteFile{mogstad2017}

\bibitem[{Narita(2018)}]{narita2018}
Narita, Yusuke (2018), \enquote{Toward an ethical experiment.}
\bibAnnoteFile{narita2018}

\bibitem[{Pukelsheim(2006)}]{pukelsheim2006}
Pukelsheim, Friedrich (2006), \emph{Optimal design of experiments}. SIAM.
\bibAnnoteFile{pukelsheim2006}

\bibitem[{Quistorff and Johnson(2020)}]{quistorff2020}
Quistorff, Brian and Gentry Johnson (2020), \enquote{Machine learning for
  experimental design: Methods for improved blocking.} \emph{arXiv preprint
  arXiv:2010.15966}.
\bibAnnoteFile{quistorff2020}

\bibitem[{Redner et~al.(1981)}]{redner1981}
Redner, Richard et~al. (1981), \enquote{Note on the consistency of the maximum
  likelihood estimate for nonidentifiable distributions.} \emph{Annals of
  Statistics}, 9, 225--228.
\bibAnnoteFile{redner1981}

\bibitem[{Rosenberger and Lachin(2015)}]{rosenberger2015}
Rosenberger, William~F and John~M Lachin (2015), \emph{Randomization in
  clinical trials: theory and practice}. John Wiley \& Sons.
\bibAnnoteFile{rosenberger2015}

\bibitem[{Ryan et~al.(2016)Ryan, Drovandi, McGree, and Pettitt}]{ryan2016}
Ryan, Elizabeth~G, Christopher~C Drovandi, James~M McGree, and Anthony~N
  Pettitt (2016), \enquote{A review of modern computational algorithms for
  bayesian optimal design.} \emph{International Statistical Review}, 84,
  128--154.
\bibAnnoteFile{ryan2016}

\bibitem[{Silvey(2013)}]{silvey2013}
Silvey, Samuel (2013), \emph{Optimal design: an introduction to the theory for
  parameter estimation}, volume~1. Springer Science \& Business Media.
\bibAnnoteFile{silvey2013}

\bibitem[{Simester et~al.(2006)Simester, Sun, and Tsitsiklis}]{simester2006}
Simester, Duncan~I, Peng Sun, and John~N Tsitsiklis (2006), \enquote{Dynamic
  catalog mailing policies.} \emph{Management science}, 52, 683--696.
\bibAnnoteFile{simester2006}

\bibitem[{Smith(1918)}]{smith1918}
Smith, Kirstine (1918), \enquote{On the standard deviations of adjusted and
  interpolated values of an observed polynomial function and its constants and
  the guidance they give towards a proper choice of the distribution of
  observations.} \emph{Biometrika}, 12, 1--85.
\bibAnnoteFile{smith1918}

\bibitem[{Song and Yu(2014)}]{song2014}
Song, Kyungchul and Zhengfei Yu (2014), \enquote{Efficient estimation of
  treatment effects under treatment-based sampling.}
\bibAnnoteFile{song2014}

\bibitem[{Sverdlov(2015)}]{sverdlov2015}
Sverdlov, Oleksandr (2015), \emph{Modern adaptive randomized clinical trials:
  statistical and practical aspects}, volume~81. CRC Press.
\bibAnnoteFile{sverdlov2015}

\bibitem[{Van Der~Vaart(1996)}]{van1996NEW}
Van Der~Vaart, Aad (1996), \enquote{New donsker classes.} \emph{The Annals of
  Probability}, 24, 2128--2140.
\bibAnnoteFile{van1996NEW}

\bibitem[{Van~der Vaart(1998)}]{van1998}
Van~der Vaart, Aad~W (1998), \emph{Asymptotic statistics}, volume~3. Cambridge
  university press.
\bibAnnoteFile{van1998}

\bibitem[{Van~der Vaart and Wellner(1996)}]{van1996}
Van~der Vaart, Aad~W and Jon~A Wellner (1996), \enquote{Weak convergence.} In
  \emph{Weak Convergence and Empirical Processes}, 16--28, Springer.
\bibAnnoteFile{van1996}

\bibitem[{Viviano(2020)}]{viviano2020}
Viviano, Davide (2020), \enquote{Experimental design under network
  interference.} \emph{arXiv preprint arXiv:2003.08421}.
\bibAnnoteFile{viviano2020}

\bibitem[{Wei(1978)}]{wei1978}
Wei, Lee-Jen (1978), \enquote{The adaptive biased coin design for sequential
  experiments.} \emph{The Annals of Statistics}, 92--100.
\bibAnnoteFile{wei1978}

\bibitem[{Zelen(1974)}]{zelen1974}
Zelen, Marvin (1974), \enquote{The randomization and stratification of patients
  to clinical trials.} \emph{Journal of chronic diseases}, 27, 365--375.
\bibAnnoteFile{zelen1974}

\end{thebibliography}
\clearpage
\appendix
\section{Proofs of Main Results}\label{sec:appendixA}
The proof of Theorem \ref{thm:main} requires some preliminary machinery which we develop in Appendix \ref{sec:rho}. In this section we take the following facts as given:

\begin{itemize}[topsep = 1pt]
\item We select a representative out of every equivalence class $T \in \mathcal{T}_L$ by defining an explicit labeling of the leaves, which we call the \emph{canonical labeling} (Definition \ref{def:canon}). 
\item We endow $\mathcal{T}_L$ with a metric $\rho(\cdot,\cdot)$ that makes $(\mathcal{T}_L, \rho)$ a compact metric space (Definition \ref{def:rho1}, Lemma \ref{lem:rho_complete}, Lemma \ref{lem:rho_tb}).
\item We prove that $V(\cdot)$ is continuous in $\rho$ (Lemma \ref{lem:V_cont}).
\item Let $\mathcal{T}^*_L$ be the set of minimizers of $V(\cdot)$, then this set is compact (in the topology induced by $\rho$), and it is the case given our assumptions that 
$$\inf_{T^* \in \mathcal{T}^*_L} \rho(\hat{T}_m,T^*) \xrightarrow{p} 0~,$$
as $m \rightarrow \infty$. Furthermore, there exists a sequence of $\sigma\{(W_i)_{i=1}^m\}/\mathcal{B}(\mathcal{T}_L)$-measurable trees $\bar{T}_m \in \mathcal{T}^*_L$ such that
$$\rho(\hat{T}_m,\bar{T}_m) \xrightarrow{p} 0~.$$
(Lemma \ref{lem:S_const})
\end{itemize}


\noindent{\bf Proof of Theorem \ref{thm:main}}
\begin{proof}
We want to show that for all $t \in \mathbb{R}$,
\[E\left[{\bf 1}\{\sqrt{n}(\hat{\theta}(\hat{T}_m) - \theta) \le t\}\right] \rightarrow \Phi^*(t)~,\] 
where $\Phi^*(t)$ is the CDF of a $N(0, V^*)$ random variable. Let $E_1[\cdot]$ and $E_2[\cdot]$ denote the expectations with respect to the first wave and second wave data, respectively. Fix a strictly increasing indexing $(n_1, m_1) < \ldots < (n_{\ell}, m_{\ell}) < \ldots$ (where the inequality is interpreted componentwise). By Lemma \ref{lem:S_const}, 
\[\rho(\hat{T}_{m_\ell}, \bar{T}_{m_\ell}) \xrightarrow{p} 0~,\]
and hence there exists a subsequence of this sequence (which by an abuse of notation we continue to index by $m_\ell$ with corresponding index $n_\ell$) for which this convergence holds almost surely. By Lemma \ref{lem:main_1},
\[E_2[{\bf 1}\{\sqrt{n_\ell}(\hat{\theta}(\hat{T}_{m_\ell}) - \theta) \le t\}] \xrightarrow{a.s} \Phi^*(t)~.\]
By the dominated convergence theorem, we get that
\[E_1[E_2[{\bf 1}\{\sqrt{n_\ell}(\hat{\theta}(\hat{T}_{m_\ell}) - \theta) \le t\}]] \rightarrow \Phi^*(t)~.\]
By Fubini's theorem,
\[E[{\bf 1}\{\sqrt{n_\ell}(\hat{\theta}(\hat{T}_{m_\ell}) - \theta) \le t\}] = E_1[E_2[{\bf 1}\{\sqrt{n_\ell}(\hat{\theta}(\hat{T}_{m_\ell}) - \theta) \le t\}]] \rightarrow \Phi^*(t)~.\]
Hence by Lemma \ref{lem:double_index}, the result follows.
\end{proof}

\begin{lemma}\label{lem:main_1}
Let $\{T^{(1)}_m\}_m$ be a sequence of trees such that there exists a sequence $\{T^{(2)}_m\}_m$ where $\rho(T^{(1)}_m, T^{(2)}_m) \rightarrow 0$, and $T^{(2)}_m \in \mathcal{T}^*_L$ for all $m$. Given the Assumptions required for Theorem \ref{thm:main}, 
\[\sqrt{n}(\hat{\theta}(T^{(1)}_m) - \theta) \xrightarrow{d} N(0, V^*)~.\]
\end{lemma}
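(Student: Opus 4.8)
The plan is to split the argument into a deterministic statement about the variance functional and a central-limit statement for the estimator along a varying sequence of trees. First I would invoke the continuity of $V(\cdot)$ with respect to the metric $\rho$ (Lemma \ref{lem:V_cont}): since $\rho(T^{(1)}_m, T^{(2)}_m) \to 0$ and each $T^{(2)}_m \in \mathcal{T}^*$ attains the minimal value $V^*$, continuity gives $|V(T^{(1)}_m) - V^*| = |V(T^{(1)}_m) - V(T^{(2)}_m)| \to 0$. This reduces the lemma to the following claim: for any deterministic sequence $T_m$ with $V(T_m) \to V^*$, we have $\sqrt{n}(\hat{\theta}(T_m) - \theta) \xrightarrow{d} N(0, V^*)$ as $m, n \to \infty$.

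To prove this claim I would establish asymptotic normality for the \emph{varying} tree through the sequential-empirical-process representation of Lemma \ref{lem:main_2}. The obstacle forcing this machinery is twofold: the strata defining $\hat{\theta}(T_m)$ change with $m$, so the fixed-$T$ central limit theorem of \cite{bugni2017} cannot be applied directly; and the restricted randomization permitted by Assumptions \ref{ass:indep_treat} and \ref{ass:unif_treat} makes the within-stratum assignments dependent, so no elementary triangular-array CLT is available. Following the strategy outlined in Remark \ref{rem:tech_challenge}, I would decompose $\sqrt{n}(\hat{\theta}(T) - \theta)$ into a between-strata term and a within-strata term and re-express each as a functional of a sequential empirical process indexed jointly by a partial-sum index in $[0,1]$ and by the tree $T \in \mathcal{T}_L$ (entering through the strata indicators $\mathbf{1}\{S(X) = k\}$ and the treatment labels). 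The decisive structural inputs are that, for each fixed $k$, the class $\{S^{-1}(k) : S \in \mathcal{S}_L\}$ consists of rectangles and hence is VC, together with the boundedness of $Y(a)$ in Assumption \ref{ass:bounded}; these yield both weak convergence of the process and its asymptotic equicontinuity with respect to $\rho$. Establishing this equicontinuity uniformly over $\mathcal{T}_L$, while simultaneously controlling the dependent within-stratum assignments via Assumption \ref{ass:unif_treat}, is the genuinely hard step.

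Granting Lemma \ref{lem:main_2}, I would finish as follows. Let $H_n(t;T)$ be the cdf of $\sqrt{n}(\hat{\theta}(T) - \theta)$ and $\Phi(t;T)$ the cdf of $N(0, V(T))$. The equicontinuity delivers the sequence-level statement $|H_n(t; T_m) - \Phi(t; T_m)| \to 0$, which is also the mechanism behind Proposition \ref{prop:theta_robust}. Writing
\[
|H_n(t; T_m) - \Phi^*(t)| \le |H_n(t; T_m) - \Phi(t; T_m)| + |\Phi(t; T_m) - \Phi^*(t)|~,
\]
the first term vanishes by the preceding display and the second vanishes because $V(T_m) \to V^*$ and the normal cdf depends continuously on its variance (if $V^* = 0$ the limit is the point mass at $0$, handled by the same inequality). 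Hence $H_n(t; T_m) \to \Phi^*(t)$ at every $t$ as $m, n \to \infty$, which is the asserted convergence in distribution; applying this to $T_m = T^{(1)}_m$ completes the proof. As noted, I expect the equicontinuity of the sequential empirical process uniformly over the tree class to be the main obstacle, with the variance-continuity reduction and the final triangle-inequality combination being comparatively routine.
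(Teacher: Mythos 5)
Your proposal is correct and follows essentially the same route as the paper: the heavy lifting is the sequential-empirical-process representation of Lemma \ref{lem:main_2}, with the VC property of the rectangle classes and asymptotic equicontinuity doing the work, and the conclusion extracted by a compactness/subsequence argument over $\mathcal{T}^*$ combined with the fixed-tree CLT of \cite{bugni2017}. The only difference is cosmetic: you package the final step as the cdf-level approximation $|H_n(t;T_m)-\Phi(t;T_m)|\to 0$ plus continuity of the normal cdf in its variance, whereas the paper runs the subsequence argument directly at the level of the process $\bar{\mathbb{O}}(T^{(2)}_m)$; the underlying argument is identical.
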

\begin{proof}
By the derivation in the proof of Theorem 3.1 in \cite{bugni2017}, we have that 
$$\sqrt{n}(\hat{\theta}(T^{(1)}_m) - \theta) = \sum_{k=1}^{K}\left[\Omega_1(k; T^{(1)}_m) - \Omega_0(k; T^{(1)}_m)\right]+ \sum_{k=1}^{K}\Theta(k; T^{(1)}_m)~,$$
where
$$\Omega_a(k; T) := \frac{n(k;T)}{n_a(k;T)}\left[\frac{1}{\sqrt{n}}\sum_{i=1}^n{\bf 1}\{A_i(T) = a, S_i = k\}\psi_i(a; T)\right]~,$$ 
with the following definitions:
$$\psi_i(a; T) := Y_i(a) - E[Y_i(a)|S(X)]~,$$ 
$$n(k; T) := \sum_{i=1}^n{\bf 1}\{S_i = k\}~,$$
$$n_a(k; T) := \sum_{i=1}^n {\bf 1}\{A_i(T) = a, S_i = k\}~,$$
and
$$\Theta(k;T) := \sqrt{n}\left(\frac{n(k;T)}{n} - p(k; T)\right)\left[E(Y(1)|S(X) = k) - E(Y(0)|S(X)=k)\right]^2~.$$ 
To prove our result, we study the process
\[ \mathbb{O}_n(T) := \left[\Omega_{0}(1; T) \hspace{2mm} \Omega_{1}(1; T)  \hspace{2mm} \Omega_{0}(2; T) \hspace{1mm} \ldots \hspace{1mm} \Omega_{1}(K; T) \hspace{2mm} \Theta(1; T) \hspace{1mm} \ldots \hspace{1mm} \Theta(K; T)\right]'~.\]
By Lemma \ref{lem:main_2}, we have that 
$$\mathbb{O}_n(T^{(1)}_m)  \,{\buildrel d \over =}\, \bar{\mathbb{O}}_n(T^{(2)}_m) + o_P(1)~,$$
where $\bar{\mathbb{O}}_n(\cdot)$ is defined in Lemma \ref{lem:main_2}. Hence
$$\sqrt{n}(\hat{\theta}(T^{(1)}_m) - \theta)  \,{\buildrel d \over =}\, \Psi_n(T^{(2)}_m) + o_P(1)~,$$
where
\[\Psi_n(T^{(2)}_m) = B'\bar{\mathbb{O}}_n(T^{(2)}_m)~,\]
and $B$ is the appropriate vector of ones and negative ones such that $B'\mathbb{O}_n(T) = \sqrt{n}(\hat{\theta}(T) - \theta)$.
It remains to show that $\Psi_n(T^{(2)}_m) \xrightarrow{d} N(0,V^*)$, and then the result will follow. To that end, fix a strictly increasing indexing $(n_1, m_1) < ... < (n_\ell, m_\ell) < ... $ (where the inequality is to be interpreted componentwise). By  the compactness of $\mathcal{T}^*_L$ (Lemma \ref{lem:S_const}), $\{T^{(2)}_{m_\ell}\}$ contains a convergent subsequence (which by an abuse of notation we continue to index by $m_\ell$, with corresponding index $n_\ell$), so that:
$$T^{(2)}_{m_{\ell}} \rightarrow T^*~,$$ 
for some $T^* \in \mathcal{T}^*_L$.
By the asymptotic equicontinuity of $\bar{\mathbb{O}}_n(\cdot)$ established in Lemma \ref{lem:main_2}, we have that
\[\bar{\mathbb{O}}_{n_\ell}(T^{(2)}_{m_{\ell}}) = \bar{\mathbb{O}}_{n_\ell}(T^*) + o_P(1)~,\]
and by the partial sum arguments in Lemma C.1. of \cite{bugni2017}, it follows that
\[\Psi_{n_\ell}(T^*) \xrightarrow{d} N(0,V(T^*))~,\]
and $V(T^*) = V^*$ since $T^*$ is an optimal tree.
Hence we have that
\[\Psi_{n_\ell}(T^{(2)}_{m_{\ell}}) \xrightarrow{d} N(0,V^*)~.\]
By Lemma \ref{lem:double_index} (applied to the CDFs), we get that
\[\Psi_n(T^{(2)}_m) \xrightarrow{d} N(0,V^*)~,\]
as $m, n \rightarrow \infty$, and so the result follows.
\end{proof}

\begin{lemma}\label{lem:main_2}
Let $\{T^{(1)}_m\}_m$ be a sequence of trees such that there exists a sequence $\{T^{(2)}_m\}_m$ where $\rho(T^{(1)}_m, T^{(2)}_m) \rightarrow 0$, and $T^{(2)}_m \in \mathcal{T}^*_L$ for all $m$. Given the Assumptions required for Theorem \ref{thm:main}, 
$$\mathbb{O}_n(T^{(1)}_m)  \,{\buildrel d \over =}\, \bar{\mathbb{O}}_n(T^{(2)}_m) + o_P(1)~,$$
as $n \rightarrow \infty$, where $\mathbb{O}_n(\cdot)$ is defined in the proof of Lemma \ref{lem:main_1} and $\bar{\mathbb{O}}_n(\cdot)$ is defined in the proof of this result.
\end{lemma}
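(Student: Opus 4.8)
The plan is to realize both $\mathbb{O}(\cdot)$ and its idealized counterpart $\bar{\mathbb{O}}(\cdot)$ as evaluations of a single sequential empirical process, and then to convert the closeness $\rho(T^{(1)}_m, T^{(2)}_m) \to 0$ into closeness of the processes via asymptotic equicontinuity. Concretely, I would first build $\bar{\mathbb{O}}(\cdot)$ as follows. By the exogeneity in Assumption \ref{ass:indep_treat}, conditional on the label vector $S^{(n)}$ the assignment $A^{(n)}(T)$ is independent of the potential outcomes, so within each stratum the centered outcomes $\psi_i(a;T)$ attached to the units receiving arm $a$ are exchangeable draws from the law of $Y(a)\mid S(X)=k$. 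Following the partial-sum device of \cite{bugni2017} but now carrying the tree $T$ as an index, I would re-express each bracketed sum $\frac{1}{\sqrt n}\sum_i \mathbf{1}\{A_i(T)=a, S_i=k\}\psi_i(a;T)$ as the increment of a sequential empirical process $\mathbb{Z}_n(u,f)$ evaluated at the time $u = n_a(k;T)/n$ and the function $f = \psi(a,\cdot)\mathbf{1}\{S(\cdot)=k\}$. The prefactor $n(k;T)/n_a(k;T)$ and the endpoint $u$ are controlled by Assumption \ref{ass:unif_treat}, which forces $n_a(k;T)/n$ to converge uniformly over $\mathcal{T}_L$ to $\pi(k)^{a}(1-\pi(k))^{1-a} p(k;T)$; $\bar{\mathbb{O}}(T)$ is then the version of this object with the random times replaced by their limits. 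The components $\Theta(k;\cdot)$ fit the same framework, since $\sqrt n(n(k;T)/n - p(k;T))$ is the ordinary empirical process indexed by the cell $S^{-1}(k)$.

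Step one is the distributional representation $\mathbb{O}(T) \,{\buildrel d \over =}\, \bar{\mathbb{O}}(T)$, up to $o_P(1)$ and at each $T$, which is precisely the generalization of the fixed-stratification argument of \cite{bugni2017} to a tree-indexed family. The coupling is legitimate because, for each fixed $k$, the cells $\{S^{-1}(k) : S \in \mathcal{S}_L\}$ are rectangles and hence form a VC class, so the associated index classes are well behaved and the exchangeable-to-i.i.d.\ passage goes through. Step two records that $\mathbb{Z}_n(u,f)$ is asymptotically equicontinuous on $[0,1]\times\mathcal{F}$: the index class $\mathcal{F}$ is Donsker, being built from the VC class of rectangular cells together with bounded functions of $(Y(0),Y(1))$ (Assumption \ref{ass:bounded}), and sequential empirical processes over Donsker classes are tight in the sense of \citet[Section 2.12.1]{van1996}.

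The main obstacle is step three: translating $\rho(T^{(1)}_m,T^{(2)}_m)\to 0$ into continuity of the index maps $T \mapsto (u_a(k;T), f_a(k;T))$ that feed into $\mathbb{Z}_n$. Continuity of the time coordinate reduces to $|p(k;T)-p(k;T')|\to 0$, which is immediate from the Frechet--Nikodym construction of $\rho$, since $p(k;\cdot)$ is the $Q$-measure of the cell. Continuity of the function coordinate requires two things: that the cell indicators $\mathbf{1}\{S(\cdot)=k\}$ be close in $L^2(Q)$, again delivered by $\rho$ through the smallness of the symmetric difference of the cells; and that the centering means $E[Y(a)\mid S(X)=k]$ vary continuously as the partition moves. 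This last point is the delicate one, because the conditional mean need not be stable when a cell boundary sweeps across a region of high density, and it is exactly here that Assumption \ref{ass:quant_holder} is used: pointwise equicontinuity of the family of conditional quantile functions of $Y(a)\mid S(X)=k$ prevents these conditional means from jumping as $S$ varies within $\mathcal{S}_L$, so $f_a(k;\cdot)$ is $\rho$-continuous.

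Combining the equicontinuity of $\mathbb{Z}_n$ with the $\rho$-continuity of the index maps yields $\|\bar{\mathbb{O}}(T^{(1)}_m)-\bar{\mathbb{O}}(T^{(2)}_m)\| = o_P(1)$ whenever $\rho(T^{(1)}_m,T^{(2)}_m)\to 0$. Together with the representation of step one, this gives $\mathbb{O}(T^{(1)}_m)\,{\buildrel d \over =}\,\bar{\mathbb{O}}(T^{(2)}_m)+o_P(1)$, as claimed.
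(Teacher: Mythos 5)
Your overall strategy is the same as the paper's: couple $\mathbb{O}(\cdot)$ to a tree-indexed sequential empirical process, establish that the index class is Donsker so the process is asymptotically equicontinuous in the sense of \citet[Section 2.12.1]{van1996}, and then convert $\rho(T^{(1)}_m,T^{(2)}_m)\to 0$ into closeness of the time and function coordinates (the time coordinate via Assumption \ref{ass:unif_treat} and the VC property of the rectangular cells, the $\Theta(k;\cdot)$ terms via the ordinary empirical process over those cells). However, there is a genuine gap in your representation step. You propose to evaluate the sequential process at the function $f=\psi(a,\cdot)\mathbf{1}\{S(\cdot)=k\}$ and the time $u=n_a(k;T)/n$, with the partial sum taken over the original data. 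This does not implement the coupling: the summation in $\Omega_a(k;T)$ runs over the units with $A_i(T)=a$ and $S_i=k$, which under restricted randomization (e.g.\ SBR) is neither an initial segment $\{1,\dots,\lfloor nu\rfloor\}$ of an i.i.d.\ array nor expressible by absorbing $\mathbf{1}\{A_i(T)=a\}$ into the function index, since the $A_i(T)$ are dependent across $i$ and are not functions of $W_i$ alone. The exchangeable-to-i.i.d.\ passage you allude to requires re-ordering the sample and representing the cell-$(a,k)$ outcomes as quantile transforms $G^k_a(U_{i,(a)}(k);T)$ of fresh i.i.d.\ uniforms, laid out in contiguous blocks $[\,n(\hat F(k;T)+\hat F_a(k;T)),\,n(\hat F(k;T)+\hat F_{a+1}(k;T))\,]$; the function class of the sequential process is then the class of conditional quantile functions $\{G^k_a(\cdot\,;T):T\in\mathcal{T}\}$ (Donsker by monotonicity and uniform boundedness, Lemma \ref{lem:G_donsker}), not your class of centered indicator products.

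This mislocation of the index class also leads you to misattribute the role of Assumption \ref{ass:quant_holder}. Continuity of the conditional means $E[Y(a)\mid S(X)=k]$ in $\rho$ does \emph{not} require that assumption; it follows from boundedness and the symmetric-difference bound alone, exactly as in Lemma \ref{lem:V_cont}. Where Assumption \ref{ass:quant_holder} actually bites is in Lemma \ref{lem:G_limit}: because the process is indexed by quantile functions, one must upgrade pointwise convergence of the conditional CDFs $Z^k_a(\cdot\,;T^{(1)}_m)-Z^k_a(\cdot\,;T^{(2)}_m)\to 0$ (which the $\rho$-metric delivers) to $L^2$ convergence of the corresponding quantile functions, and that inversion step is precisely what the pointwise equicontinuity of the quantile family licenses (Lemma \ref{lem:Q_conv}). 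To repair your argument you would need either to adopt the paper's quantile-coupling representation, or to justify rigorously why a process indexed by $\psi(a,\cdot)\mathbf{1}\{S(\cdot)=k\}$ can absorb the dependent assignment indicators; as written, the second route fails for exactly the randomization schemes (such as SBR) that the theorem is designed to cover.
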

\begin{proof}
By the argument in Lemma C1 in \cite{bugni2017}, we have that 
$$\mathbb{O}_n(T) \,{\buildrel d \over =}\, \widetilde{\mathbb{O}}_n(T)~,$$ 
where 
\[ \widetilde{\mathbb{O}}_n(T) := \left[\widetilde{\Omega}_{0}(1; T) \hspace{2mm} \widetilde{\Omega}_{1}(1; T)  \hspace{2mm} \widetilde{\Omega}_{0}(2; T) \hspace{1mm} \ldots \hspace{1mm} \widetilde{\Omega}_{1}(K; T) \hspace{2mm} \Theta(1; T) \hspace{1mm} \ldots \hspace{1mm} \Theta(K; T)\right]'~.\]
with
$$\widetilde{\Omega}_a(k; T) := \frac{n(k;T)}{n_a(k;T)}\left[\frac{1}{\sqrt{n}}\sum_{i=n(\hat{F}(k;T)+\hat{F}_a(k;T))+1}^{n(\hat{F}(k;T)+\hat{F}_{a+1}(k;T))}G^k_a(U_{i,(a)}(k); T)\right]~,$$
with the following definitions: $\{U_{i,(a)}(k)\}_{i=1}^N$ are i.i.d $U[0,1]$ random variables generated independently of everything else, and independently across pairs $(a,k)$, $G^{k}_a(\cdot \hspace{1mm}; T)$ is the quantile function of the distribution of $\psi(a; T)|S(X) = k$ \citep[where here we use the quantile transformation of a uniform random variable: see][Chapter 21]{van1998}, $\hat{F}(k;T) := \frac{1}{n}\sum_{i=1}^n {\bf 1}\{S_i < k\}$, and $\hat{F}_a(k;T) := \frac{1}{n}\sum_{i=1}^n{\bf 1}\{S_i=k, A_i < a\}$.

Let us focus on the term in brackets. Fix some $a$ and $k$ for the time being, and let $$\mathcal{G} := \{G^{k}_a(\cdot \hspace{1mm}; T): T \in \mathcal{T}\}$$ be the class of all the inverse CDFs defined above, then the empirical process $\eta_n:[0,1]\times\mathcal{G}\rightarrow\mathbb{R}$ defined by
$$\eta_n(u,f) := \frac{1}{\sqrt{n}}\sum_{i=1}^{\lfloor nu \rfloor}f(U_i)~,$$
is known as the \emph{sequential empirical process} (see \cite{van1996}) (note that by construction $E[f(U_i)] = 0$). By Theorem 2.12.1 in \cite{van1996}, $\eta_n$ converges in distribution to a tight limit in $\ell^\infty([0,1]\times\mathcal{G})$ if $\mathcal{G}$ is Donsker, which follows by Lemma \ref{lem:G_donsker}. It follows that $\eta_n$ is asymptotically equicontinuous in the natural (pseudo) metric $$d\left((u,f),(v,g)\right) = |u-v| + \rho_P(f,g)~,$$ where $\rho_P$ is the variance pseudometric. Note that since $U_i \sim U[0,1]$ and $E[f(U_i)] = 0$ for all $f \in \mathcal{G}$, $\rho_P$ is equal to the $L^2$ norm $||\cdot||$. Define  $F(k;T) := P(S(X) < k)$ and $F_a(k;T) := \sum_{j < a}p(k;T)\pi_j(k)$, where $\pi_0(k) := 1-\pi(k)$, $\pi_1(k) := \pi$, then it follows by Lemmas \ref{lem:F_limit}, and \ref{lem:G_limit} that:
$$|\hat{F}_a(k;T^{(1)}_m) - F_a(k;T^{(2)}_m)| \xrightarrow{p} 0~,$$
$$|\hat{F}(k;T^{(1)}_m) - F(k;T^{(2)}_m)| \xrightarrow{p} 0~,$$
$$||G^{k}_a(\cdot \hspace{1mm}; T^{(1)}_m) - G^{k}_a(\cdot \hspace{1mm}; T^{(2)}_m)|| \rightarrow 0~,$$
as $m \rightarrow \infty$.
Hence we have by asymptotic equicontinuity that
$$\eta_n\left(\hat{F}(k;T^{(1)}_m) + \hat{F}_a(k;T^{(1)}_m),G^{k}_a(\cdot \hspace{1mm};T^{(1)}_m)\right) = \eta_n\left(F(k;T^{(2)}_m) + F_a(k;T^{(2)}_m),G^{k}_a(\cdot \hspace{1mm}; T^{(2)}_m)\right) + o_P(1)~.$$
By Lemma \ref{lem:N_limit},
$$\frac{n(k;T^{(1)}_m)}{n_a(k;T^{(1)}_m)} = \frac{1}{\pi(k; T^{(2)}_m)} + o_P(1)~.$$
Using the above two expressions, it can be shown that
$$\widetilde{\Omega}_a(k; T^{(1)}_m) = \bar{\Omega}_a(k;T^{(2)}_m) + o_P(1)~,$$
where
$$ \bar{\Omega}_a(k; T) := \frac{1}{\pi(k; T)}\left[\frac{1}{\sqrt{n}}\sum_{i=\lfloor n(F(k;T) + F_a(k;T))\rfloor+1}^{\lfloor n(F(k;T)+F_{a+1}(k;T))\rfloor}G^{k}_{a}(U_{i,(a)}(k); T)\right]~.$$
Now we turn our attention to $\Theta(k; T)$. By standard empirical process results for 
$$\sqrt{n}\left(\frac{n(k;T)}{n} - p(k; T)\right)~,$$
it can be shown that
$$\Theta(k; T^{(1)}_m) = \Theta(k;T^{(2)}_m) + o_P(1)~,$$
since the class of indicators $\{{\bf 1}\{S(X) = k\}:S \in \mathcal{S}\}$ is Donsker for each $k$ (since the partitions are rectangles and hence for a fixed $k$ we get a VC class).
Finally, let 
  \[ \bar{\mathbb{O}}_n(T) := \left[\bar{\Omega}_{0}(1; T) \hspace{2mm} \bar{\Omega}_{1}(1; T)  \hspace{2mm} \bar{\Omega}_{0}(2; T) \hspace{1mm} \ldots \hspace{1mm} \bar{\Omega}_{1}(K; T) \hspace{2mm} \Theta(1; T) \hspace{1mm} \ldots \hspace{1mm} \Theta(K; T)\right]'~.\]
then we have shown that
$$\mathbb{O}_n(T^{(1)}_m) \,{\buildrel d \over =}\, \bar{\mathbb{O}}_n(T^{(2)}_m) + o_P(1),$$ 
as desired.
\end{proof}

%

\noindent{\bf Proof of Theorem \ref{thm:var_const}}
\begin{proof}
As in the proof of Lemma \ref{lem:main_1}, let $\{T_m^{(1)}\}_m$ be a sequence of trees such that there exists a sequence $\{T_m^{(2)}\}_m$ where $\rho(T_m^{(1)}, T_m^{(2)}) \rightarrow 0$, and $T_m^{(2)} \in \mathcal{T}^*_L$ for all $m$. By adapting the derivation in Theorem 3.3 of \cite{bugni2017} using the same techniques developed in the proof of Theorem \ref{thm:main}, it can be shown that
$$\hat{V}_H(T_m^{(1)}) = V_H(T_m^{(2)}) + o_P(1)~,$$
\[\hat{V}_Y(T_m^{(1)}) = V_Y(T_m^{(2)}) + o_P(1)~,\]
where
\[V_H(T) = \sum_k P(S(X) = k)\left(E[Y(1) - Y(0)|S(X) = k] - E[Y(1) - Y(0)]\right)^2~,\]
\[V_Y(T) = \sum_k P(S(X) = k)\left(\frac{\sigma^2_0(k)}{1 - \pi(k)} + \frac{\sigma^2_1(k)}{\pi(k)}\right)~.\]
Since $T_m^{(2)} \in \mathcal{T}^*_L$, it follows immediately that 
\[E_2\left[{\bf 1}\left\{\left|\hat{V}(T_m^{(1)}) - V^*\right| > \epsilon\right\}\right] \rightarrow 0~,\]
where $E_1[\cdot]$ and $E_2[\cdot]$ denote the expectations with respect to the first wave and second wave data, respectively. Applying the above result to $\hat{T}_m$, and arguing as in the proof of Theorem \ref{thm:main} using Lemmas \ref{lem:S_const}, \ref{lem:double_index}, the dominated convergence theorem and Fubini's theorem, it can be shown that 
\[P\left(\left|\hat{V}(\hat{T}_m) - V^*\right| > \epsilon\right) \rightarrow 0~,\]
as desired.
\end{proof}

\noindent{\bf Proof of Proposition \ref{prop:SBR_unif}}
\begin{proof}
By definition,
$$\frac{n_1(k)}{n} = \frac{\left\lfloor n(k)\pi(k)\right\rfloor}{n}~.$$
We bound the floor function from above and below:
$$\pi(k)\frac{n(k)}{n} - \frac{1}{n} < \frac{n_1(k)}{n} \le \pi(k)\frac{n(k)}{n}~.$$
We consider the upper bound (the lower bound proceeds identically). It suffices to show that
$$\sup_{T \in \mathcal{T}_L}\left|\frac{n(k;T)}{n} - p(k; T)\right| \xrightarrow{p} 0~.$$
Since the partitions are rectangles, for a fixed $k$ we get a VC class and hence by the Glivenko-Cantelli theorem the result follows.
\end{proof}

\noindent{\bf Proof of Proposition \ref{prop:emp_const}}
\begin{proof}
Recall that we define $\widetilde{V}_m(T)$ as follows:
\[\widetilde{V}_m(T) := \sum_{k=1}^K\frac{m(k;T)}{m}\left[\left(\hat{E}[Y(1) - Y(0)|S(X) = k] - \hat{E}[Y(1) - Y(0)]\right)^2 + \left(\frac{\hat{\sigma}^2_{0,S}(k)}{1 - \pi(k)} + \frac{\hat{\sigma}^2_{1,S}(k)}{\pi(k)}\right)\right]\]
\[ =  \sum_{k=1}^K\frac{m(k;T)}{m}\left[\left(\hat{\theta}_S(k) - \hat{\theta}\right)^2 + \left(\frac{\hat{\sigma}^2_{0,S}(k)}{1 - \pi(k)} + \frac{\hat{\sigma}^2_{1,S}(k)}{\pi(k)}\right)\right]~,\]
with
\[\hat{\theta}_S(k) = \hat{E}[Y(1) - Y(0)|S(X) = k]~,\]
\[\hat{\theta} = \hat{E}[Y(1)] - \hat{E}[Y(0)] ~.\]
First note that, for a given realization of the data, there exists an optimal choice of $\pi$ for every $S \in \mathcal{S}_L$ by continuity of $\widetilde{V}_m(T)$ in $\pi$ (which we'll call $\pi^*_S$), so our task is to choose $(S, \pi^*_S)$ to minimize $\widetilde{V}_m(T)$. Given this, note that for a given realization of the data, the empirical objective $\widetilde{V}_m(T)$ can take on only finitely many values, and hence a minimizer $\hat{T}_m$ exists. In what follows, let $\{\hat{T}_m\}_m$ be a sequence of empirical minimizers constructed from this procedure.

Now, let $T^*$ be a minimizer of $V(T)$ (which exists by Lemma \ref{lem:S_const}), and let $\hat{T}_m$ be a minimizer of $\widetilde{V}_m(T)$ (which exists by the argument in the first paragraph of the proof) then
\begin{align*}
V(\hat{T}_m) - V(T^*)
&=V(\hat{T}_m) - \widetilde{V}_m(\hat{T}_m) + \widetilde{V}_m(\hat{T}_m) - V(T^*) \\
&\le V(\hat{T}_m) - \widetilde{V}_m(\hat{T}_m) + \widetilde{V}_m(T^*) - V(T^*)\\
&\le 2\sup_{T \in \mathcal{T}_L}|\widetilde{V}_m(T) - V(T)| ~.
\end{align*}
So if we can show 
$$\sup_{T \in \mathcal{T}_L} |\widetilde{V}_m(T) - V(T)| \xrightarrow{p} 0~,$$
then we are done. We prove this in Lemma \ref{lem:emp_unif} below.
\end{proof}


\noindent{\bf Proof of Proposition \ref{prop:theta_robust}}
\begin{proof}
We show that for any deterministic sequence of trees $\{T_m\}_m$ in $\mathcal{T}_L$, 
\[\lim_{n,m \rightarrow \infty}E_2[\phi_n(T_m)] = \alpha~,\]
under $H_0$, where $E_2[\cdot]$ denotes the expectation with respect to the second wave of data. The result then follows by Fubini's theorem and the dominated convergence theorem. To that end, we argue that
\[\frac{\sqrt{n}(\hat{\theta}(T_m) - \theta)}{\sqrt{\hat{V}(T_m)}} \xrightarrow{d} N(0,1)~.\]
As in the proof of Lemma \ref{lem:main_1}, fix a strictly increasing indexing $(n_1, m_1) < \ldots < (n_\ell, m_\ell) <\ldots$. By the compactness of $\mathcal{T}_L$ (Lemmas \ref{lem:rho_complete}, \ref{lem:rho_tb}), $\{T_{m_\ell}\}$ contains a convergent subsequence (which by an abuse of notation we continue to index by $m_\ell$, with corresponding index $n_\ell$), so that $T_{m_\ell} \rightarrow T'$ for some $T' \in \mathcal{T}_L$. Then as in the proof of Lemma \ref{lem:main_1},
\[\sqrt{n_\ell}(\hat{\theta}(T_{m_\ell}) - \theta) \,{\buildrel d \over =}\, \Psi_{n_\ell}(T') + o_P(1)~,\]
and as in the proof of Theorem \ref{thm:var_const},
\[\hat{V}(T_{m_\ell}) = V(T') + o_P(1)~.\]
By the partial sum arguments in Lemma C.1. of \cite{bugni2017},
\[\Psi_{n_\ell}(T') \xrightarrow{d} N(0, V(T'))~,\]
so by Slutsky's theorem 
\[\frac{\sqrt{n_\ell}(\hat{\theta}(T_{m_\ell}) - \theta)}{\sqrt{\hat{V}(T_{m_\ell})}} \xrightarrow{d} N(0,1)~,\]
and hence the result follows by applying Lemma \ref{lem:double_index} (to the CDFs).
\end{proof}

\noindent{\bf Proof of Theorem \ref{thm:pool_est}}
\begin{proof}
Let $t_1, t_2 \in \mathbb{R}$ be arbitrary, then we will to show that
\[P\left(\sqrt{m}(\hat{\theta}_1 - \theta) \le t_1, \sqrt{n}(\hat{\theta}(\hat{T}_m) - \theta) \le t_2\right) \rightarrow \Phi_1(t_1)\Phi^*(t_2)~,\]
where $\Phi_1(\cdot)$ is the CDF of a $N(0, V_1)$ random variable, and $\Phi^*(\cdot)$ is the CDF of a $N(0, V^*)$ random variable.
The result will then follow by Assumption \ref{ass:m/N} and Slutsky's theorem.
As in the proof of Theorem \ref{thm:main}, let $E_1[\cdot]$ and $E_2[\cdot]$ denote the expectations with respect to the first and second wave data, respectively, then by Fubini's theorem,
\[P\left(\sqrt{m}(\hat{\theta}_1 - \theta) \le t_1, \sqrt{n}(\hat{\theta}(\hat{T}_m) - \theta) \le t_2\right) = E_1\left[E_2\left[{\bf 1}\{\sqrt{m}(\hat{\theta}_1 - \theta) \le t_1\}{\bf 1}\{\sqrt{n}(\hat{\theta}(\hat{T}_m) - \theta) \le t_2\}\right]\right]~.\]
Adding and subtracting $P(\sqrt{m}(\hat{\theta}_1 - \theta) \le t_1)\Phi^*(t_2)$ gives, after some additional algebra,
\begin{align*}
E_1\left[E_2\left[{\bf 1}\{\sqrt{m}(\hat{\theta}_1 - \theta) \le t_1\}{\bf 1}\{\sqrt{n}(\hat{\theta}(\hat{T}_m) - \theta) \le t_2\}\right]\right] &= E_1\left[\left(E_2\left[{\bf 1}\{\sqrt{n}(\hat{\theta}(\hat{T}_m) - \theta) \le t_2\}\right] - \Phi^*(t_2)\right){\bf 1}\{\sqrt{m}(\hat{\theta}_1 - \theta) \le t_1\}\right]  \\
&+ P(\sqrt{m}(\hat{\theta}_1 - \theta) \le t_1)\Phi^*(t_2)~.
\end{align*}

By Assumption \ref{ass:pilot_normal}, we have that
\[P(\sqrt{m}(\hat{\theta}_1 - \theta) \le t_1)\Phi^*(t_2) \rightarrow \Phi_1(t_1)\Phi^*(t_2)~.\]
It remains to show that 
\[E_1\left[\left(E_2\left[{\bf 1}\{\sqrt{n}(\hat{\theta}(\hat{T}_m) - \theta) \le t_2\}\right] - \Phi^*(t_2)\right){\bf 1}\{\sqrt{m}(\hat{\theta}_1 - \theta) \le t_1\}\right] \rightarrow 0~.\]
By the triangle inequality,
\[\left|E_1\left[\left(E_2\left[{\bf 1}\{\sqrt{n}(\hat{\theta}(\hat{T}_m) - \theta) \le t_2\}\right] - \Phi^*(t_2)\right){\bf 1}\{\sqrt{m}(\hat{\theta}_1 - \theta) \le t_1\}\right]\right| \le E_1\left|E_2\left[{\bf 1}\{\sqrt{n}(\hat{\theta}(\hat{T}_m) - \theta) \le t_2\}\right] - \Phi^*(t_2)\right|~.\]
This last term converges to zero as in the proof of Theorem \ref{thm:main} and hence we're done.
\end{proof}

\begin{lemma}\label{lem:emp_unif}
Maintain the assumptions of Proposition \ref{prop:emp_const}. Then
\[\sup_{T \in \mathcal{T}_L} |\widetilde{V}_m(T) - V(T)| \xrightarrow{p} 0~,\]
as $m \rightarrow \infty$.
\end{lemma}
\begin{proof}
Re-write the population-level variance $V(T)$ as follows:
$$V(T) = E[\nu_{T}(X)]~,$$
where
$$\nu_T(x) = \left[\frac{\sigma^2_{1,S}(x)}{\pi(S(x))} - \frac{\sigma^2_{0,S}(x)}{1 - \pi(S(x))} + (\theta_S(x) - \theta)^2\right]~,$$
$$\sigma^2_{a,S}(x) = Var(Y(a)|S(X) = S(x))~,$$
$$\theta_S(x) = E[Y(1) - Y(0)|S(X) = S(x)]~.$$
Given this notation we also introduce the following intermediate quantity:
$$V_m(T) := \frac{1}{m}\sum_{j=1}^m\nu_{T}(X_j)~.$$
By the triangle inequality:
$$\sup_{T\in\mathcal{T}_L}|\widetilde{V}_m(T) - V(T)| \le\sup_{T\in\mathcal{T}_L} |\widetilde{V}_m(T) - V_m(T)| + \sup_{T\in\mathcal{T}_L}|V_m(T) - V(T)|~,$$
so we study each of these in turn. Let us look at the second term on the right hand side. This converges almost surely to zero by the Glivenko-Cantelli theorem, since the sequence $\{X_j\}_{i=1}^m$ is i.i.d and the class of functions $\{\nu_T(\cdot): T \in \mathcal{T}\}$ is Glivenko-Cantelli (this can be seen by the fact that $\nu_T(\cdot)$ can be constructed through appropriate sums, products, differences and quotients of various types of VC-subgraph functions, and by invoking Assumption \ref{ass:prop/cell_size} to avoid potential degeneracies through division). Hence it remains to show that the first term converges to zero in probability. 

Re-writing:
$$\widetilde{V}_m(T) = \sum_{k=1}^K\left[\left(\frac{1}{m}\sum_{j=1}^m{\bf 1}\{S(X_j)=k\}\right)\left(\frac{\hat{\sigma}^2_{1,S}(k)}{\pi(k)} - \frac{\hat{\sigma}^2_{0,S}(k)}{1 - \pi(k)} + (\hat{\theta}_S(k) - \hat{\theta})^2\right)\right]~,$$
and 
$$V_m(T) = \sum_{k=1}^K\left[\left(\frac{1}{m}\sum_{j=1}^m{\bf 1}\{S(X_j)=k\}\right)\left(\frac{\sigma^2_{1,S}(k)}{\pi(k)} - \frac{\sigma^2_{0,S}(k)}{1 - \pi(k)} + (\theta_S(k) - \theta)^2\right)\right]~,$$
where, through an abuse of notation, we define $\sigma^2_{a,S}(k) := Var(Y(a)|S(X)=k)$ etc. By the triangle inequality it suffices to consider each $k \in [K]$ individually. Moreover, since the expression $\frac{1}{m}\sum_{j=1}^m{\bf 1}\{S(X_j)=k\}$ is bounded, we can factor it out and ignore it in what follows.
It can be shown by repeated applications of the triangle inequality, Assumptions \ref{ass:bounded}, \ref{ass:prop/cell_size}, and Lemma \ref{lem:GC_couple}, that
$$\sup_{T \in \mathcal{T}_L}\left|\left(\frac{\hat{\sigma}^2_{1,S}(k)}{\pi(k)} - \frac{\hat{\sigma}^2_{0,S}(k)}{1 - \pi(k)} + (\hat{\theta}_S(k) - \hat{\theta})^2\right) - \left(\frac{\sigma^2_{1,S}(k)}{\pi(k)} - \frac{\sigma^2_{0,S}(k)}{1 - \pi(k)} + (\theta_S(k) - \theta)^2\right)\right| \xrightarrow{p} 0~.$$
Hence, we see that our result follows.
\end{proof}

\begin{lemma}\label{lem:GC_couple}
Maintain the assumptions of Proposition \ref{prop:emp_const}, then for all $a$ and $k$:
\[\sup_{T \in \mathcal{T}_L}\left|\hat{\theta} - \theta\right| \xrightarrow{p} 0~,\]
\[\sup_{T \in \mathcal{T}_L}\left|\hat{\theta}_S(k) - \theta_S(k)\right| \xrightarrow{p} 0~,\]
\[\sup_{T \in \mathcal{T}_L}\left|\hat{\sigma}^2_{a,S}(k) - \sigma^2_{a,S}(k)\right| \xrightarrow{p} 0~,\]
where $\hat{\theta}$, $\hat{\theta}_S(k)$ and $\hat{\sigma}^2_{a,S}(k)$ are defined as in the proof of Proposition \ref{prop:emp_const}.
\end{lemma}
\begin{proof}
We show that, for $f(\cdot)$ a continuous function,
\begin{equation}\label{eq:unif_conv}
\sup_{T \in \mathcal{T}_L}\left|\frac{1}{m_a(k; T)}\sum_{j = 1}^mf(Y_j(a)){\bf 1}\{A_j = a\}{\bf 1}\{S(X_j) = k\} - E[f(Y(a))|S(X) = k]\right| \xrightarrow{p} 0~.
\end{equation}
First note that \emph{if} $\{(Y_j(a),X_j, A_j)\}_j$ were an i.i.d sequence of random variables, then by the Glivenko-Cantelli Theorem
\[\sup_{T \in \mathcal{T}_L}\left|\frac{1}{m}\sum_{j=1}^mf(Y_j(a)){\bf 1}\{A_j = a\}{\bf 1}\{S(X_j) = k\} - E\left[f(Y(a)){\bf 1}\{A = a\}{\bf 1}\{S(X) = k\}\right]\right| \xrightarrow{p} 0~,\]
\[\sup_{T \in \mathcal{T}_L}\left|\frac{m_a(k;T)}{m} - E\left[{\bf 1}\{A = a\}{\bf 1}\{S(X) = k\}\right]\right| \xrightarrow{p} 0 ~.\]
Using Assumptions \ref{ass:bounded} (boundedness of $Y(a)$) and \ref{ass:prop/cell_size}, it follows that
\begin{equation}\label{eq:frac_conv} 
\sup_{T \in \mathcal{T}_L}\left|\frac{1}{m_a(k; T)}\sum_{j = 1}^mf(Y_j(a)){\bf 1}\{A_j = a\}{\bf 1}\{S(X_j) = k\} - \frac{E\left[f(Y(a)){\bf 1}\{A = a\}{\bf 1}\{S(X) = k\}\right]}{E\left[{\bf 1}\{A = a\}{\bf 1}\{S(X) = k\}\right]}\right| \xrightarrow{p} 0~.
\end{equation}
Recall that we denote the stratification function used to generate $\{{A}_j\}_j$ by $\zeta(\cdot)$, and let  $\zeta_j = \zeta(X_j)$. If we further assume that  $E[{\bf 1}\{A = a\}|\zeta = z] = \pi$, then (\ref{eq:unif_conv}) follows since, using the fact that $(Y(1),Y(0),X) \perp A | \zeta$ and the law of iterated expectations:
\[\frac{E\left[f(Y(a)){\bf 1}\{A = a\}{\bf 1}\{S(X) = k\}\right]}{E\left[{\bf 1}\{A = a\}{\bf 1}\{S(X) = k\}\right]} = \frac{\pi E\left[f(Y(a)){\bf 1}\{S(X) = k\}\right]}{\pi E\left[{\bf 1}\{S(X) = k\}\right]} = E[f(Y(a))|S(X) = k]~. \]
However, we \emph{do not} assume that $\{A_j\}_j$  are i.i.d. To proceed, we couple $\{(Y_j(a), X_j, A_j)\}_{j=1}^m$ to a new i.i.d sequence of random variables $\{(\tilde{Y}_{j,m}, \tilde{X}_{j,m}, \tilde{A}_{j,m})\}_{j=1}^m$. Following the constructions in Theorem 4.5 of \cite{bugni2015} and Section 5.3 in \cite{chung2013}, we generate $\{(\tilde{Y}_{j,m}, \tilde{X}_{j,m}, \tilde{A}_{j,m})\}_{j=1}^m$ via the following procedure. Let $\mathcal{J} = \{1, ..., m\}$, $R_m = 0$. For $j = 1, ..., m$:

%

\begin{itemize}
\item Draw a random variable $\tilde{\zeta}_{j,m} \in \{0, 1, \ldots, Z\}$ independent of everything else such that $P(\tilde{\zeta}_{j,m} = z) = P(\zeta(X) = z)$.
\item Draw a random variable $\tilde{A}_{j,m} \in \{0, 1\}$ independent of everything else, such that $P(\tilde{A}_{j,m} = 1) =  \pi$. 
\item If there exists some $j' \in \mathcal{J}$ such that $\zeta_{j'} = \tilde{\zeta}_{j,m}$ and $A_{j'} = \tilde{A}_{j,m}$, set $\tilde{Y}_{j,m} = Y_{j'}(a)$, $\tilde{X}_{j,m} = X_{j'}$, and $\mathcal{J} := \mathcal{J} \setminus \{j'\}$.
Otherwise, if no such $j'$ exists, draw a realization from the distribution of $(Y(a), X) | \zeta = \tilde{\zeta}_{j,m}$, and set $\tilde{Y}_{j,m} = Y(a)$, $\tilde{X}_{j,m} = X$, and $R_m := R_m + 1$.
\end{itemize}


By Assumption \ref{ass:bounded} and the continuity of $f(\cdot)$, $f(\cdot)$ is bounded on the support of $Y(a)$ (and hence $\tilde{Y}_{j,m}$) by some finite $M'$. Hence by the construction of $\{(\tilde{Y}_{j,m}, \tilde{X}_{j,m}, \tilde{A}_{j,m})\}_{j=1}^m$ and the uniform Glivenko-Cantelli Theorem 2.8.1 in \cite{van1996}:\footnotemark[4]
\[\sup_{T \in \mathcal{T}_L}\left|\frac{1}{m}\sum_{j = 1}^mf(\tilde{Y}_{j,m}){\bf 1}\{\tilde{A}_{j,m} = a\}{\bf 1}\{S(\tilde{X}_{j,m}) = k\} - \pi E[f(Y(a)){\bf 1}\{S(X) = k\}]\right| \xrightarrow{p} 0~,\]
\[\sup_{T \in \mathcal{T}_L}\left|\frac{\tilde{m}_a(k;T)}{m} -\pi E[{\bf 1}\{S(X) = k\}]\right| \xrightarrow{p} 0~,\]
\footnotetext[4]{Here we require a uniform Glivenko-Cantelli result since we constructed $\{(\tilde{Y}_{j,m}, \tilde{X}_{j,m}, \tilde{A}_{j,m})\}_{j=1}^m$ as a triangular array. Since our class of functions is VC subgraph we do obtain such a result: see Chapter 2.8 of \cite{van1996} for details.}
where $\tilde{m}_a(k;T) := \sum_{j=1}^m{\bf 1}\{\tilde{A}_{j,m} = a\}{\bf 1}\{S(\tilde{X}_{j,m}) = k\}$. Moreover from the construction of the coupling we have that $R_m$ upper-bounds the number of observations in $\{(\tilde{Y}_{j,m}, \tilde{X}_{j,m}, \tilde{A}_{j,m})\}_{j=1}^m$ which are not identically equal to some element in $\{(Y_j(a), X_j, A_j)\}_{j=1}^m$. Hence by the triangle inequality,
\[\sup_{T \in \mathcal{T}_L}\left|\frac{1}{m}\sum_{j = 1}^mf(Y_j(a)){\bf 1}\{A_j = a\}{\bf 1}\{S(X_j) = k\} - \frac{1}{m}\sum_{j=1}^mf(\tilde{Y}_{j,m}){\bf 1}\{\tilde{A}_{j,m} = a\}{\bf 1}\{S(\tilde{X}_{j,m}) = k\}\right| \le 2M'\frac{R_m}{m}~,\]
\[\sup_{T \in \mathcal{T}_L}\left|\frac{\tilde{m}_a(k;T)}{m} - \frac{m_a(k;T)}{m} \right| \le 2\frac{R_m}{m}~.\]
By Lemma \ref{lem:Rm_converge}, $R_m/m \xrightarrow{p} 0$, and hence it follows that 
\[\sup_{T \in \mathcal{T}_L}\left|\frac{1}{m}\sum_{j = 1}^mf(Y_j(a)){\bf 1}\{A_j = a\}{\bf 1}\{S(X_j) = k\} -  \pi E[f(Y(a)){\bf 1}\{S(X) = k\}]\right| \xrightarrow{p} 0~,\]
\[\sup_{T \in \mathcal{T}_L}\left|\frac{m_a(k;T)}{m} - \pi E[{\bf 1}\{S(X) = k\}] \right| \xrightarrow{p} 0~.\]
By invoking Assumptions \ref{ass:bounded} and \ref{ass:prop/cell_size}, we thus obtain that
\[\sup_{T \in \mathcal{T}_L}\left|\frac{1}{m_a(k;T)}\sum_{j=1}^m f(Y_j(a)){\bf 1}\{A_j = a\}{\bf 1}\{S(X_j) = k\} - E[f(Y(a))|S(X) = k]\right| \xrightarrow{p} 0 ~,\]
as desired.
\end{proof}

\begin{lemma}\label{lem:Rm_converge}
Let $R_m$ be defined as in the proof of Lemma \ref{lem:GC_couple}. Maintain the assumptions of Proposition \ref{prop:emp_const}. Then
\[\frac{R_m}{m} \xrightarrow{p} 0~.\]
\end{lemma}
\begin{proof}
Let $M_a(z) := \sum_{j=1}^m{\bf 1}\{\tilde{\zeta}_{j,m} = z, \tilde{A}_{j,m} = a\}$ for $\tilde{\zeta}_{j,m}, \tilde{A}_{j,m}$ as in the proof of Lemma \ref{lem:GC_couple}. Since $R_m$ increments by $1$ everytime there does not exist a suitable index $j'$ such that $\tilde{\zeta}_{j,m} = \zeta_{j'}$ and  $\tilde{A}_{j,m} = A_{j'}$, an upper bound for $R_m$ is given by
\[R_m \le \sum_{z = 1}^Z\sum_{a \in \{0, 1\}} \max\{m_a(z;\zeta) - M_a(z), 0\} ~.\]
Hence to prove our result it suffices to show that for all $z$ and $a$, 
\[\frac{m_a(z;\zeta) - M_a(z)}{m} \xrightarrow{p} 0~,\]
By the law of large numbers,
\[\frac{M_1(z)}{m} \xrightarrow{p} \pi P(\zeta(X) = z)~.\]
By our assumptions on the assignment mechanism an the law of large numbers,
\[\frac{m_1(z)}{m} \xrightarrow{p} \pi P(\zeta(X) = z)~.\]
Hence the result follows.
\end{proof}

\begin{lemma}\label{lem:F_limit}
Let $\hat{F}$, $\hat{F}_a$, $F$ and $F_a$ be defined as in the proof of Lemma \ref{lem:main_2}. Let $T^{(1)}_m$, $T^{(2)}_m$ be defined as in the statement of Lemma \ref{lem:main_2}. Given the Assumptions of Theorem \ref{thm:main}, we have that, for $k = 1,..., K$,
$$|\hat{F}_a(k; T^{(1)}_m) - F_a(k; T^{(2)}_m)| \xrightarrow{p} 0~,$$
and
$$|\hat{F}(k; T^{(1)}_m) - F(k; T^{(2)}_m)| \xrightarrow{p} 0~.$$
\end{lemma}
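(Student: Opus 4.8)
The plan is to bound each quantity by a triangle inequality that separates a purely statistical fluctuation, evaluated at the \emph{estimated} tree $T^{(1)}_m$, from a deterministic continuity term that measures how far $T^{(1)}_m$ and $T^{(2)}_m$ are in $\rho$. For the $\hat{F}$ statement, write
$$|\hat{F}(k;T^{(1)}_m) - F(k;T^{(2)}_m)| \le |\hat{F}(k;T^{(1)}_m) - F(k;T^{(1)}_m)| + |F(k;T^{(1)}_m) - F(k;T^{(2)}_m)|~.$$
For the $\hat{F}_a$ statement, observe first that the case $a=0$ is trivial since both $\hat{F}_0$ and $F_0$ vanish identically, and that the only nontrivial values needed downstream in Lemma \ref{lem:main_2} are $a=1$, where $\hat{F}_1(k;T) = n_0(k;T)/n$ is the realized control proportion in stratum $k$ and $F_1(k;T) = p(k;T)(1-\pi(k))$, together with $a=2$, where $\hat{F}_2(k;T)=n(k;T)/n$ and $F_2(k;T)=p(k;T)$ (handled exactly as the $\hat{F}$ statement). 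I would then split the $a=1$ case in the same way.

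For the fluctuation terms I would argue uniformly over $\mathcal{T}_L$, so that the (random) choice $T^{(1)}_m$ may simply be plugged in. The term $|\hat{F}(k;T^{(1)}_m) - F(k;T^{(1)}_m)|$ is bounded by $\sup_{T\in\mathcal{T}_L}|\hat{F}(k;T) - F(k;T)|$, which tends to zero almost surely by the Glivenko--Cantelli theorem: the sets $\{S(X)<k\}$ are finite unions of the rectangles $\{S^{-1}(j):S\in\mathcal{S}_L\}$ and hence range over a VC class, exactly as in the proof of Proposition \ref{prop:SBR_unif}. For the control proportion, $n_0(k;T)/n = n(k;T)/n - n_1(k;T)/n$, so $\sup_{T}|n_0(k;T)/n - p(k;T)(1-\pi(k))|$ is controlled by combining the Glivenko--Cantelli bound for $n(k;T)/n$ with Assumption \ref{ass:unif_treat} for $n_1(k;T)/n$; the supremum form of Assumption \ref{ass:unif_treat} is precisely what licenses substituting the random tree $T^{(1)}_m$.

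For the continuity terms I would use the Fr\'echet--Nikodym structure of $\rho$ together with the canonical labeling of Definition \ref{def:canon}, which aligns leaf $j$ of $T^{(1)}_m$ with leaf $j$ of $T^{(2)}_m$. Since $\{S^{(1)}_m(X)<k\}$ and $\{S^{(2)}_m(X)<k\}$ are the corresponding unions of aligned cells, $|F(k;T^{(1)}_m)-F(k;T^{(2)}_m)|$ is at most $\sum_{j<k}P\big((S^{(1)}_m)^{-1}(j)\,\Delta\,(S^{(2)}_m)^{-1}(j)\big)$, which is bounded by a constant multiple of $\rho(T^{(1)}_m,T^{(2)}_m)$. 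For $F_1$, adding and subtracting $p(k;T^{(1)}_m)(1-\pi^{(2)}(k))$ and using $p\le 1$ together with $\pi\in[\nu,1-\nu]$ (Assumption \ref{ass:prop/cell_size}) gives a bound in terms of $|p(k;T^{(1)}_m)-p(k;T^{(2)}_m)|$ and $|\pi^{(1)}(k)-\pi^{(2)}(k)|$, both controlled by $\rho$ (the former by the cell part, the latter by the proportion part of the metric). These are the same continuity estimates that underlie Lemma \ref{lem:V_cont}. Since $\rho(T^{(1)}_m,T^{(2)}_m)\to 0$ by hypothesis, the continuity terms vanish, and combining with the fluctuation terms yields the two claimed convergences in probability.

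The main obstacle is not any single estimate but the coordination of two distinct limiting mechanisms: the fluctuation terms require \emph{uniform} control over the whole class $\mathcal{T}_L$ (so that plugging in the data-dependent $T^{(1)}_m$ is legitimate), whereas the continuity terms require that $\rho$-closeness translate into closeness of the population cell probabilities and target proportions under a \emph{consistent labeling}. The latter is where the appendix machinery for $\rho$ (Definition \ref{def:canon} and the Fr\'echet--Nikodym metric) does the real work; once that is in place, both mechanisms are routine.
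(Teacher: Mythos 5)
Your proposal is correct and follows essentially the same route as the paper's proof: a triangle inequality splitting each quantity into a fluctuation term at $T^{(1)}_m$ (controlled uniformly over $\mathcal{T}_L$ via Assumption \ref{ass:unif_treat} and a Glivenko--Cantelli/VC argument for the cell frequencies) plus a deterministic continuity term controlled by $\rho(T^{(1)}_m,T^{(2)}_m)\to 0$ through the symmetric-difference bound on $|p(k;T^{(1)}_m)-p(k;T^{(2)}_m)|$ and the convergence of the target proportions. Your write-up in fact supplies slightly more detail than the paper (e.g., identifying the nontrivial cases $a=1,2$ explicitly and decomposing $n_0(k;T)=n(k;T)-n_1(k;T)$ to combine Assumption \ref{ass:unif_treat} with the Glivenko--Cantelli bound), but the underlying argument is the same.
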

\begin{proof}
We prove the first statement for $a = 1$, and the rest of the results follow similarly. We want to show that
$$\left|\frac{1}{n}\sum_{i=1}^n{\bf 1}\{S_i(T^{(1)}_m)= k, A_i(T^{(1)}_m) = 0\} - (1-{\pi}(k; T^{(2)}_m))p(k;T^{(2)}_m)\right| \xrightarrow{p} 0~.$$
By the triangle inequality, we bound this above by
$$\left|\frac{1}{n}\sum_{i=1}^n{\bf 1}\{S_i(T^{(1)}_m)= k, A_i(T^{(1)}_m) = 0\} - (1-\pi(k; T^{(1)}_m))p(k;T^{(1)}_m)\right| +$$ 
$$+ \left|(1-\pi(k; T^{(1)}_m))p(k;T^{(1)}_m) - (1 - \pi(k; T^{(2)}_m))p(k;T^{(2)}_m)\right|~.$$
The first line of the above expression converges to zero by Assumption \ref{ass:unif_treat} and the Glivenko-Cantelli Theorem. Next consider the second line: by assumption, we have that $|p(k;T^{(1)}_m) - p(k;T^{(2)}_m)| \rightarrow 0$ and $|\pi(k; T^{(1)}_m) - \pi(k; T^{(2)}_m)| \rightarrow 0$ and hence the second line converges to zero (since all of these quantities are bounded).
\end{proof}

\begin{lemma}\label{lem:N_limit}
Let $T^{(1)}_m$, $T^{(2)}_m$ be defined as in the statement of Lemma \ref{lem:main_2}. Given the Assumptions of Theorem \ref{thm:main}, we have that, for $k = 1,... ,K$,
$$\frac{n(k;T^{(1)}_m)}{n_a(k;T^{(1)}_m)} = \frac{1}{\pi(k; T^{(2)}_m)} + o_P(1)~.$$
\end{lemma}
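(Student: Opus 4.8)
The plan is to write both the numerator and the denominator as increments of the empirical quantities already controlled in Lemma~\ref{lem:F_limit}, and then to check that dividing is harmless because Assumption~\ref{ass:prop/cell_size} keeps the denominator bounded away from zero uniformly in $m$. Fix the pair $(a,k)$ and, as in the proof of Lemma~\ref{lem:main_2}, set $\pi_1(k;T) := \pi(k;T)$ and $\pi_0(k;T) := 1-\pi(k;T)$, so that $\pi_a(k;T^{(2)}_m)$ is the proportion written $\pi(k;T^{(2)}_m)$ in the statement. From the definitions $\hat{F}(k;T) = \frac{1}{n}\sum_{i=1}^n \mathbf{1}\{S_i < k\}$ and $\hat{F}_a(k;T) = \frac{1}{n}\sum_{i=1}^n \mathbf{1}\{S_i = k, A_i < a\}$, with the conventions $\hat{F}(K+1;T) = 1$ and $\hat{F}_{2}(k;T) = \frac{1}{n}\sum_{i=1}^n \mathbf{1}\{S_i = k\}$, I would record the identities
\[\frac{n(k;T)}{n} = \hat{F}(k+1;T) - \hat{F}(k;T), \qquad \frac{n_a(k;T)}{n} = \hat{F}_{a+1}(k;T) - \hat{F}_a(k;T)~.\]
Evaluating at $T^{(1)}_m$ and applying Lemma~\ref{lem:F_limit} to each increment (together with the elementary facts $F(k+1;T) - F(k;T) = p(k;T)$ and $F_{a+1}(k;T) - F_a(k;T) = \pi_a(k;T)p(k;T)$) yields
\[\frac{n(k;T^{(1)}_m)}{n} = p(k;T^{(2)}_m) + o_P(1), \qquad \frac{n_a(k;T^{(1)}_m)}{n} = \pi_a(k;T^{(2)}_m)\,p(k;T^{(2)}_m) + o_P(1)~.\]

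Second, abbreviate $a_m := n(k;T^{(1)}_m)/n$, $b_m := n_a(k;T^{(1)}_m)/n$, $\alpha_m := p(k;T^{(2)}_m)$, and $\beta_m := \pi_a(k;T^{(2)}_m)\,\alpha_m$, so that $a_m = \alpha_m + o_P(1)$ and $b_m = \beta_m + o_P(1)$. I would then write the target difference over a common denominator,
\[\frac{n(k;T^{(1)}_m)}{n_a(k;T^{(1)}_m)} - \frac{1}{\pi_a(k;T^{(2)}_m)} = \frac{a_m}{b_m} - \frac{\alpha_m}{\beta_m} = \frac{a_m\beta_m - \alpha_m b_m}{b_m\beta_m}~.\]
The numerator equals $(\alpha_m + o_P(1))\beta_m - \alpha_m(\beta_m + o_P(1)) = \beta_m\,o_P(1) - \alpha_m\,o_P(1)$, which is $o_P(1)$ because $\alpha_m, \beta_m \in [0,1]$. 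For the denominator, restricting (as in Remark~\ref{rem:nemptycell}) to trees with no empty cells, Assumption~\ref{ass:prop/cell_size} gives $p(k;T^{(2)}_m) > \delta$ and $\pi_a(k;T^{(2)}_m) \in [\nu, 1-\nu]$, hence $\beta_m \ge \nu\delta > 0$; since $b_m = \beta_m + o_P(1)$, we also have $b_m \ge \nu\delta/2$ on an event of probability tending to one, so $b_m\beta_m \ge (\nu\delta)^2/2$ with probability approaching one. Dividing the $o_P(1)$ numerator by this bounded-below denominator gives $o_P(1)$, which is the claim.

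The only real obstacle is that $T^{(2)}_m$ changes with $m$, so $p(k;T^{(2)}_m)$ and $\pi_a(k;T^{(2)}_m)$ are sequences rather than fixed constants, and one cannot simply invoke the continuous mapping theorem at a fixed limit point. This is precisely where the uniform lower bounds of Assumption~\ref{ass:prop/cell_size} do the work: they ensure that the denominator $b_m\beta_m$ stays away from zero regardless of how $T^{(2)}_m$ moves, so that the $o_P(1)$ in the numerator is not amplified and the ratio inherits the $o_P(1)$ rate uniformly in $m$.
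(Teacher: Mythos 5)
Your proof is correct and uses essentially the same ingredients as the paper's (one-line) argument — the uniform convergence of the empirical cell/treatment frequencies from Assumption \ref{ass:unif_treat} and the Glivenko--Cantelli theorem (which you route through Lemma \ref{lem:F_limit}), together with the lower bounds $\pi_a(k;T^{(2)}_m)\,p(k;T^{(2)}_m)\ge\nu\delta$ from Assumption \ref{ass:prop/cell_size} to control the division. You simply spell out the algebra the paper leaves implicit, including the sensible reading of $\pi(k;T^{(2)}_m)$ as $\pi_a(k;T^{(2)}_m)$.
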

\begin{proof}
This follows from Assumptions \ref{ass:prop/cell_size}, \ref{ass:unif_treat}, and the Glivenko-Cantelli Theorem.
\end{proof}

\begin{lemma}\label{lem:G_donsker}
Given Assumption \ref{ass:bounded}, the class of functions $\mathcal{G}$ defined as
$$\mathcal{G} := \{G^{k}_a(\cdot\hspace{1mm}; T): T \in \mathcal{T}\}~,$$
for a given $a$ and $k$ is a Donsker class.
\end{lemma}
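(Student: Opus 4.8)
The plan is to observe that, once $k$ and $a$ are fixed, every element of $\mathcal{G}$ is a uniformly bounded monotone function on $[0,1]$, so that $\mathcal{G}$ is contained in the class of all such functions, whose bracketing entropy is governed by a classical estimate.

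First I would record the two structural facts that do all the work. Since $G^k_a(\cdot\hspace{1mm}; T)$ is by definition the quantile function (generalized inverse CDF) of $\psi(a;T)\mid S(X)=k$, it is nondecreasing on $[0,1]$. Moreover, under Assumption \ref{ass:bounded} we have $Y(a) \in [-M,M]$, whence $\psi(a;T) = Y(a) - E[Y(a)\mid S(X)=k] \in [-2M,2M]$ almost surely; consequently every quantile function $G^k_a(\cdot\hspace{1mm};T)$ takes values in $[-2M,2M]$. Thus, whatever the tree $T \in \mathcal{T}$, the corresponding element of $\mathcal{G}$ lies in the fixed class $\mathcal{M}$ of all nondecreasing functions $[0,1] \to [-2M,2M]$, and so $\mathcal{G} \subset \mathcal{M}$.

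Next I would invoke the measure-free bracketing bound for bounded monotone functions (Theorem 2.7.5 in \cite{van1996}), which gives, for every probability measure $Q$ on $[0,1]$,
\[\log N_{[\,]}\!\left(\epsilon, \mathcal{M}, L_2(Q)\right) \le \frac{K_M}{\epsilon}~,\]
for a constant $K_M$ depending only on $M$. Because the bracketing numbers of a subclass never exceed those of the enclosing class, the identical bound holds for $\mathcal{G}$. In the application of this lemma in Lemma \ref{lem:main_2} the process is driven by the i.i.d.\ uniforms $U_{i,(a)}(k)$, so the relevant Donsker property is with respect to Lebesgue measure on $[0,1]$, and the measure-free form of the bound is exactly what is required. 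It then follows that the bracketing integral is finite,
\[\int_0^{\infty} \sqrt{\log N_{[\,]}\!\left(\epsilon,\mathcal{G},L_2\right)}\, d\epsilon \le \int_0^{4M} \sqrt{K_M/\epsilon}\, d\epsilon < \infty~,\]
so $\mathcal{G}$ satisfies the bracketing central limit theorem (Theorem 2.5.6 in \cite{van1996}) and is therefore Donsker.

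There is essentially no hard step: the entire content is in recognizing that fixing $k$ and $a$ reduces the variation across $T \in \mathcal{T}$ to a family of bounded monotone functions, after which the classical monotone-class entropy bound applies verbatim. The only points requiring minor care are confirming that uniform boundedness of the $G^k_a(\cdot\hspace{1mm};T)$ follows from the boundedness of the outcomes in Assumption \ref{ass:bounded}, and noting that the Donsker conclusion is needed relative to the uniform law of the $U_{i,(a)}(k)$, which is precisely why the measure-free version of the entropy estimate is the appropriate tool.
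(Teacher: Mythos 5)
Your proof is correct and follows essentially the same route as the paper, which disposes of this lemma in one line by citing the discussion of uniformly bounded monotone classes in van der Vaart and Wellner. You have simply made explicit the two facts the paper leaves implicit --- that the quantile functions $G^k_a(\cdot\,;T)$ are nondecreasing and uniformly bounded by $2M$ under Assumption \ref{ass:bounded} --- and then applied the standard bracketing-entropy bound and bracketing CLT.
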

\begin{proof}
This follows from the discussion of classes of monotone uniformly bounded functions in \cite{van1996NEW}.
\end{proof}

\begin{lemma}\label{lem:G_limit}
Let $T^{(1)}_m$, $T^{(2)}_m$ be defined as in the statement of Lemma \ref{lem:main_2}. Then we have that
$$||G^{k}_{a}(\cdot \hspace{1mm}; T^{(1)}_m) - G^{k}_{a}(\cdot \hspace{1mm}; T^{(2)}_m)|| \rightarrow 0~.$$
\end{lemma}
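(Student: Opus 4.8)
The norm $\|\cdot\|$ here is the $L^2[0,1]$ norm identified in the proof of Lemma~\ref{lem:main_2}, so that the quantity to be controlled is $\|G^{k}_a(\cdot;T^{(1)}_m) - G^{k}_a(\cdot;T^{(2)}_m)\|^2 = \int_0^1\big(G^{k}_a(u;T^{(1)}_m) - G^{k}_a(u;T^{(2)}_m)\big)^2\,du$. Since $G^{k}_a(\cdot;T)$ is the quantile function of $\psi(a;T)\mid S(X)=k$ and, on $\{S(X)=k\}$, $\psi(a;T)=Y(a)-E[Y(a)\mid S(X)=k]$, the function $G^{k}_a(\cdot;T^{(j)}_m)$ is just the quantile function of $Y(a)$ conditional on the cell $A^{(j)}_k:=(S^{(j)}_m)^{-1}(k)$, recentered by the conditional mean $\mu^{(j)}_k:=E[Y(a)\mid X\in A^{(j)}_k]$. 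The plan is thus to show that (i) the centerings satisfy $|\mu^{(1)}_k-\mu^{(2)}_k|\to0$, and (ii) the conditional laws of $Y(a)$ on the two cells become close strongly enough to force $L^2$ convergence of the associated quantile functions; the recentering then contributes only the vanishing constant from (i).

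First I would extract from $\rho(T^{(1)}_m,T^{(2)}_m)\to0$ that $P(A^{(1)}_k\triangle A^{(2)}_k)\to0$ for each $k$, which is immediate from the Frechet--Nikodym construction of $\rho$ (Definition~\ref{def:rho1}). Writing each conditional CDF as a ratio, $F^{(j)}_k(y):=P(Y(a)\le y\mid X\in A^{(j)}_k)=P(Y(a)\le y,\,X\in A^{(j)}_k)/P(X\in A^{(j)}_k)$, I would bound the difference of numerators by $P(A^{(1)}_k\triangle A^{(2)}_k)$ and each denominator below by $\delta$ (Assumption~\ref{ass:prop/cell_size}); this yields $\sup_y|F^{(1)}_k(y)-F^{(2)}_k(y)|\to0$. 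The identical argument applied to $E[Y(a){\bf 1}\{X\in A^{(j)}_k\}]$, together with the same lower bound on the cell probabilities, gives $|\mu^{(1)}_k-\mu^{(2)}_k|\to0$. Boundedness of $Y(a)$ (Assumption~\ref{ass:bounded}) is what keeps these numerators, and hence the means, under control.

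The crux, and the step for which Assumption~\ref{ass:quant_holder} is designed, is upgrading the uniform convergence of the conditional CDFs to $L^2$ convergence of their inverses. The obstacle is that uniform CDF convergence need not imply even pointwise convergence of quantile functions: if the limiting conditional law has a ``gap'' in its support (a flat stretch of the CDF), an arbitrarily small horizontal perturbation of the CDF can move a quantile by the entire width of the gap. Assumption~\ref{ass:quant_holder} rules this out by forcing the family of conditional quantile functions to be pointwise equicontinuous, so that uniform closeness of the CDFs transfers to $G^{k}_a(u;T^{(1)}_m)-G^{k}_a(u;T^{(2)}_m)\to0$ for each $u\in(0,1)$. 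Since every member of $\mathcal{G}$ is bounded in absolute value by $2M$, this pointwise convergence upgrades to the desired $L^2[0,1]$ convergence by dominated convergence, completing the argument. I expect this transfer step to be the only real difficulty; the cell- and mean-convergence estimates of the previous paragraph are routine.

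Finally, I would note that because the support is bounded one can bypass equicontinuity altogether via a transport argument. The integral to be bounded equals the squared $2$-Wasserstein distance between the two recentered conditional laws; decomposing with the triangle inequality $W_2(\psi^{(1)}_k,\psi^{(2)}_k)\le W_2(Y(a)\mid A^{(1)}_k,\,Y(a)\mid A^{(2)}_k)+|\mu^{(1)}_k-\mu^{(2)}_k|$ and using that on a bounded interval $W_2^2\le \mathrm{diam}\cdot W_1\le \mathrm{diam}^2\,\sup_y|F^{(1)}_k-F^{(2)}_k|$, both terms vanish by the second paragraph. I would still lead with the equicontinuity argument, since Assumption~\ref{ass:quant_holder} is invoked elsewhere to handle the non-uniqueness of the minimizers of $V(\cdot)$, but the transport bound is the shortest path to the stated conclusion.
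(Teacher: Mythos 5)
Your lead argument is essentially the paper's own proof: you establish convergence of the conditional CDFs via the symmetric-difference bound $P(A^{(1)}_k\triangle A^{(2)}_k)\to 0$ combined with the lower bound $\delta$ on cell probabilities (the paper's $R_{m1},\dots,R_{m4}$ decomposition does exactly this, except that it folds the recentering by the conditional mean into the CDF rather than splitting it off as a separate translation, which also forces it to separately verify $|E(Y(a)|S^{(1)}_m(X)=k)-E(Y(a)|S^{(2)}_m(X)=k)|\to 0$ as you do), then you invoke Assumption \ref{ass:quant_holder} via Lemma \ref{lem:Q_conv} to pass to pointwise convergence of the quantile functions, and finish with dominated convergence using the bound $2M$. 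The one genuine addition is your closing Wasserstein observation: the chain $W_2^2\le \mathrm{diam}\cdot W_1\le \mathrm{diam}^2\,\sup_y|F^{(1)}_k-F^{(2)}_k|$, valid for one-dimensional laws on a bounded interval, delivers the $L^2$ conclusion directly from uniform CDF convergence with no equicontinuity hypothesis at all, and it covers the discrete case uniformly, whereas the paper's proof explicitly restricts attention to continuous $Y(a)$. For this particular lemma that is the shorter and more robust route; your reason for leading with the equicontinuity argument anyway (that Assumption \ref{ass:quant_holder} is already part of the maintained hypotheses) is sensible but not necessary.
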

\begin{proof}
Again we argue along subsequences. For every subsequence we show there exists a further subsequence (indexed by $m_\ell$) such that the result holds. By compactness (Lemmas \ref{lem:rho_complete}, \ref{lem:rho_tb}), there exists a tree $T^* \in \mathcal{T}^*_L$ such that $T^{(2)}_{m_\ell} \rightarrow T^*$, and hence by assumption we also have that $T^{(1)}_{m_\ell} \rightarrow T^*$. By Lemma \ref{lem:G_limit}
\[Z_a^k(t; T^{(j)}_{m_\ell}) \rightarrow Z_a^k(t; T^*)~,\]
for $j = 1, 2$, at every point of continuity $t$ of $Z_a^k(\cdot; T^*)$, where $Z_a^k(\cdot ; T)$ is the CDF of the distribution of $(Y(a) - E[Y(a)|S(X)]) \big | S(X) = k$.  By Lemma 21.2 in \cite{van1998},
\[G_a^k(p; T^{(j)}_{m_\ell}) \rightarrow G_a^k(p; T^*)~,\]
for $j = 1,2$, at every point of continuity $p$ of $G_a^k(p; T^*)$, and hence
\[\left|G_a^k(p; T^{(1)}_{m_\ell}) - G_a^k(p; T^{(2)}_{m_\ell})\right| \rightarrow 0~,\]
at every point of continuity $p$ of $G_a^k(p; T^*)$. Since the quantile functions $G_a^k(p; T)$ are bounded by Assumption \ref{ass:bounded} and $G_a^k(p; T^*)$ has only countably many discontinuities (because it is non-decreasing), it follows by the dominated convergence theorem that
\[||G_a^k(\cdot; T^{(1)}_{m_\ell}) - G_a^k(\cdot; T^{(2)}_{m_\ell})|| \rightarrow 0~,\]
and hence the result follows.
\end{proof}

\begin{lemma}\label{lem:G_limit}
Let $\{T_m\}_m$ be a sequence of trees converging to some tree $T \in \mathcal{T}_L$. Given the Assumptions of Theorem \ref{thm:main}, we have that
\[Z_a^k(t; T_m) \rightarrow Z_a^k(t; T)~,\]
at every point of continuity $t$ of $Z_a^k(\cdot; T)$, where $Z_a^k(\cdot; T)$ is the CDF of the distribution of $(Y(a) - E[Y(a)|S(X)]) \big | S(X) = k$.
\end{lemma}
\begin{proof}
Re-writing, we have that
\[Z^{k}_a(t; T) = \frac{E[{\bf 1}\{Y(a) \le t + E(Y(a)|S(X)=k)\}{\bf 1}\{S(X)= k\}]}{P(S(X) = k)}~,\]
Hence by the triangle inequality, Assumption \ref{ass:prop/cell_size} and a little bit of algebra, we get that
\[|Z^{k}_a(t; T_m) - Z^{k}_a(t; T)| \le \frac{1}{\delta}\left|R_{m1} - R_{m2}\right| + \frac{1}{\delta^2}\left|R_{m3}\right|~,\]
where
\[R_{m1} := E[{\bf 1}\{Y(a) - E(Y(a)|S_m(X) = k) \le t\}{\bf 1}\{S_m(X)= k\}]~,\]
\[R_{m2} := E[{\bf 1}\{Y(a) - E(Y(a)|S(X) = k)\le t\}{\bf 1}\{S(X)= k\}]~,\]
\[R_{m3} := P(S_m(X)=k) - P(S(X)=k)~.\]
$|R_{m3}|$ goes to zero since it is bounded above by $P(S^{-1}_m(k) \Delta S^{-1}(k))$ which converges to zero by the definition of our metric (Definition  \ref{def:rho1}), where $\Delta$ denotes the symmetric difference.  It remains to show that $|R_{m1} - R_{m2}|$ converges to zero. Again by the triangle inequality, 
\[|R_{m1} - R_{m2}| \le |R_{m1} - R_{m4}| + |R_{m4} - R_{m2}|~,\]
where
\[R_{m4} =  E[{\bf 1}\{Y(a) - E(Y(a)|S_m(X)=k) \le t \}{\bf 1}\{S(X) = k\}]~.\]
By another application of the triangle inequality,
\[|R_{m1} - R_{m4}| \le E\left|{\bf 1}\{S_m(X) = k\} - {\bf 1}\{S(X) = k\}\right|~,\]
and this bound is equal to $P(S^{-1}_m(k) \Delta S^{-1}(k))$ which converges to zero by the definition of our metric. It remains to show that $|R_{m4} - R_{m2}|$ converges to zero. To show this, we begin by noting that
\[R_{m4} = E[{\bf 1}\{Y(a) - E(Y(a)|S(X) = k) \le t + E(Y(a)|S_m(X) = k) - E(Y(a)|S(X) = k)\}{\bf 1}\{S(X) = k\}]~.\]
Let $t_m =  t + E(Y(a)|S_m(X) = k) - E(Y(a)|S(X) = k)$, then we see that $R_{m4}$ is the numerator in the expression for $Z_a^k(t_m; T)$ from above. Similarly, $R_{m2}$ is the numerator in our expression for $Z_a^k(t; T)$. By similar arguments to what we have used in this proof, it can be shown that
\[E(Y(a)|S_m(X)=k) \rightarrow E(Y(a)|S(X) = k)~,\] 
(in fact we derive this in the proof of Lemma \ref{lem:V_cont}) so that $t_m \rightarrow t$. Since $t$ is a point of continuity of $Z_a^k(\cdot; T)$, it follows that 
\[Z_a^k(t_m; T) \rightarrow Z_a^k(t; T)~,\]
and from this it follows that $|R_{m4} - R_{m2}|$ converges to zero, so we are done.
\end{proof}

\begin{lemma}\label{lem:double_index}
Let $\{x_{n,m}\}$ be a doubly-indexed sequence of real numbers. If for any strictly increasing indexing $(n_1, m_1) < (n_2, m_2) < ... <  (n_{\ell}, m_{\ell}) < ...$ (where the inequality is to be interpreted componentwise) the sequence $\{x_{n_{\ell}, m_{\ell}}\}$ contains a convergent subsequence which converges to $x$, then $x_{n,m} \rightarrow x$ as $n, m \rightarrow \infty$.
\end{lemma}
\begin{proof}
Suppose not, then there exists some $\epsilon > 0$ such that for any $M \in \mathbb{N}$, we can find $n', m' > M$ such that $|x_{n', m'} - x| > \epsilon$. We use this fact to construct the following sequence: first pick $n_1, m_1 > 1$ such that $|x_{n_1, m_1} - x| > \epsilon$. Next pick $n_2, m_2 > \max(n_1, m_1)$ such that $|x_{n_2, m_2} - x| > \epsilon$. Continue to pick $n_{\ell + 1}, m_{\ell + 1} > \max(n_\ell, m_\ell)$ such that $|x_{n_{\ell+1}, m_{\ell + 1}} - x| > \epsilon$. The resulting sequence $\{x_{n_{\ell}, m _\ell}\}$ satisfies the conditions of the lemma but contains no subsequence converging to $x$. Hence the result follows by contradiction.
\end{proof}



\section{A Theory of Convergence for Stratification Trees}\label{sec:rho}
\begin{remark}\label{rem:Xcont}
For the remainder of this section suppose $X$ is continuously distributed. Modifying the results to include discrete covariates with finite support is straightforward.
\end{remark}
We will define a metric $\rho$ on the space $\mathcal{T}_L$ and study its properties. To define $\rho$, we write it as a product metric between a metric $\rho_1$ on $\mathcal{S}_L$, which we define below, and $\rho_2$ the Euclidean metric on $[0,1]^{K}$. Recall from Remark \ref{rem:quotient} that any permutation of the elements in $[K]$ simply results in a re-labeling of the partition induced by $S(\cdot)$. For this reason we explicitly define the labeling of a tree partition that we will use, which we call the \emph{canonical labeling}:
\begin{definition}\label{def:canon}(The Canonical Labeling)
\begin{itemize}
\item Given a tree partition $\{\mathcal{X}_D, \mathcal{X}_U\}$ of depth one on $\mathcal{X}$, we assign a label of $1$ to $\mathcal{X}_D$ and a label of $2$ to $\mathcal{X}_U$.
\item Given a tree partition $\{\mathcal{X}^{(L-1)}_D, \mathcal{X}^{(L-1)}_U\}$ of depth $L > 1$ on $\mathcal{X}$, we label $\mathcal{X}^{(L-1)}_D$ as a tree partition of depth $L-1$ using the labels $\{1, 2, ..., K/2\}$, and use the remaining labels $\{K/2 +1, ..., K\}$ to label $\mathcal{X}^{(L-1)}_U$ as a tree partition of depth $L-1$.
\item If it is ever the case that a tree partition of depth $L$ can be constructed in two different ways, we specify the partition unambiguously as follows: if the partition can be written as $\{\mathcal{X}_D^{(L-1)}, \mathcal{X}_U^{(L-1)}\}$ with cut $(j, \gamma)$ and $\{\mathcal{X}_D^{'(L-1)},\mathcal{X}_U^{'(L-1)}\}$ with cut $(j', \gamma')$, then we select whichever of these has the smallest pair $(j,\gamma)$ where our ordering is lexicographic. If the cuts $(j, \gamma)$ are equal then we continue this recursively on the subtrees, beginning with the left subtree, until a distinction can be made.
\end{itemize}
\end{definition}
In words, the canonical labeling labels the leaves from ``left-to-right" when the tree is depicted in a tree representation (and the third bullet point is used to break ties whenever multiple such representations are possible). All of our previous examples have been canonically labeled (see Examples \ref{ex:tree_part1}, \ref{ex:tree_part2}). From now on, given some $S \in \mathcal{S}_L$, we will use the the version of $S$ that has been canonically labeled. Let $P_X$ be the measure induced by the distribution of $X$ on $\mathcal{X}$. We are now ready to define our metric $\rho_1(\cdot,\cdot)$ on $\mathcal{S}_L$ as follows:

\begin{definition}\label{def:rho1}
For $S_1, S_2 \in \mathcal{S}_L$, 
$$\rho_1(S_1, S_2) := \sum_{k=1}^{2^L}P_X(S^{-1}_1(k) \Delta S^{-1}_2(k))~.$$
\end{definition}

Where $A \Delta B := A\setminus B \cup B \setminus A$ denotes the symmetric difference of $A$ and $B$ and $S^{-1}(k) := \{x \in \mathcal{X}: S(x) = k\}$. In words, $\rho_1$ computes the Frechet-Nikodym set distance between each component of the partitions $S_1$ and $S_2$ once these have been canonically labeled. Our metric on partitions most closely resembles the metric defined in \cite{leonardi2002}. In that paper, the metric is defined over the set of all countable (measurable) partitions, which complicates the definition relative to what we present here.

That $\rho_1$ is a metric follows from the properties of symmetric differences and Assumption \ref{ass:bounded}. We show under appropriate assumptions that $(\mathcal{S}_L, \rho_1)$ is a complete metric space in Lemma \ref{lem:rho_complete}, and that $(\mathcal{S}_L, \rho_1)$ is totally bounded in Lemma \ref{lem:rho_tb}. Hence $(\mathcal{S}_L, \rho_1)$ is a compact metric space under appropriate assumptions. Combined with the fact that $([0,1]^{2^L},\rho_2)$ is a compact metric space, it follows that $(\mathcal{T}_L, \rho)$ is a compact metric space.

Next we show that $V(\cdot)$ is continuous in our new metric.

\begin{lemma}\label{lem:V_cont}
Given Assumption \ref{ass:bounded}, $V(\cdot)$ is a continuous function in $\rho$.
\end{lemma}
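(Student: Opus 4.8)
The plan is to show that $T_n=(S_n,\pi_n)\to T=(S,\pi)$ in $\rho$ implies $V(T_n)\to V(T)$, by exhibiting $V$ as a continuous combination of a small number of elementary ``building blocks,'' each of which I show is continuous in $\rho_1$. The natural building block is, for a bounded function $g=g(Y(a))$ and a fixed index $k\in[K]$, the map
$$m_g(S,k):=E\big[g\cdot{\bf 1}\{S(X)=k\}\big].$$
Every ingredient of $V(T)$ can be written through these: $P(S(X)=k)=m_1(S,k)$, the conditional moments $E[Y(a)^r\mid S(X)=k]=m_{Y(a)^r}(S,k)/m_1(S,k)$, and hence $\sigma_a^2(k)$ and $E[Y(1)-Y(0)\mid S(X)=k]$; the constant $\theta=E[Y(1)-Y(0)]$ does not depend on $T$. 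The dependence on $\pi$ enters only through the factors $1/\pi(k)$ and $1/(1-\pi(k))$, which are trivially continuous in $\rho_2$ on the relevant domain. Thus continuity of $V$ will follow once I prove continuity of each $m_g(\cdot,k)$ in $\rho_1$ and verify that all denominators appearing in the composition are bounded away from zero.

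The crux is the following Lipschitz estimate for the building blocks. Because ${\bf 1}\{S(X)=k\}$ depends on the data only through $X$, while $g$ is bounded (by Assumption \ref{ass:bounded}, $|Y(a)|\le M$, so $\|g\|_\infty\le C$ for a constant depending only on $M$ and the power involved), I can bound
$$|m_g(S_1,k)-m_g(S_2,k)|=\big|E[g({\bf 1}\{S_1(X)=k\}-{\bf 1}\{S_2(X)=k\})]\big|\le\|g\|_\infty\,E\big|{\bf 1}\{S_1(X)=k\}-{\bf 1}\{S_2(X)=k\}\big|.$$
The indicators differ precisely on $S_1^{-1}(k)\Delta S_2^{-1}(k)$, so the right-hand side equals $\|g\|_\infty\,P_X\big(S_1^{-1}(k)\Delta S_2^{-1}(k)\big)\le\|g\|_\infty\,\rho_1(S_1,S_2)$. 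Hence $m_g(\cdot,k)$ is Lipschitz in $\rho_1$, uniformly over the relevant bounded family of $g$'s. This is where the Frechet--Nikodym construction of $\rho_1$ does all the work: the symmetric-difference metric is designed exactly so that integrals against cell indicators move continuously.

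With the building blocks in hand, I would assemble $V$ as follows. By Assumption \ref{ass:prop/cell_size} every nonempty cell satisfies $P(S(X)=k)=m_1(S,k)\ge\delta>0$ uniformly over $\mathcal{S}_L$, so the conditional moments $m_g(S,k)/m_1(S,k)$ are ratios of $\rho_1$-continuous functions with denominator bounded away from zero, and are therefore $\rho_1$-continuous; the same applies to $\sigma_a^2(k)$ and to the squared-bias term $(E[Y(1)-Y(0)\mid S(X)=k]-\theta)^2$, which are finite sums, products, and differences of continuous functions. Likewise $\pi(k)\in[\nu,1-\nu]$ keeps $1/\pi(k)$ and $1/(1-\pi(k))$ continuous and bounded, so the within-stratum variance term $\sigma_0^2(k)/(1-\pi(k))+\sigma_1^2(k)/\pi(k)$ is jointly continuous in $(\rho_1,\rho_2)$. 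Multiplying the bracketed expression by $P(S(X)=k)$ and summing the finitely many $k\in[K]$ preserves continuity, giving $V\in C(\mathcal{T}_L,\rho)$. I expect the only genuine obstacle to be the Lipschitz estimate of the previous paragraph; once the symmetric-difference bound is established, the remainder is bookkeeping with continuous-function algebra, with Assumption \ref{ass:prop/cell_size} invoked solely to prevent division by quantities approaching zero. Finally, since both $V$ and $\rho_1$ are built from the canonical labeling of Definition \ref{def:canon}, the per-$k$ matching of cells is consistent, so no relabeling issues arise.
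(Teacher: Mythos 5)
Your proof is correct and follows essentially the same route as the paper's: both reduce everything to the symmetric-difference bound $|E[g\,{\bf 1}\{S_1(X)=k\}]-E[g\,{\bf 1}\{S_2(X)=k\}]|\le \|g\|_\infty\,P_X(S_1^{-1}(k)\,\Delta\,S_2^{-1}(k))$ and then express the conditional moments as ratios of such quantities. Your explicit appeal to Assumption \ref{ass:prop/cell_size} to keep the denominators $P(S(X)=k)$ bounded away from zero is a detail the paper's proof leaves implicit, but it is the same argument.
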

\begin{proof}
We want to show that for a sequence $T_n \rightarrow T$, we have $V(T_n) \rightarrow V(T)$. By definition, $T_n \rightarrow T$ implies $S_n \rightarrow S$ and $\pi_n \rightarrow \pi$ where $T_n = (S_n, \pi_n)$, $T = (S, \pi)$. By the definition of the symmetric difference,
$$|P(S_n(X) = k) - P(S(X) = k)| \le P_X(S^{-1}_n(k) \Delta S^{-1}(k))~,$$
and hence $P(S_n(X) = k) \rightarrow P(S(X) = k)$. It remains to show that $E[f(Y(a))|S_n(X)=k] \rightarrow E[f(Y(a))|S(X)=k]$ for $f(\cdot)$ a continuous function. Re-writing:
$$E[f(Y(a))|S_n(X)=k] = \frac{E[f(Y(a)){\bf 1}\{S_n(X) = k\}]}{P(S_n(X)=k)}~.$$
The denominator converges by the above inequality, for the numerator:
\[\left|E[f(Y(a)){\bf 1}\{S_n(X) = k\}] - E[f(Y(a)){\bf 1}\{S(X) = k\}]\right| \le M'E\left|{\bf 1}\{S_n(X)=k\} - {\bf 1}\{S(X) = k\}\right|~,\]
for some $M' < \infty$, since $f(\cdot)$ is continuous and $Y(a)$ has bounded support. Finally 
\[E\left|{\bf 1}\{S_n(X)=k\} - {\bf 1}\{S(X) = k\}\right| = P_X(S_n^{-1}(k) \Delta S^{-1}(k))~,\]
which converges to zero by definition.
\end{proof}

\begin{lemma}\label{lem:rho_complete}
Given Assumptions \ref{ass:bounded} and \ref{ass:prop/cell_size}, $(\mathcal{S}_L, \rho_1)$ is a complete metric space.
\end{lemma}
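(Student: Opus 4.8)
The plan is to exploit the finite-dimensional parametrization of tree partitions. By Remark~\ref{rem:Xcont} we may assume $X$ is continuous, and by Remark~\ref{rem:nemptycell} we work with trees having exactly $K = 2^L$ nonempty leaves. Any $S \in \mathcal{S}_L$ is determined by its cut data at the $K-1$ internal nodes: an assignment of split coordinates $\mathbf{j} = (j_1, \dots, j_{K-1}) \in [d]^{K-1}$ together with thresholds $\boldsymbol{\gamma} = (\gamma_1, \dots, \gamma_{K-1})$, each $\gamma_\ell$ lying in a bounded interval supplied by Assumption~\ref{ass:bounded}; write $S = S_{\mathbf{j}, \boldsymbol{\gamma}}$ for the resulting (canonically labeled) partition. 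Since $\rho_1$ is a finite sum of symmetric-difference measures, the task is to show that any $\rho_1$-Cauchy sequence converges to an element of $\mathcal{S}_L$. The main obstacle is that, a priori, $\rho_1$-convergence only controls the cells as measurable sets (in the Fr\'echet--Nikodym sense), so the limit could fail to be a tree partition; the parametrization above is precisely what lets us rule this out.

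First I would establish a continuity estimate. Fix a coordinate assignment $\mathbf{j}$ and two threshold vectors $\boldsymbol{\gamma}, \boldsymbol{\gamma}'$. Each leaf of $S_{\mathbf{j}, \boldsymbol{\gamma}}$ is an intersection of at most $L$ half-spaces of the form $\{x_{j_\ell} \le \gamma_\ell\}$ or $\{x_{j_\ell} > \gamma_\ell\}$ along its root-to-leaf path, so the symmetric difference of corresponding leaves of $S_{\mathbf{j}, \boldsymbol{\gamma}}$ and $S_{\mathbf{j}, \boldsymbol{\gamma}'}$ is contained in the union, over the nodes $\ell$ on that path, of the slabs $\{x : x_{j_\ell} \in (\gamma_\ell \wedge \gamma'_\ell, \gamma_\ell \vee \gamma'_\ell]\}$. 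Because $X$ has a bounded density on the rectangle $\mathcal{X}$ (Assumption~\ref{ass:bounded}), each coordinate marginal is bounded, so the $P_X$-measure of such a slab is at most $\bar{f}\,|\gamma_\ell - \gamma'_\ell|$ for a finite constant $\bar{f}$. Summing over the finitely many leaves and path-nodes yields $\rho_1(S_{\mathbf{j}, \boldsymbol{\gamma}}, S_{\mathbf{j}, \boldsymbol{\gamma}'}) \le C \, \lVert \boldsymbol{\gamma} - \boldsymbol{\gamma}' \rVert_1$ for some constant $C$; that is, for each fixed $\mathbf{j}$ the map $\boldsymbol{\gamma} \mapsto S_{\mathbf{j}, \boldsymbol{\gamma}}$ is Lipschitz from the thresholds into $(\mathcal{S}_L, \rho_1)$.

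Now take a $\rho_1$-Cauchy sequence $\{S_n\}$ and write $S_n = S_{\mathbf{j}_n, \boldsymbol{\gamma}_n}$. Since $[d]^{K-1}$ is finite, I would pass to a subsequence along which $\mathbf{j}_n \equiv \mathbf{j}$ is constant, and then (the thresholds lying in a bounded, hence relatively compact, region) pass to a further subsequence along which $\boldsymbol{\gamma}_n \to \boldsymbol{\gamma}$. Setting $S := S_{\mathbf{j}, \boldsymbol{\gamma}}$, the Lipschitz estimate gives $\rho_1(S_{n_i}, S) \to 0$ along this subsequence. Since a Cauchy sequence with a convergent subsequence converges to the same limit, $\rho_1(S_n, S) \to 0$.

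It remains to verify $S \in \mathcal{S}_L$. By construction $S = S_{\mathbf{j}, \boldsymbol{\gamma}}$ is a genuine depth-$L$ tree partition, so the structural requirement is automatic. For the cell-size condition of Assumption~\ref{ass:prop/cell_size}, note that $|P(S_n(X) = k) - P(S(X) = k)| \le P_X(S_n^{-1}(k) \Delta S^{-1}(k)) \le \rho_1(S_n, S) \to 0$, whence $P(S(X) = k) = \lim_n P(S_n(X) = k) \ge \delta$; in particular each leaf has positive measure and is nonempty, consistent with Remark~\ref{rem:nemptycell}. Here one reads the cell-size restriction as defining a closed region: the limiting value may land on the boundary $P(S(X)=k) = \delta$, which is exactly the point at which closedness of the constraint is needed for $(\mathcal{S}_L, \rho_1)$ to be complete. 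This establishes completeness.
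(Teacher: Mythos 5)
Your proof is correct, but it takes a genuinely different route from the paper's. The paper argues cell-by-cell: since $\rho_1$ sums the symmetric-difference measures of corresponding leaves, each sequence of leaves $\{S_n^{-1}(k)\}_n$ is a Cauchy sequence of rectangles in the metric $P_X(\cdot\,\Delta\,\cdot)$; the \emph{lower} bound on the density ($f_X \ge c > 0$) converts this to Cauchyness in Lebesgue measure, and then Assumption \ref{ass:prop/cell_size} (which keeps the side intervals from degenerating) together with the identity $\lambda_1([a,b]\Delta[a',b']) = |a-a'| + |b-b'|$ for overlapping intervals shows the interval endpoints are Cauchy in $\mathbb{R}$, hence convergent; the limit rectangles are then asserted to assemble into the limiting tree. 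You instead parametrize the whole tree by its cut data $(\mathbf{j},\boldsymbol{\gamma})$, prove the parametrization is Lipschitz into $(\mathcal{S}_L,\rho_1)$ using the \emph{upper} bound on the density, and extract a convergent subsequence by compactness of the parameter space, invoking the fact that a Cauchy sequence with a convergent subsequence converges. The two arguments use complementary halves of the boundedness assumption on $f_X$, and your compactness-of-parameters argument is essentially the same device the paper reserves for total boundedness (Lemma \ref{lem:rho_tb}) — so you are in effect proving sequential compactness directly, which yields both lemmas at once. Your route also buys something the paper leaves implicit: because the limit is $S_{\mathbf{j},\boldsymbol{\gamma}}$ by construction, it is automatically a depth-$L$ tree partition, whereas the paper's cell-by-cell limit requires a (skipped) verification that the limit rectangles still nest into a tree. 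Two shared caveats, neither fatal: the strict inequality $P(S(X)=k) > \delta$ in Assumption \ref{ass:prop/cell_size} means the constraint set is not literally closed (you flag this honestly; the paper's proof has the same wrinkle), and both proofs quietly assume the canonical labeling is stable along the sequence, i.e., that the leaf-matching used to compute $\rho_1$ agrees with the one induced by the cut representation.
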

\begin{proof}
To simplify the notation, in what follows we normalize $\mathcal{X} = [0,1]^d$. Let $\{S_n\}_n$ be a Cauchy sequence in $\mathcal{S}_L$. We want to show that this sequence converges to a limit $S \in \mathcal{S}_L$. To prove this, we proceed by induction on the depth of the tree. In what follows we use the notation introduced in Definitions \ref{def:tree_part1} and \ref{def:tree_partL}. 

For the base case $L = 1$, we show that if we have a Cauchy sequence of depth one tree partitions on a generic cube $\Gamma_n = \bigtimes_{j = 1}^d [p_{jn}, q_{jn}] \subseteq \mathcal{X}$, given by $\{\Gamma_D(j_n, \gamma_n), \Gamma_U(j_n, \gamma_n)\}$ (recalling the notation for a depth one tree partition given in Definition \ref{def:tree_part1}), then the sequences $\{p_{jn}\}_n$ and $\{q_{jn}\}_n$ converge, and the partition $\{\Gamma_D(j_n, \gamma_n), \Gamma_U(j_n, \gamma_n)\}$ converges to a partition $\{\Gamma_D(j', \gamma'), \Gamma_U(j', \gamma')\}$ on the cube given by $\Gamma =  \bigtimes_{j = 1}^d [\lim_n p_{jn}, \lim_n q_{jn}]$. 

To that end, first we will show that if a sequence of cubes $\{\Gamma_n\}_n = \bigtimes_{j=1}^d [p_{jn},q_{jn}]$ such that $P_X(\Gamma_n) > \delta$ for $\delta > 0$ is Cauchy in the set-difference metric, then the sequences $\{p_{jn}\}_n$, $\{q_{jn}\}_n$ converge, and the sequence of cubes converges to $\Gamma = \bigtimes_{j = 1}^d[\lim_n p_{jn}, \lim_n q_{jn}]$. We show this by arguing that the corresponding sequences $\{p_{jn}\}_n$ and $\{q_{jn}\}_n$ are all Cauchy as sequences in $\mathbb{R}$ and hence convergent. First note that if $\{\Gamma_n\}_n$ is Cauchy with respect to the metric induced by $P_X$, then it is Cauchy with respect to the metric induced by Lebesgue measure $\lambda$ on $\mathcal{X}$, since by Assumption \ref{ass:bounded}, for any measurable set $A$,
$$P_X(A) = \int_{A} f_X d\lambda \ge c\lambda(A)~,$$ 
 for some $c > 0$. Moreover by Assumption \ref{ass:bounded} (bounded density), our assumption that $P_X(\Gamma_n) > \delta$, and the fact that $\mathcal{X} = [0,1]^d$, each $\Gamma_n$ has a minimal side-length. It follows that if $\{\Gamma_n\}_n$ is Cauchy w.r.t to the metric induced by $\lambda$, then each sequence of intervals $\{[p_{jn}, q_{jn}]\}_n$ for $j = 1 ..., d$ is Cauchy w.r.t to the metric induced by Lebesgue measure on $[0,1]$ (which we denote by $\lambda_1$). By the definition of symmetric difference, when $[p_{jn}, q_{jn}] \cap [p_{jn'}, q_{jn'}] \ne \emptyset$ for $n \ne n'$, 
$$\lambda_1([p_{jn}, q_{jn}]\Delta[p_{jn'}, q_{jn'}]) = |q_{jn'} - q_{jn}| + |p_{jn'} - p_{jn}|,$$
and hence it follows that the sequences $\{p_{jn}\}_n$ and $\{q_{jn}\}_n$ are Cauchy as sequences in $\mathbb{R}$, and thus convergent. It follows that $\{[p_{jn}, q_{jn}]\}_n$ converges to $[\lim p_{jn}, \lim q_{jn}]$ in the metric induced by $\lambda_1$, and hence again since $\mathcal{X} = [0,1]^2$ it follows that $\Gamma_n$ converges to $\Gamma$ in the metric induced by $\lambda$. Finally by Assumption \ref{ass:bounded} (bounded density), we obtain that $\Gamma_n$ converges to $\Gamma$ in the metric induced by $P_X$, as desired.

Returning to the base case $L = 1$, the components of our partition can be written as $\Gamma_D(j_n, \gamma_n) = \bigtimes_{j = 1}^d [p_{jn}, q'_{jn}] , \Gamma_U(j_n, \gamma_n) = \bigtimes_{j = 1}^d [p'_{jn}, q_{jn}]$, where
\begin{equation*}
    p'_{jn}=
    \begin{cases}
      p_{jn}, & \text{if}\ j \ne j_n \\
      \gamma_n, & \text{if}\ j = j_n~,
    \end{cases}
  \end{equation*}
 and similarly for $q'_{jn}$ (to be fully precise, the component of each product corresponding to $j = j_n$ is not a closed interval, but from the assumption that $X$ is continuously distributed this distinction is irrelevant and so we ignore it). By the definition of $\rho_1(\cdot, \cdot)$ these sequences of cubes are Cauchy in the set-difference metric, and hence the argument above shows that the sequences  $\{p_{jn}\}_n$, $\{p'_{jn}\}_n$, $\{q_{jn}\}_n$ and $\{q'_{jn}\}_n$ all converge. We claim that this implies that the sequence $\{j_n\}_n$ is eventually constant (call it $j'$), and the sequence $\{\gamma_n\}_n$ converges as well (call the limit $\gamma'$). To see this, suppose that the sequence $\{j_n\}_n$ is eventually constant. Then $p'_{jn} = \gamma_n$ for some $j$ (for $n$ sufficiently large) and hence the sequence $\{\gamma_n\}_n$ converges. Now suppose that the sequence $\{j_n\}_n$ oscillates forever (WLOG say between $j = 1$ and $j = 2$). Then there exists a sub-indexing $\{n_\ell\}$ such that $p'_{1n_\ell} = p_{1n_\ell}$ and $q'_{1n_\ell} = q_{1n_\ell}$. Since all these sequence are convergent, this implies that $\{p'_{1n}\}_n$ and $\{p_{1n}\}_n$ have the same limit (call it $p$), and similarly for $\{q'_{1n}\}_n$ and $\{q_{1n}\}_n$ (call it $q$). There must also exist another sub-indexing (call it $\{n_k\}$) such that $p'_{1n_k} = \gamma_{n_k} = q'_{1n_k}$ and hence it must be the case that $p = q$. However, by Assumption \ref{ass:bounded} and \ref{ass:prop/cell_size}, the sequences $\{p_{1n}\}_n$ and $\{q_{1n}\}_n$ cannot come arbitrarily close to each other, and hence $p < q$, and thus we obtain a contradiction. From this we can conclude that $\Gamma_D(j_n, \gamma_n)$ and $\Gamma_U(j_n, \gamma_n)$ converge to $\Gamma_D(j', \gamma'), \Gamma_U(j', \gamma')$, on the cube $\Gamma =  \bigtimes_{j = 1}^d [\lim_n p_{jn}, \lim_n q_{jn}]$, which establishes the base case.
 
Now for the induction step, suppose it is the case that a Cauchy sequence of depth $L$ tree partitions given by $\{\Gamma_D^{(L -1)}(j_n, \gamma_n), \Gamma_U^{(L-1)}(j_n, \gamma_n)\}$ on the cubes $\Gamma_n = \bigtimes_{j=1}^d [p_{jn},q_{jn}] \subseteq \mathcal{X}$ converges to a depth $L$ tree partition $\{\Gamma_D^{(L-1)}(j', \gamma'), \Gamma_U^{(L-1)}(j', \gamma')\}$ on $\Gamma = \bigtimes_{j=1}^d [\lim_n p_{jn},\lim_n q_{jn}]$.  

Next consider a Cauchy sequence of depth $L + 1$ tree partitions given by $\{\Gamma_D^{(L)}(j_n, \gamma_n), \Gamma_U^{(L)}(j_n, \gamma_n)\}$ on the cubes $\Gamma_n = \bigtimes_{j=1}^d [p_{jn}, q_{jn}] = \{\Gamma_D(j_n, \gamma_n), \Gamma_U(j_n, \gamma_n)\}$. By the definition of $\rho_1(\cdot, \cdot)$, it is immediate that $\Gamma_D^{(L)}(j_n, \gamma_n)$ and $\Gamma_U^{(L)}(j_n, \gamma_n)$ are Cauchy sequences of depth $L$ tree partitions on the cubes $\Gamma_D(j_n, \gamma_n)$ and $\Gamma_U(j_n, \gamma_n)$, respectively. Write $\Gamma_D(j_n, \gamma_n) =  \bigtimes_{j = 1}^d [p_{jn}, q'_{jn}]$ and $\Gamma_U(j_n, \gamma_n) = \bigtimes_{j = 1}^d [p'_{jn}, q_{jn}]$, where we use the notation introduced in the base case. By the induction hypothesis it follows that the sequences $\{p_{jn}\}_n$, $\{p'_{jn}\}_n$, $\{q_{jn}\}_n$ and $\{q'_{jn}\}_n$ converge. This further implies (using the same argument as in the base case) that the sequence $\{j_n\}_n$ is eventually constant (call it $j'$), and the sequence $\{\gamma_n\}_n$ converges as well (call the limit $\gamma'$). Once again invoking the induction hypothesis we can conclude that $\Gamma_D^{(L)}(j_n, \gamma_n)$ and $\Gamma_U^{(L)}(j_n, \gamma_n)$ converge to depth $L$ tree partitions $\Gamma_D^{(L)}(j', \gamma'), \Gamma_U^{(L)}(j', \gamma')$, and these two partitions form a depth $L + 1$ tree partition on $\Gamma = \bigtimes_{j = 1}^d [\lim_n p_{jn}, \lim_n q_{jn}]$. This partition is by construction our desired limit, and hence the result follows by induction. 
\end{proof}

\begin{lemma}\label{lem:rho_tb}
Given Assumption \ref{ass:bounded} $(\mathcal{S}_L, \rho_1)$ is a totally bounded metric space.
\end{lemma}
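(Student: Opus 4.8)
The plan is to produce, for each $\varepsilon > 0$, a finite $\varepsilon$-net by discretizing the cut thresholds. Recall from Definition \ref{def:tree_partL} and Remark \ref{rem:nemptycell} that every $S \in \mathcal{S}_L$ is a full binary tree of depth $L$, specified by a cut $(j_v, \gamma_v) \in [d] \times [b_{j_v}, c_{j_v}]$ at each of the $2^L - 1$ internal nodes $v$. The only ``continuous'' part of this data is the collection of thresholds $\{\gamma_v\}$, since the coordinate choices range over a finite set. First I would fix a mesh $\eta > 0$, place on each interval $[b_j, c_j]$ a finite grid $G_j$ of spacing at most $\eta$, and let $\mathcal{N} \subset \mathcal{S}_L$ be the (finite) collection of all trees whose thresholds lie on the appropriate grids. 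It then remains to show that for a suitable $\eta = \eta(\varepsilon)$, every $S \in \mathcal{S}_L$ lies within $\rho_1$-distance $\varepsilon$ of some member of $\mathcal{N}$.

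Given an arbitrary $S \in \mathcal{S}_L$ with cuts $(j_v, \gamma_v)$, I would form $S' \in \mathcal{N}$ by keeping the same topology and coordinates $j_v$ but replacing each $\gamma_v$ by the nearest grid point $\gamma_v'$, so that $|\gamma_v - \gamma_v'| \le \eta$. The key step is the containment: the set $D$ of points $x \in \mathcal{X}$ assigned to different leaves by $S$ and $S'$ satisfies $D \subseteq \bigcup_v \{x : x_{j_v} \in (\gamma_v \wedge \gamma_v',\, \gamma_v \vee \gamma_v']\}$. To see this, trace $x$ through both trees simultaneously; because the topologies and coordinates agree, the two paths coincide as long as the branching decisions agree, and the decision at a node $v$ can differ only if $x_{j_v}$ lies strictly between $\gamma_v$ and $\gamma_v'$, i.e. only if $x$ lies in that node's slab. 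Hence if $x$ avoids every slab, the paths never diverge and $x$ lands in the same leaf.

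With this containment in hand the estimate is routine. Under Assumption \ref{ass:bounded} (and Remark \ref{rem:Xcont}) the law $P_X$ has a density bounded by some $\bar{f} < \infty$ on the bounded box $\mathcal{X}$, so each slab has $P_X$-measure at most $\bar{f}\,\eta\,\prod_{i \ne j_v}(c_i - b_i) \le C_0 \eta$, where $C_0$ depends only on $\bar{f}$, $d$, and the box dimensions. A union bound over the $2^L - 1$ internal nodes gives $P_X(D) \le (2^L - 1) C_0 \eta$. Finally, since any $x \in D$ is removed from its $S$-leaf and added to its $S'$-leaf, it contributes its mass to exactly two of the symmetric differences in $\rho_1(S, S') = \sum_k P_X(S^{-1}(k) \Delta S'^{-1}(k))$, whence $\rho_1(S,S') = 2 P_X(D) \le 2(2^L - 1) C_0 \eta$. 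Choosing $\eta < \varepsilon / \big(2(2^L - 1)C_0\big)$ makes this less than $\varepsilon$, so $\mathcal{N}$ is a finite $\varepsilon$-net and $(\mathcal{S}_L, \rho_1)$ is totally bounded.

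I expect the main obstacle to be the middle step. Because a cut at a deep node partitions a sub-rectangle that is itself determined by the shallower cuts, perturbing a shallow threshold shifts the regions on which the deeper cuts act, so one cannot simply bound the effect of each threshold change in isolation and add them up. The ``first point of divergence'' argument above is what resolves this: it lets all thresholds be rounded at once while still localizing every reassigned point to a single axis-aligned slab, reducing the whole problem to the elementary slab-measure bound above.
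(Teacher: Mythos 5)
Your proof is correct and takes essentially the same route as the paper's: bound $P_X$ by a constant times Lebesgue measure using the bounded density from Assumption \ref{ass:bounded}, then build a finite $\varepsilon$-net by restricting all split thresholds to a grid of sufficiently fine spacing. The paper dispatches the covering claim with ``by construction any tree is at most $\epsilon$ away,'' whereas you supply the missing detail — the first-point-of-divergence argument localizing all reassigned mass to axis-aligned slabs — which is a genuine and welcome tightening of the step the paper glosses over.
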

\begin{proof}
To simplify the notation, in what follows we normalize $\mathcal{X} = [0,1]^d$. Given any measurable set $A$, we have by Assumption \ref{ass:bounded} that
$$P_X(A) = \int_A f_X d\lambda \le C\lambda(A)~,$$
where $\lambda$ is Lebesgue measure, for some constant $C > 0$. The result now follows immediately by constructing the following $\epsilon$-cover: at each depth $L$, consider the set of all trees that can be constructed from the set of splits $\{\frac{\epsilon}{C(2^{2L})}, \frac{2\epsilon}{C(2^{2L})}, ..., 1\}$. By construction any tree in $\mathcal{S}_L$ is at most $\epsilon$ away from some tree in this set.
\end{proof}

\begin{lemma}\label{lem:S_const}
Given Assumptions \ref{ass:bounded}, \ref{ass:prop/cell_size}, and \ref{ass:tree_estimate}. Then the set $\mathcal{T}^*_L$ of maximizers of $V(\cdot)$ exists, and 
$$\inf_{T^*\in\mathcal{T}^*_L} \rho(\hat{T}_m,T^*) \xrightarrow{p} 0~,$$
as $m \rightarrow \infty$.
Furthermore, there exists a sequence of $\sigma\{(W_i)_{i=1}^m\}/\mathcal{B}(\mathcal{T}_L)$-measurable trees $\bar{T}_m \in \mathcal{T}^*_L$ such that
$$\rho(\hat{T}_m,\bar{T}_m) \xrightarrow{p} 0~.$$
\end{lemma}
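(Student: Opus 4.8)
The plan is to combine the compactness of $(\mathcal{T}, \rho)$ (Lemmas \ref{lem:rho_complete} and \ref{lem:rho_tb}) and the continuity of $V(\cdot)$ (Lemma \ref{lem:V_cont}) with the a.s.\ convergence $V(\widetilde{T}_m) \to V^*$ guaranteed by Assumption \ref{ass:tree_estimate}. First I would record that $\mathcal{T}^* = \{T \in \mathcal{T}_L : V(T) = V^*\}$ (the set of \emph{minimizers}, since $V^* = \inf_T V(T)$) is nonempty and compact: because $V$ is continuous on the compact space $\mathcal{T}_L$ it attains its infimum $V^*$, so $\mathcal{T}^* \neq \emptyset$, and $\mathcal{T}^* = V^{-1}(\{V^*\})$ is a closed subset of a compact space, hence compact and separable.

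Next I would prove the a.s.\ convergence of $\inf_{T^* \in \mathcal{T}^*}\rho(\widetilde{T}_m, T^*)$ by a pathwise subsequence argument. Write $d(T) := \inf_{T^* \in \mathcal{T}^*}\rho(T, T^*)$ for the distance to $\mathcal{T}^*$; since $d(\cdot)$ is $1$-Lipschitz in $\rho$ it is continuous, hence Borel measurable, and $d(\widetilde{T}_m)$ is measurable because $\widetilde{T}_m$ is measurable by Assumption \ref{ass:tree_estimate}. Fix a realization on the probability-one event where $V(\widetilde{T}_m) \to V^*$. If $d(\widetilde{T}_m) \not\to 0$ along this realization, there is $\epsilon > 0$ and a subsequence with $d(\widetilde{T}_{m_j}) \ge \epsilon$; by compactness of $\mathcal{T}_L$ this subsequence has a further convergent subsequence $\widetilde{T}_{m_{j_l}} \to T'$, and continuity of $V$ forces $V(T') = \lim_l V(\widetilde{T}_{m_{j_l}}) = V^*$, so $T' \in \mathcal{T}^*$. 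But then $d(\widetilde{T}_{m_{j_l}}) \le \rho(\widetilde{T}_{m_{j_l}}, T') \to 0$, contradicting $d(\widetilde{T}_{m_{j_l}}) \ge \epsilon$. Hence $d(\widetilde{T}_m) \to 0$ on this event, giving the claimed a.s.\ convergence.

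Finally, for the measurable selection $\bar{T}_m \in \mathcal{T}^*$, I would avoid invoking an abstract selection theorem by exploiting the separability of the compact set $\mathcal{T}^*$. Let $\{T^*_i\}_{i \ge 1}$ be a countable dense subset of $\mathcal{T}^*$. For each $m$, set $E_i^m := \{\omega : \rho(\widetilde{T}_m(\omega), T^*_i) \le d(\widetilde{T}_m(\omega)) + 1/m\}$, which is measurable since $\rho(\cdot, T^*_i)$ and $d(\cdot)$ are continuous and $\widetilde{T}_m$ is measurable. Density of $\{T^*_i\}$ in $\mathcal{T}^*$ together with the definition of $d$ shows $\bigcup_i E_i^m = \Omega$ (given $\omega$, pick $T^* \in \mathcal{T}^*$ within $1/(2m)$ of the infimum $d(\widetilde{T}_m(\omega))$ and then $T^*_i$ within $1/(2m)$ of $T^*$). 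Define $\bar{T}_m(\omega) := T^*_{i(\omega)}$, where $i(\omega)$ is the smallest index with $\omega \in E_i^m$; this is a simple-function-valued, hence measurable, map taking values in $\mathcal{T}^*$. By construction $\rho(\widetilde{T}_m, \bar{T}_m) \le d(\widetilde{T}_m) + 1/m$, and since $d(\widetilde{T}_m) \to 0$ a.s.\ from the previous step, $\rho(\widetilde{T}_m, \bar{T}_m) \to 0$ a.s., as desired.

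The main obstacle is this last step: producing $\bar{T}_m$ that is simultaneously an element of $\mathcal{T}^*$, measurable, and asymptotically close to $\widetilde{T}_m$. The projection onto $\mathcal{T}^*$ is attained by compactness but need not be unique, so a naive ``take the nearest point'' rule is not obviously measurable; the explicit countable-dense-subset construction circumvents this, trading exact projection for an approximation whose error $1/m$ vanishes.
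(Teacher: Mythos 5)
Your proof is correct, and the first two parts track the paper's argument closely: the paper also gets existence and compactness of $\mathcal{T}^*$ from continuity of $V$ on the compact space $(\mathcal{T},\rho)$, and it establishes the a.s.\ convergence of the distance to $\mathcal{T}^*$ by first proving the deterministic well-separation claim $\inf_{T \in \mathcal{T}_\epsilon} V(T) > V^* + \eta$ for $\mathcal{T}_\epsilon = \{T : \inf_{T^*\in\mathcal{T}^*}\rho(T,T^*) > \epsilon\}$ via exactly the compactness-plus-subsequence contradiction you run pathwise; the two formulations are logically interchangeable. (You are also right that ``maximizers'' in the statement should read ``minimizers,'' since $V^* = \inf_T V(T)$; the paper's proof treats it as a minimization throughout.) Where you genuinely diverge is the measurable selection of $\bar{T}_m$: the paper takes $\bar{T}_m$ to be an exact nearest point in $\mathcal{T}^*$, attained by compactness, and secures its measurability by citing a measurable selection theorem (Theorem 18.19 of Aliprantis and Border). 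You instead build an explicit countably-valued approximate selection from a countable dense subset of $\mathcal{T}^*$, accepting an error of $1/m$ in the projection. Your route is more elementary and self-contained --- measurability is immediate because $\bar{T}_m$ is a countably-valued function of measurable sets --- and the $1/m$ slack is harmless since the lemma only requires $\rho(\widetilde{T}_m,\bar{T}_m) \to 0$ a.s.; the paper's route yields the slightly stronger statement that the infimum $\inf_{T^*\in\mathcal{T}^*}\rho(\widetilde{T}_m,T^*)$ is exactly attained by a measurable $\bar{T}_m$, at the cost of invoking heavier machinery.
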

\begin{proof}
First note that, since $(\mathcal{T}_L,\rho)$ is a compact metric space and $V(\cdot)$ is continuous, we have that $\mathcal{T}^*_L$ exists and is itself compact. Next, fix a subsequence $\{\hat{T}_{m_\ell}\}$ of $\{\hat{T}_m\}$. By Assumption \ref{ass:tree_estimate}, there exists a sub-sub sequence of $\{\hat{T}_{m_\ell}\}$ (which by an abuse of notation we continue to index by $m_\ell$) such that  
\[\left|V(\hat{T}_{m_\ell}) - V^*\right| \xrightarrow{a.s} 0~.\]
Fix an $\epsilon > 0$, and let $$\mathcal{T}^\epsilon_L := \{T \in \mathcal{T}_L: \inf_{T^* \in \mathcal{T}_L^*} \rho(T,T^*) > \epsilon\}~,$$ then it is the case that
$$\inf_{T \in \mathcal{T}^\epsilon_L} V(T) > V^*~.$$
To see why, suppose not and consider a sequence $\{T_k\} \in \mathcal{T}^\epsilon_L$ such that $V(T_k) \rightarrow V^*$. Now by the compactness of $\mathcal{T}_L$, there exists a convergent subsequence $\{T_{k_j}\}$ of $\{T_k\}$, i.e. $T_{k_j} \rightarrow T'$ for some $T' \in \mathcal{T}_L$. By continuity, it is the case that $V(T_{k_j}) \rightarrow V(T')$ and by assumption we have that $V(T_{k_j}) \rightarrow V^*$, so we see that $T' \in \mathcal{T}^*_L$ but this is a contradiction since it implies that there exist trees in $\mathcal{T}_L^{\epsilon}$ which are arbitrarily close to trees in $\mathcal{T}_L^*$.

Hence, for every $\epsilon > 0$, there exists some $\eta > 0$ such that
$$V(T) > V^* + \eta~,$$
for every $T \in \mathcal{T}^\epsilon_L$. Let $\omega$ be any point in the sample space for which we have that $V(\hat{T}_{m_\ell}(\omega)) \rightarrow V^*$, then it must be the case that $\hat{T}_{m_\ell}(\omega) \notin \mathcal{T}^\epsilon_L$ for $m_\ell$ sufficiently large, and hence
$$\inf_{T^*\in\mathcal{T}^*_L} \rho(\hat{T}_{m_\ell},T^*) \xrightarrow{a.s.} 0~.$$ 
Since every subsequence of $\{\hat{T}_m\}$ contains a further subsequence for which the above convergence holds, it follows that 
\[\inf_{T^*\in\mathcal{T}^*_L} \rho(\hat{T}_m,T^*) \xrightarrow{p} 0~.\]
To make our final conclusion, it suffices to note that $\rho(\cdot,\cdot)$ is itself a continuous function and so by the compactness of $\mathcal{T}^*_L$, there exists some sequence of trees $\bar{T}_m$ such that
$$\inf_{T^*\in\mathcal{T}^*_L} \rho(\hat{T}_m,T^*) = \rho(\hat{T}_m,\bar{T}_m)~.$$
Furthermore, by the continuity of $\rho$, the measurability of $\hat{T}_m$, and the compactness of $\mathcal{T}^*_L$, we can ensure the measurability of $\bar{T}_m$ by invoking a measurable selection theorem (see Theorem 18.19 in  \cite{aliprantis1986}).
\end{proof}
\section{Supplementary Results}\label{sec:appendixB}
\subsection{Supplementary Example}\label{sec:intro_example}
In this section we present a result which complements the discussion in the introduction on how stratification can reduce the variance of the difference-in-means estimator. Using the notation from Section \ref{sec:notation}, let $\{Y_i(1), Y_i(0), X_i\}_{i=1}^n$ be i.i.d and let $Y$ be the observed outcome. Let $S: \mathcal{X} \rightarrow [K]$ be a stratification function.  Consider treatments $\{A_i\}_{i=1}^n$ which are assigned via stratified block randomization using $S$, with a target proportion of $0.5$ in each stratum (see Example \ref{ex:sbr_treat} for a definition). Finally, let
$$\hat{\theta} = \frac{1}{n_1}\sum_{i=1}^{n}Y_iA_i - \frac{1}{n - n_1}\sum_{i=1}^nY_i(1-A_i)~,$$
where $n_1 = \sum_{i=1}^n {\bf 1}\{A_i = 1\}$. It can be shown using Theorem 4.1 of \cite{bugni2015} that 
$$\sqrt{n}(\hat{\theta} - \theta) \xrightarrow{d} N(0, V)~,$$
with $V = V_Y - V_S$, where $V_Y$ does not depend on $S$ and 
$$V_S := E\left[\left(E[Y(1)|S(X)] + E[Y(0)|S(X)]\right)^2\right]~.$$
In contrast, if treatment is assigned without any stratification, then
$$\sqrt{n}(\hat{\theta} - \theta) \xrightarrow{d} N(0, V')~,$$
with $V' = V_Y - E[Y(1) + Y(0)]^2$.
It follows by Jensen's inequality that $V_S > E[Y(1) + Y(0)]^2$ as long as $E[Y(1) + Y(0) | S(X) = k]$ is not constant for all $k$. Hence we see that stratification lowers the asymptotic variance of the difference in means estimator as long as the outcomes are related to the covariates as described above.

\subsection{``Smart" Pooling}\label{sec:smart_pool}
In this section we propose a ``smart" pooling estimation strategy when the pilot data $(Y_j, A_j, X_j)_{j=1}^m$ comes from a randomized experiment conducted via simple random assignment (Example \ref{ex:iid_treat}) without stratification. Formally, let the assignment procedure be such that $\{A_j\}_{j=1}^m$ is an i.i.d sequence, independent of everything else, such that $P(A_j = 1) = \pi$ for some $\pi \in (0, 1)$. Fix a stratification tree $T \in \mathcal{T}_L$ and consider second-wave data $\{(Y_i, A_i(T), X_i)\}_{i=1}^n$ which satisfies the assumptions of Theorem \ref{thm:main}. Suppose we estimate the ATE by computing $\hat{\theta}(T)$ on \emph{all} of the data simultaneously; call the resulting estimator $\hat{\theta}_{SW}(T)$. Then it can be shown that
\[\sqrt{N}(\hat{\theta}_{SW}(T) - \theta) \xrightarrow{d} N(0, V_\lambda(T))~,\]
where 
\[V_\lambda(T) = \sum_{k=1}^KP(S(X) = k)\left[\left(E[Y(1) - Y(0)|S(X) = k] - E[Y(1) - Y(0)]\right)^2 + \left(\frac{\sigma^2_0(k)}{1 - \pi_\lambda(k)} + \frac{\sigma^2_1(k)}{\pi_\lambda(k)}\right)\right]~,\]
with
\[\pi_\lambda(k) = \lambda\pi + (1 - \lambda)\pi(k)~,\]
and as in Section \ref{sec:pool} we assume that $m/N \rightarrow \lambda$. 
We note that this expression coincides with the asymptotic variance for the pooled IPW estimator considered in \cite{hahn2011}. With this expression in hand, we could construct an empirical analog of $V_\lambda(T)$ and proceed as we have done before. However, to our knowledge the techniques developed in this paper are not sufficient to establish the formal properties of this procedure. In particular, it seems that many of the steps which involve first conditioning on the pilot data (which we use to deal with the non-uniqueness of the optimal tree) can no longer be applied. For this reason, we leave the study of the formal properties of this procedure to future work. 

\subsection{Large Sample Behavior of Cross-Validation}\label{sec:CV_large}
In this section we derive a result about the large sample behavior of $\hat{L}_m$. We show that if each $\hat{T}^{(L, d)}_m$ for $d = 1,2$ is estimated by minimizing the empirical variance over $\mathcal{T}_L$ as described in Section \ref{sec:mainres}, and the pilot experiment is conducted using SBR, then under appropriate assumptions, $\hat{L}$ equals $\bar{L}$ with probability approaching $1$ as $m \rightarrow \infty$:
\begin{proposition}\label{prop:CV_const}
Suppose the pilot data come from a RCT performed using stratified block randomization with assignment fraction $\pi \in (0, 1)$ across all strata, and $\hat{T}^{(L,d)}_m$ for $d = 1,2$ are estimated by minimizing the empirical variance. Suppose $V^*_{\bar{L}} < V^*_L$
for all $L < \bar{L}$. Then under Assumptions \ref{ass:bounded}, \ref{ass:prop/cell_size}, and \ref{ass:cell_grid}, 
\[P(\hat{L} = \bar{L}) \rightarrow 1~,\]
as $m \rightarrow \infty$.
\end{proposition}

Again we note that Proposition \ref{prop:CV_const} \emph{does not} help us conclude that $\hat{T}_m^{CV}$ should perform any better than $\hat{T}_{\bar{L}}$ in \emph{finite} samples. However it does help shed light on the large sample behavior we observe for $\hat{T}_m^{CV}$ in the simulations of Section \ref{sec:simulations}.  

\noindent{\bf Proof of Proposition \ref{prop:CV_const}}
\begin{proof}
First we show that 
\begin{equation}\label{eq:CV_conv}
\sup_{L}\left|\widetilde{V}^{CV}_L - V^*_L\right| \xrightarrow{p} 0~.
\end{equation}
To do this we show 
\[\left|\widetilde{V}^{(1)}(\hat{T}^{(2)}_L) - V^*_L\right| \xrightarrow{p} 0~,\]
and also 
\[\left|\widetilde{V}^{(2)}(\hat{T}^{(1)}_L) - V^*_L\right| \xrightarrow{p} 0~,\]
for every $L$ (where here we exploit the fact that $L$ can take on only finitely many values). We prove the first claim; the second follows identically.
By the triangle inequality,
$$\left|\widetilde{V}^{(1)}(\hat{T}^{(2)}_L) - V^*_L\right| \le \left|\widetilde{V}^{(1)}(\hat{T}^{(2)}_L) - \widetilde{V}^{(2)}(\hat{T}^{(2)}_L)\right| + \left|\widetilde{V}^{(2)}(\hat{T}^{(2)}_L) - V^*_L\right|~.$$
Consider the first term on the RHS. This is bounded above by
$$\sup_{T} \left|\widetilde{V}^{(1)}(T) - \widetilde{V}^{(2)}(T)\right|~,$$
and another application of the triangle inequality yields
$$\sup_{T} \left|\widetilde{V}^{(1)}(T) - \widetilde{V}^{(2)}(T)\right| \le \sup_{T} \left|\widetilde{V}^{(1)}(T) - V(T)\right| + \sup_{T} \left|\widetilde{V}^{(2)}(T) - V(T)\right|~,$$
with both terms converging to 0 in probability by Lemmas \ref{lem:CV_pilot} and Lemma \ref{lem:emp_unif}. Next consider the second term on the RHS. Applying the triangle inequality,
$$\left|\widetilde{V}^{(2)}(\hat{T}^{(2)}_L) - V^*_L\right| \le \left|\widetilde{V}^{(2)}(\hat{T}^{(2)}_L) - V(\hat{T}^{(2)}_L)\right| + \left|V(\hat{T}^{(2)}_L) - V^*_L\right|~.$$
Here the first term converges in probability to zero by the same argument we have made above, and the second term converges in probability to zero by Lemma \ref{lem:CV_pilot} and Proposition \ref{prop:emp_const}.

Now that we have established (\ref{eq:CV_conv}), we prove the statement of the proposition. To that end, note that 
\[\widetilde{V}^{CV}_{\hat{L}} \le \widetilde{V}^{CV}_{\bar{L}} = V^*_{\bar{L}} + o_P(1)~,\]
where the inequality follows by definition and the equality by the fact that $\widetilde{V}^{CV}_{\bar{L}} \xrightarrow{p} V^*_{\bar{L}}$. It is thus the case that 
\[V^*_{\hat{L}} - V^*_{\bar{L}} \le V^*_{\hat{L}} - \widetilde{V}^{CV}_{\hat{L}} - o_P(1) \le \sup_L\left|V^*_{L} - \widetilde{V}^{CV}_L\right| - o_P(1) \xrightarrow{p} 0~,\]
 by (\ref{eq:CV_conv}). Since $V^*_{\bar{L}} < V^*_{L}$ for all $L \ne \bar{L}$, there exists an $\eta > 0$ such that 
 \[V^*_{\bar{L}} < V^*_{L} - \eta~,\]
 for all $L \ne \bar{L}$. Hence
 \[P(\hat{L} \ne \bar{L}) \le P(V^*_{\bar{L}} < V^*_{\hat{L}} - \eta) = P(V^*_{\hat{L}} - V^*_{\bar{L}} > \eta) \rightarrow 0~,\]
 so that 
 \[P(\hat{L} = \bar{L}) \rightarrow 1~,\]
 as desired.
\end{proof}

\begin{lemma}\label{lem:CV_pilot}
Let $\mathcal{D}_\ell$, $\ell = 1, 2,$ denote the two folds of the pilot sample for the cross-validation procedure described in Section \ref{sec:extensions}. Using the notation of Proposition \ref{prop:emp_const}, suppose the assignment procedure in the pilot is given by SBR with equal assignment proportion $\pi$ in each stratum. Let $m^{(\ell)}(z; \zeta)$ and $m_1^{(\ell)}(z;\zeta)$  denote the total number of observations in fold $\ell$ and stratum $z$, and the total number of treated observations in fold $\ell$ and stratum $z$, respectively. Then
\[\frac{m_1^{(\ell)}(z;\zeta)}{m^{(\ell)}(z;\zeta)} \xrightarrow{p} \pi~,\]
for $\ell = 1, 2$ and all $z$. 
\end{lemma}
\begin{proof}
We prove the result for $\ell = 1$, since the  $\ell = 2$ case follows identically. In what follows, we implicitly condition throughout on the event that $m_a(z;\zeta) \ge 2$ in order to guarantee that we do not divide by zero. It can be shown that this event occurs with probability approaching 1. Let $C_j \in \{0, 1\}$ be a binary variable which denotes whether or not observation $j$ is in fold $1$. Then our goal is to show that 
\[\frac{m_1^{(\ell)}(z;\zeta)}{m^{(\ell)}(z;\zeta)} = \frac{\sum_{j = 1}^mA_j{\bf 1}\{\zeta_j = z\}C_j}{\sum_{j=1}^m{\bf 1}\{\zeta_j = z\}C_j} \xrightarrow{p} \pi~.\]
We proceed by arguing conditionally on $\{C_j\}_{j =1}^m$ and $\{\zeta_j\}_{j = 1}^m$. Note that conditionally on $\{C_j\}_{j =1}^m$ and $\{\zeta_j\}_{j = 1}^m$, SBR is effectively drawing a sample of size $m_a(z;\zeta)$ without replacement from the set $\{C_j: \zeta_j = z\}$ for every $z$. We can thus apply standard results from the literature on survey sampling from a finite population. By Theorem 2.1 in \cite{cochran2007}, 
\[E\left[\frac{1}{m_a(z; \zeta)}\sum_{j=1}^mA_j{\bf 1}\{\zeta_j = z\}C_j \Big| \{C_j\}_{j =1}^m, \{\zeta_j\}_{j = 1}^m\right] = \frac{1}{m(z; \zeta)}\sum_{j=1}^m{\bf 1}\{\zeta_j = z\}C_j~,\]
hence
\[E\left[\frac{\sum_{j = 1}^mA_j{\bf 1}\{\zeta_j = z\}C_j}{\sum_{j=1}^m{\bf 1}\{\zeta_j = z\}C_j} \Big| \{C_j\}_{j =1}^m, \{\zeta_j\}_{j = 1}^m \right] =E\left[\frac{m_a(z; \zeta)}{\sum_{j=1}^m{\bf 1}\{\zeta_j = z\}C_j}\frac{1}{m_a(z; \zeta)}\sum_{j=1}^mA_j{\bf 1}\{\zeta_j = z\}C_j \Big| \{C_j\}_{j =1}^m, \{\zeta_j\}_{j = 1}^m\right]  \]
\[ = \frac{\lfloor \pi m(z;\zeta) \rfloor}{m(z;\zeta)}~,\]
where, importantly, we have used the fact that $m_a(z;\zeta) = \lfloor \pi m(z;\zeta) \rfloor$. Let $\xi_m =  \frac{\lfloor \pi m(z;\zeta) \rfloor}{m(z;\zeta)} - \pi$. By the law of iterated expectations
\[E\left[\left|\xi_m\right|\right] = E\left[\left|\xi_m\right| \Big| \left|\xi_m\right|>\frac{\epsilon}{2}\right]P\left(\left|\xi_m\right| > \frac{\epsilon}{2}\right) +E\left[\left|\xi_m\right| \Big| \left|\xi_m\right|\le \frac{\epsilon}{2}\right]P\left(\left|\xi_m\right| \le \frac{\epsilon}{2}\right)~.\]  Conditional on the event that $m_a(z;\zeta) \ge 2$, $\xi_m \le 2\pi$ for all $m$, hence we have that
\[E\left[\left|\xi_m\right|\right] \le 2\pi P\left(\left|\xi_m\right| > \frac{\epsilon}{2}\right) + \frac{\epsilon}{2}~.\] 
Since $\xi_m \xrightarrow{p} 0$, it follows that $E\left[\left|\xi_m\right|\right] \rightarrow 0$ and hence
\[E\left[\frac{\sum_{j = 1}^mA_j{\bf 1}\{\zeta_j = z\}C_j}{\sum_{j=1}^m{\bf 1}\{\zeta_j = z\}C_j}\right] \rightarrow \pi~.\]
It thus suffices to show that the variance converges to zero. To that end, by the law of total variance,
\[Var\left(\frac{\sum_{j = 1}^mA_j{\bf 1}\{\zeta_j = z\}C_j}{\sum_{j=1}^m{\bf 1}\{\zeta_j = z\}C_j}\right) = E\left[Var\left(\frac{\sum_{j = 1}^mA_j{\bf 1}\{\zeta_j = z\}C_j}{\sum_{j=1}^m{\bf 1}\{\zeta_j = z\}C_j} \Big| \{C_j\}_{j =1}^m, \{\zeta_j\}_{j = 1}^m\right)\right] + Var\left( \frac{\lfloor \pi m(z;\zeta) \rfloor}{m(z;\zeta)}\right)~.\]
The second term on the RHS converges to zero by the same argument as we used for the expectation above. By Theorem 2.2 in \cite{cochran2007},
\[Var\left(\frac{1}{m_a(z; \zeta)}\sum_{j=1}^mA_j{\bf 1}\{\zeta_j = z\}C_j \Big| \{C_j\}_{j =1}^m, \{\zeta_j\}_{j = 1}^m\right) = \frac{\sum_{j=1}^m(C_j - \bar{C}_z)^2{\bf 1}\{\zeta_j = z\}}{m(z;
\zeta)(m(z;\zeta)-1)}\frac{m(z;\zeta)}{m_a(z;\zeta)}\left(1 - \frac{m_a(z;\zeta)}{m(z;\zeta)}\right)~,\]
where $\bar{C}_z = \frac{1}{m(z;\zeta)}\sum_{j=1}^mC_j{\bf 1}\{\zeta_j = z\}$. Hence
\[Var\left(\frac{\sum_{j = 1}^mA_j{\bf 1}\{\zeta_j = z\}C_j}{\sum_{j=1}^m{\bf 1}\{\zeta_j = z\}C_j}  \Big| \{C_j\}_{j =1}^m, \{\zeta_j\}_{j = 1}^m\right) = Var\left(\frac{m_a(z; \zeta)}{\sum_{j=1}^m{\bf 1}\{\zeta_j = z\}C_j}\frac{1}{m_a(z; \zeta)}\sum_{j=1}^mA_j{\bf 1}\{\zeta_j = z\}C_j  \Big| \{C_j\}_{j =1}^m, \{\zeta_j\}_{j = 1}^m\right)\]
\[ = \frac{\sum_{j=1}^m(C_j - \bar{C}_z)^2{\bf 1}\{\zeta_j = z\}}{m(z;
\zeta)(m(z;\zeta)-1)\bar{C}_z^2}\left(\frac{m_a(z;\zeta)}{m(z;\zeta)}\right)\left(1 - \frac{m_a(z;\zeta)}{m(z;\zeta)}\right)~.\]
First we show this is bounded (conditional on the event that $m_a(z;\zeta) \ge 2$). The second and third terms are bounded by $1$. For the first term, note that 
\[\frac{\sum_{j=1}^m(C_j - \bar{C}_z)^2{\bf 1}\{\zeta_j = z\}}{m(z;
\zeta)(m(z;\zeta)-1)\bar{C}_z^2} \le \frac{\sum_{j=1}^mC_j^2{\bf 1}\{\zeta_j = z\}}{m(z;
\zeta)(m(z;\zeta)-1)\bar{C}_z^2} = \frac{m(z;\zeta)}{m(z;\zeta) - 1}\frac{\sum_{j=1}^mC_j{\bf 1}\{\zeta_j = z\}}{\left(\sum_{j=1}^mC_j{\bf 1}\{\zeta_j = z\}\right)^2} \le 2~.\]
Next, we note that first term converges in probability to zero, and the next two terms are bounded by $1$. It thus follows from the same argument as we used for the expectation above that 
\[E\left[Var\left(\frac{\sum_{j = 1}^mA_j{\bf 1}\{\zeta_j = z\}C_j}{\sum_{j=1}^m{\bf 1}\{\zeta_j = z\}C_j} \Big|  \{C_j\}_{j =1}^m, \{\zeta_j\}_{j = 1}^m \right)\right] \rightarrow 0~,\]
and hence that
\[Var\left(\frac{\sum_{j = 1}^mA_j{\bf 1}\{\zeta_j = z\}C_j}{\sum_{j=1}^m{\bf 1}\{\zeta_j = z\}C_j}\right)\rightarrow 0~,\]
as desired.
\end{proof}

\subsection{Supplemental Simulation Exercises}\label{sec:supp_sim}

\subsubsection{Changing the Number of CV Folds}
In this section we repeat the simulation exercise of Section \ref{sec:simulations} with alternative choices for the number of folds in the $B$-fold cross validation procedure. Tables \ref{tab:model1CV}, \ref{tab:model2CV}, \ref{tab:model3CV} present the results. We find that, at least for these designs, $2$-fold cross validation seems to outperform both $3$ and $5$-fold cross validation for most specifications.

\begin{table}[p]
  \centering
    \begin{tabular}{ccccccc}
       \toprule
    \multicolumn{2}{c}{Sample Size} & \multirow{2}[4]{*}{Randomization Procedure} & \multicolumn{4}{c}{Criteria} \\
\cmidrule{1-2}\cmidrule{4-7}    Pilot & Main  &       & Coverage &  $\%\Delta$Length & Power & $\%\Delta$RMSE \\
 \midrule
    \multirow{6}[1]{*}{100} & \multirow{6}[1]{*}{4900} & No Stratification & 95.3  & 0.0   & 77.2  & 0.0 \\
          &       & Strat. Tree & 94.9  & -1.2  & 78.9  & -0.4 \\
          &       & CV Tree 2 & 95.0  & -9.6  & 85.0  & -9.2 \\
          &       & CV Tree 3 & 94.9  & -8.6  & 84.1  & -8.2 \\
          &       & CV Tree 5 & 94.9  & -7.3  & 83.4  & -7.7 \\
          &       & Optimal Tree & 94.9  & -18.7 & 91.3  & -18.6 \\
 \midrule
    \multirow{6}[0]{*}{500} & \multirow{6}[0]{*}{4500} & No Stratification & 95.1  & 0.0   & 78.1  & 0.0 \\
          &       & Strat. Tree & 94.8  & -14.3 & 89.2  & -12.3 \\
          &       & CV Tree 2 & 94.3  & -14.1 & 87.9  & -11.3 \\
          &       & CV Tree 3 & 94.9  & -13.1 & 88.1  & -12.2 \\
          &       & CV Tree 5 & 94.4  & -12.2 & 87.3  & -10.2 \\
          &       & Optimal Tree & 94.5  & -17.6 & 91.1  & -15.5 \\
 \midrule
    \multirow{6}[1]{*}{1500} & \multirow{6}[1]{*}{3500} & No Stratification & 94.3  & 0.0   & 77.0  & 0.0 \\
          &       & Strat. Tree & 95.0  & -14.3 & 88.4  & -14.8 \\
          &       & CV Tree 2 & 94.9  & -13.8 & 88.6  & -15.3 \\
          &       & CV Tree 3 & 94.5  & -13.4 & 87.8  & -13.6 \\
          &       & CV Tree 5 & 95.2  & -12.9 & 87.1  & -14.1 \\
          &       & Optimal Tree & 95.0  & -15.1 & 89.4  & -16.2 \\
    \bottomrule

    \end{tabular}%
   \caption{Simulation Results for Model 1}
  \label{tab:model1CV}%
\end{table}%

\begin{table}[p]
  \centering
    \begin{tabular}{ccccccc}
    \toprule
    \multicolumn{2}{c}{Sample Size} & \multirow{2}[4]{*}{Randomization Procedure} & \multicolumn{4}{c}{Criteria} \\
\cmidrule{1-2}\cmidrule{4-7}    Pilot & Main  &       & Coverage &  $\%\Delta$Length & Power & $\%\Delta$RMSE \\
    \midrule
    \multirow{6}[1]{*}{100} & \multirow{6}[1]{*}{4900} & No Stratification & 95.2  & 0.0   & 56.4  & 0.0 \\
          &       & Strat. Tree & 94.8  & 8.0   & 50.6  & 10.2 \\
          &       & CV Tree 2 & 94.9  & -7.8  & 63.7  & -5.8 \\
          &       & CV Tree 3 & 94.5  & -5.7  & 61.7  & -3.3 \\
          &       & CV Tree 5 & 94.8  & -1.7  & 57.4  & 0.6 \\
          &       & Optimal Tree & 94.8  & -19.1 & 74.5  & -17.6 \\
    \midrule
    \multirow{6}[0]{*}{500} & \multirow{6}[0]{*}{4500} & No Stratification & 94.7  & 0.0   & 56.4  & 0.0 \\
          &       & Strat. Tree & 94.6  & -12.8 & 67.4  & -12.2 \\
          &       & CV Tree 2 & 94.6  & -14.0 & 69.3  & -13.9 \\
          &       & CV Tree 3 & 94.5  & -13.7 & 68.5  & -14.0 \\
          &       & CV Tree 5 & 94.8  & -13.5 & 69.3  & -14.2 \\
          &       & Optimal Tree & 95.0  & -17.5 & 73.1  & -18.0 \\
    \midrule
    \multirow{6}[1]{*}{1500} & \multirow{6}[1]{*}{3500} & No Stratification & 95.0  & 0.0   & 55.5  & 0.0 \\
          &       & Strat. Tree & 94.9  & -12.8 & 68.0  & -12.2 \\
          &       & CV Tree 2 & 94.9  & -12.5 & 67.7  & -12.0 \\
          &       & CV Tree 3 & 94.9  & -12.0 & 67.5  & -11.8 \\
          &       & CV Tree 5 & 95.4  & -11.5 & 67.0  & -11.8 \\
          &       & Optimal Tree & 95.0  & -13.9 & 69.0  & -13.1 \\
    \bottomrule
    \end{tabular}%
\caption{Simulation Results for Model 2}
  \label{tab:model2CV}%
\end{table}%

\begin{table}[htbp]
  \centering
    \begin{tabular}{ccccccc}
    \toprule
    \multicolumn{2}{c}{Sample Size} & \multirow{2}[4]{*}{Stratification Method} & \multicolumn{4}{c}{Criteria} \\
\cmidrule{1-2}\cmidrule{4-7}    Pilot & Main  &       & Coverage &  $\%\Delta$Length & Power & $\%\Delta$RMSE \\
    \midrule
    \multirow{6}[1]{*}{100} & \multirow{6}[1]{*}{4900} & No Stratification & 95.4  & 0.0   & 30.6  & 0.0 \\
          &       & Strat. Tree & 94.7  & 15.5  & 23.9  & 17.8 \\
          &       & CV Tree 2 & 95.4  & -1.4  & 30.9  & -0.8 \\
          &       & CV Tree 3 & 95.2  & -1.5  & 30.7  & -1.9 \\
          &       & CV Tree 5 & 95.0  & -1.7  & 31.7  & -0.7 \\
          &       & Optimal Tree & 95.0  & -7.1  & 34.1  & -8.2 \\
    \midrule
    \multirow{6}[0]{*}{500} & \multirow{6}[0]{*}{4500} & No Stratification & 95.0  & 0.0   & 30.5  & 0.0 \\
          &       & Strat. Tree & 94.8  & -2.2  & 31.3  & -2.2 \\
          &       & CV Tree 2 & 95.0  & -2.8  & 31.0  & -3.6 \\
          &       & CV Tree 3 & 95.2  & -2.7  & 32.0  & -2.3 \\
          &       & CV Tree 5 & 94.9  & -2.5  & 31.9  & -2.7 \\
          &       & Optimal Tree & 94.5  & -6.7  & 34.4  & -5.0 \\
    \midrule
    \multirow{6}[1]{*}{1500} & \multirow{6}[1]{*}{3500} & No Stratification & 94.9  & 0.0   & 31.7  & 0.0 \\
          &       & Strat. Tree & 94.7  & -4.5  & 33.2  & -3.9 \\
          &       & CV Tree 2 & 94.7  & -3.9  & 33.4  & -3.5 \\
          &       & CV Tree 3 & 94.2  & -3.3  & 33.3  & -2.1 \\
          &       & CV Tree 5 & 94.6  & -3.0  & 32.7  & -2.5 \\
          &       & Optimal Tree & 94.3  & -5.6  & 34.3  & -4.3 \\
    \bottomrule
    \end{tabular}%
     \caption{Simulation Results for Model 3}
  \label{tab:model3CV}%
\end{table}%

\subsubsection{Changing the Tree Depth}
In this section we repeat the simulation exercise of Section \ref{sec:simulations}, with alternative choices for the maximum tree depth (we do not report the results for the optimal infeasible tree for the case $L = 5$ since we were not able to reliably compute this tree on a large auxiliary sample). In all cases we consider a pilot/main sample split of 500/4500. Tables \ref{tab:model1Var}, \ref{tab:model2Var}, \ref{tab:model3Var} present the results. We find that, at least for these designs, there are decreasing returns to increasing maximum tree depth. However, in all cases we find that the CV tree outperforms the ad-hoc designs. The fifth column of the table reports the average (estimated) excess variance for the stratification tree and CV tree, as defined in Section \ref{sec:extensions}. As expected, the average excess variance increases with tree depth for the stratification tree, but the CV tree, by trading off estimation error for approximation error, is able to achieve a smaller excess variance for large tree depths.

\begin{table}[htbp]
  \centering
    \begin{tabular}{ccccccr}
    \toprule
    \multirow{2}[4]{*}{Depth} & \multirow{2}[4]{*}{Stratification Method} & \multicolumn{5}{c}{Criteria} \\
\cmidrule{3-7}          &       & Coverage & $\%\Delta$Length & Power & $\%\Delta$RMSE  & \multicolumn{1}{c}{Excess Variance} \\
    \midrule
    \multirow{6}[2]{*}{1} & No Stratification & 94.8  & 0.0   & 77.8  & 0.0   &  \\
          & Ad-Hoc & 95.5  & -3.6  & 80.1  & -5.8  &  \\
          & Ad-Hoc Neyman & 95.1  & -4.2  & 80.5  & -4.6  &  \\
          & Strat. Tree & 95.1  & -9.1  & 84.5  & -10.1 & \multicolumn{1}{c}{0.12} \\
          & CV Tree 2 & 95.2  & -9.1  & 84.6  & -10.3 & \multicolumn{1}{c}{0.14} \\
          & Optimal Tree & 94.6  & -9.8  & 85.3  & -9.4  &  \\
    \midrule
    \multirow{6}[2]{*}{2} & No Stratification & 95.2  & 0.0   & 77.7  & 0.0   &  \\
          & Ad-Hoc & 95.0  & -5.6  & 81.9  & -5.3  &  \\
          & Ad-Hoc Neyman & 94.7  & -6.9  & 82.0  & -6.2  &  \\
          & Strat. Tree & 94.6  & -14.1 & 88.1  & -13.4 & \multicolumn{1}{c}{0.27} \\
          & CV Tree 2 & 95.0  & -13.1 & 87.7  & -12.7 & \multicolumn{1}{c}{0.44} \\
          & Optimal Tree & 95.0  & -15.6 & 89.5  & -15.6 &  \\
    \midrule
    \multirow{6}[2]{*}{3} & No Stratification & 95.1  & 0.0   & 78.1  & 0.0   &  \\
          & Ad-Hoc & 94.8  & -6.9  & 83.6  & -6.2  &  \\
          & Ad-Hoc Neyman & 94.5  & -8.9  & 85.4  & -7.6  &  \\
          & Strat. Tree & 94.8  & -14.3 & 89.2  & -12.3 & \multicolumn{1}{c}{0.60} \\
          & CV Tree 2 & 94.3  & -14.1 & 87.9  & -11.3 & \multicolumn{1}{c}{0.62} \\
          & Optimal Tree & 94.5  & -17.6 & 91.1  & -15.5 &  \\
    \midrule
    \multirow{6}[2]{*}{4} & No Stratification & 95.0  & 0.0   & 77.2  & 0.0   &  \\
          & Ad-Hoc & 94.9  & -7.9  & 83.5  & -8.1  &  \\
          & Ad-Hoc Neyman & 94.7  & -10.5 & 85.5  & -10.7 &  \\
          & Strat. Tree & 95.2  & -13.1 & 87.6  & -13.6 & \multicolumn{1}{c}{0.96} \\
          & CV Tree 2 & 94.5  & -14.9 & 88.6  & -14.4 & \multicolumn{1}{c}{0.63} \\
          & Optimal Tree & 94.6  & -18.5 & 91.0  & -17.8 &  \\
    \midrule
    \multirow{6}[2]{*}{5} & No Stratification & 94.7  & 0.0   & 77.1  & 0.0   &  \\
          & Ad-Hoc & 95.3  & -8.6  & 84.2  & -10.2 &  \\
          & Ad-Hoc Neyman & 94.3  & -11.5 & 85.2  & -11.0 &  \\
          & Strat. Tree & 95.2  & -11.8 & 85.6  & -11.6 &  \\
          & CV Tree 2 & 94.9  & -15.6 & 88.7  & -16.5 &  \\
          & Optimal Tree & -     & -     &       & -     &  \\
    \bottomrule
    \end{tabular}%
  \caption{Simulation Results for Model 1}
 \label{tab:model1Var}
\end{table}%

\begin{table}[htbp]
  \centering
    \begin{tabular}{ccccccr}
    \toprule
     \multirow{2}[4]{*}{Depth} & \multirow{2}[4]{*}{Stratification Method} & \multicolumn{5}{c}{Criteria} \\
\cmidrule{3-7}          &       & Coverage & $\%\Delta$Length & Power & $\%\Delta$RMSE  & \multicolumn{1}{c}{Excess Variance} \\
    \midrule
    \multirow{6}[2]{*}{1} & No Stratification & 95.0  & 0.0   & 56.8  & 0.0   &  \\
          & Ad-Hoc & 95.0  & -0.8  & 57.4  & -1.2  &  \\
          & Ad-Hoc Neyman & 95.5  & -0.7  & 58.7  & -0.2  &  \\
          & Strat. Tree & 94.9  & -13.5 & 70.4  & -13.1 & \multicolumn{1}{c}{0.08} \\
          & CV Tree 2  & 94.8  & -13.4 & 68.8  & -12.5 & \multicolumn{1}{c}{0.08} \\
          & Optimal Tree & 94.6  & -14.0 & 69.3  & -13.3 &  \\
    \midrule
    \multirow{6}[2]{*}{2} & No Stratification & 94.7  & 0.0   & 56.9  & 0.0   &  \\
          & Ad-Hoc & 95.1  & -1.6  & 58.0  & -2.4  &  \\
          & Ad-Hoc Neyman & 94.9  & -1.3  & 57.8  & -0.6  &  \\
          & Strat. Tree & 95.1  & -14.6 & 71.2  & -15.6 & \multicolumn{1}{c}{0.29} \\
          & CV Tree 2 & 95.4  & -13.9 & 69.9  & -15.2 &  \multicolumn{1}{c}{0.38}\\
          & Optimal Tree & 95.2  & -16.8 & 73.0  & -17.6 &  \\
    \midrule
    \multirow{6}[2]{*}{3} & No Stratification & 94.7  & 0.0   & 56.4  & 0.0   &  \\
          & Ad-Hoc & 95.0  & -2.1  & 58.7  & -3.6  &  \\
          & Ad-Hoc Neyman & 94.9  & -1.4  & 57.0  & -2.4  &  \\
          & Strat. Tree & 94.6  & -12.8 & 67.4  & -12.2 & \multicolumn{1}{c}{0.61} \\
          & CV Tree 2  & 94.6  & -14.0 & 69.3  & -13.9 & \multicolumn{1}{c}{0.45} \\
          & Optimal Tree & 95.0  & -17.5 & 73.1  & -18.0 &  \\
    \midrule
    \multirow{6}[2]{*}{4} & No Stratification & 95.4  & 0.0   & 58.3  & 0.0   &  \\
          & Ad-Hoc & 95.1  & -2.8  & 60.6  & -2.3  &  \\
          & Ad-Hoc Neyman & 94.6  & -1.7  & 57.7  & -0.1  &  \\
          & Strat. Tree & 94.5  & -10.6 & 66.2  & -9.0  & \multicolumn{1}{c}{0.94} \\
          & CV Tree 2 & 94.6  & -14.2 & 69.3  & -13.0 & \multicolumn{1}{c}{0.45} \\
          & Optimal Tree & 95.3  & -17.8 & 73.2  & -17.5 &  \\
    \midrule
    \multirow{6}[2]{*}{5} & No Stratification & 95.0  & 0.0   & 56.5  & 0.0   &  \\
          & Ad-Hoc & 94.7  & -3.2  & 59.6  & -2.8  &  \\
          & Ad-Hoc Neyman & 95.2  & -1.6  & 57.3  & -2.1  &  \\
          & Strat. Tree & 94.8  & -8.7  & 64.8  & -8.8  &  \\
          & CV Tree 2  & 95.1  & -14.4 & 68.4  & -14.2 &  \\
          & Optimal Tree & -     & -     &       & -     &  \\
    \bottomrule
    \end{tabular}%
  \caption{Simulation Results for Model 2}
 \label{tab:model2Var}
\end{table}%

\begin{table}[htbp]
  \centering
    \begin{tabular}{ccccccr}
    \toprule
     \multirow{2}[4]{*}{Depth} & \multirow{2}[4]{*}{Stratification Method} & \multicolumn{5}{c}{Criteria} \\
\cmidrule{3-7}          &       & Coverage & $\%\Delta$Length & Power & $\%\Delta$RMSE  & \multicolumn{1}{c}{Excess Variance} \\
    \midrule
    \multirow{6}[2]{*}{1} & No Stratification & 94.9  & 0.0   & 31.1  & 0.0   &  \\
          & Ad-Hoc & 94.9  & -0.7  & 32.0  & 0.3   &  \\
          & Ad-Hoc Neyman & 95.3  & -0.6  & 32.1  & -0.3  &  \\
          & Strat. Tree & 95.0  & -1.5  & 31.6  & -1.1  & \multicolumn{1}{c}{0.36} \\
          & CV Tree 2 & 95.0  & -1.3  & 32.0  & -0.4  & \multicolumn{1}{c}{0.45} \\
          & Optimal Tree & 95.3  & -2.0  & 31.5  & -2.3  &  \\
    \midrule
    \multirow{6}[2]{*}{2} & No Stratification & 94.7  & 0.0   & 31.4  & 0.0   &  \\
          & Ad-Hoc & 94.9  & -1.4  & 31.0  & -1.9  &  \\
          & Ad-Hoc Neyman & 95.3  & -1.1  & 32.0  & -2.7  &  \\
          & Strat. Tree & 95.1  & -2.4  & 32.5  & -3.9  & \multicolumn{1}{c}{1.22} \\
          & CV Tree 2  & 95.0  & -2.2  & 32.3  & -4.5  & \multicolumn{1}{c}{1.37} \\
          & Optimal Tree & 95.2  & -4.2  & 33.3  & -5.1  &  \\
    \midrule
    \multirow{6}[2]{*}{3} & No Stratification & 95.0  & 0.0   & 30.5  & 0.0   &  \\
          & Ad-Hoc & 94.6  & -2.1  & 31.4  & -1.2  &  \\
          & Ad-Hoc Neyman & 95.3  & -1.4  & 31.2  & -2.1  &  \\
          & Strat. Tree & 94.8  & -2.2  & 31.3  & -2.2  & \multicolumn{1}{c}{3.12} \\
          & CV Tree 2  & 95.0  & -2.8  & 31.0  & -3.8  & \multicolumn{1}{c}{2.71} \\
          & Optimal Tree & 94.5  & -6.7  & 34.4  & -5.2  &  \\
    \midrule
    \multirow{6}[2]{*}{4} & No Stratification & 94.8  & 0.0   & 31.2  & 0.0   &  \\
          & Ad-Hoc & 94.9  & -2.6  & 32.2  & -3.0  &  \\
          & Ad-Hoc Neyman & 94.3  & -1.4  & 32.0  & -0.8  &  \\
          & Strat. Tree & 94.7  & 0.4   & 30.2  & 0.6   & \multicolumn{1}{c}{5.81} \\
          & CV Tree 2  & 95.3  & -3.3  & 32.4  & -4.5  & \multicolumn{1}{c}{3.15} \\
          & Optimal Tree & 95.3  & -7.9  & 35.6  & -8.6  &  \\
    \midrule
    \multirow{6}[2]{*}{5} & No Stratification & 94.7  & 0.0   & 31.0  & 0.0   &  \\
          & Ad-Hoc & 95.3  & -3.1  & 31.6  & -4.7  &  \\
          & Ad-Hoc Neyman & 94.9  & -1.7  & 30.6  & -1.5  &  \\
          & Strat. Tree & 94.7  & 2.6   & 29.2  & 3.6   &  \\
          & CV Tree 2 & 95.1  & -3.9  & 32.3  & -3.7  &  \\
          & Optimal Tree & -     & -     & -     & -     &  \\
    \bottomrule
    \end{tabular}%
  \caption{Simulation Results for Model 3}
 \label{tab:model3Var}
\end{table}%

\subsubsection{Application-Based Simulation}
To further assess the potential gains from stratification in the application, we repeat the simulation exercise of Section \ref{sec:simulations} with an application-based simulation design.  To generate the data, we draw observations from the entire dataset with replacement, and impute the missing potential outcome for each observation using nearest-neighbour matching on the Euclidean distance between the (scaled) covariates. We perform the simulations with a sample size of $30,000$, which corresponds approximately to the total number of observations in the dataset. To reproduce the empirical setting, we conduct the experiment in two waves, with sample sizes of $12,000$ and $18,000$ in each wave, respectively.  In all cases, when we stratify we consider a maximum of $4$ strata, which corresponds to the number of strata in Figure \ref{fig:app_kw}, and use SBR to perform assignment. We compare the following stratification methods using the same criteria as in Section \ref{sec:simulations}:

 \begin{itemize}[topsep = 1pt]
\item No Stratification: Here we assign treatment to half the sample, with no stratification.
\item Fixed Stratification: Here we use the stratification from Figure \ref{fig:app_kw},  and assign treatment to half the sample in each stratum. 
\item Fixed Neyman: Here we perform the experiment in two waves using the stratification from Figure \ref{fig:app_kw}. In the first wave, we assign individuals to treatment using the fixed stratification, and then use this data to estimate the Neyman allocation. In the second wave we use the estimated Neyman allocation to assign treatment.
\item Stratification Tree: Here we perform the experiment in two waves. In the first wave, we assign individuals to treatment using the Fixed stratification, and then use this data to estimate a stratification tree. In the second wave we use the estimated tree to assign treatment.
\item Cross-Validated Tree: Here we perform the experiment in two waves. In the first wave, we assign individuals to treatment using the Fixed stratification, and then use this data to estimate a stratification tree with depth selected via ($2$-fold) cross-validation. In the second wave we use the cross-validated tree to assign treatment.
\item Cross-Validated Subtree: Here we perform the experiment in two waves. In the first wave, we assign individuals to treatment using the Fixed stratification, and then use this data to estimate a \emph{constrained} stratification tree with depth selected via ($2$-fold) cross-validation, where we constrain the first split in the tree to partition large and small prior donors as in the tree from Figure \ref{fig:app_kw}. In the second wave we use the constrained cross-validated tree to assign treatment.
\item Infeasible Subtree: Here we estimate an infeasible ``optimal" tree by using a large auxiliary sample (see Appendix \ref{sec:compdet}). However, the first split in the tree is constrained as in the Cross-Validated Subtree. In the first wave, we assign individuals to treatment using the Fixed stratification. In the second wave, we assign individuals to treatment using the infeasible tree.
\item Infeasible Optimal Tree: Here we estimate an infeasible ``optimal" tree by using a large auxiliary sample (see Appendix \ref{sec:compdet}). In the first wave, we assign individuals to treatment using the Fixed stratification. In the second wave, we assign individuals to treatment using the infeasible tree.
\end{itemize}

  We perform $9000$ Monte Carlo iterations. Table \ref{tab:apptable} presents the simulation results.
\begin{table*}[h!]
  \centering
    \begin{tabular}{ccccc}
    \toprule
    \multirow{2}[4]{*}{Stratification Method} & \multicolumn{4}{c}{Criteria} \\
\cmidrule{2-5}         & Coverage & $\%\Delta$Length & Power & $\%\Delta$RMSE\\
    \midrule
    No Stratification & 93.5  & 0.0   & 52.1  & 0.0 \\
    Fixed & 93.9  & -0.3  & 53.5  & -0.5 \\
    Fixed Neyman & 93.9  & -2.1  & 53.9  & -2.4 \\
    Strat. Tree & 93.5  & 0.7   & 52.1  & 1.6 \\
    Strat. Tree CV & 94.0  & -1.4  & 52.8  & -2.5 \\
    Sub. Tree CV & 93.8  & -2.4  & 53.7  & -2.4 \\
    Sub. Tree Infeasible & 94.0  & -3.1  & 55.8  & -3.5 \\
    Infeasible Tree & 94.3  & -6.5  & 58.9  & -7.5 \\
    \bottomrule
    \end{tabular}%
 \caption{Simulation Results for Application-Based Simulation}
  \label{tab:apptable}%
\end{table*}%

We see in Table \ref{tab:apptable} that the overall gains from our procedure are small, which as we explained above may not be surprising given the nature of the experiment. The stratification tree performs slightly worse than no stratification, which agrees with the fact that the cross-validation procedure returned a tree of depth one in Figure \ref{fig:app_ft}.  As was the case in the simulations of Section \ref{sec:simulations}, the cross-validated stratification tree protects against overfitting, and seems to perform fairly well relative to the other feasible methods presented. Interestingly, the constrained CV tree and the fixed Neyman tree outperform the unconstrained CV tree. This may be because, as can be seen in Figure \ref{fig:strat_app}, splitting on large and small prior donors is optimal in this setting and hence imposing this as an explicit constraint seems to improve the performance of the estimation procedure.  Finally, we argue that even in settings where the gains from our procedure are modest, implementing our procedure can help discipline the choice of stratification when a clear choice is not apparent.

\section{Computational Details and Supplementary Simulation Details}\label{sec:compdet}
\subsection{Computational Details}
In this section we describe our strategy for computing stratification trees. We are interested in solving the following empirical minimization problem:
$$\widetilde{T}^{EM} \in \arg\min_{T \in \mathcal{T}_L} \widetilde{V}_m(T)~,$$
where $\widetilde{V}_m(\cdot)$ is defined in Section \ref{sec:mainres}.

Finding a globally optimal tree amounts to a discrete optimization problem in a large state space. Because of this, the most common approaches to fit decision trees in statistics and machine learning are greedy: they begin by searching for a single partitioning of the data which minimizes the objective, and once this is found, the processes is repeated recursively on each of the new partitions (\cite{breiman1984}, and \cite{friedman2001} provide a summary of these types of approaches). However, recent advances in optimization research provide techniques which make searching for globally optimal solutions feasible in our setting. 

A very promising method is proposed in \cite{bertsimas2017}, where they describe how to encode decision tree restrictions as mixed integer linear constraints. In the standard classification tree setting, the misclassification objective can be formulated to be linear as well, and hence computing an optimal classification tree can be computed as the solution to a Mixed Integer Linear Program (MILP), which modern solvers can handle very effectively (see \cite{florios2008}, \cite{chen2016},  \cite{mbakop2016},  \cite{kitagawa2017}, \cite{mogstad2017} for some other applications of MILPs in econometrics). Unfortunately, to our knowledge the objective function we consider cannot be formulated as a linear or quadratic objective, and so specialized solvers such as BARON would be required to solve the resulting program. Instead, we implement an evolutionary algorithm (EA) to perform a stochastic search for a global optimum. See \cite{barros2012} for a survey on the use of EAs to fit decision trees.

The algorithm we propose is based on the procedure described in the {\tt evtree} package description given in \cite{grubinger2011}. Algorithm \ref{alg:gen_tree} presents a pseudocode of the algorithm, with additional details of each step of the procedure provided in Section \ref{sec:alg_details}. In words, a ``population" of candidate trees is randomly generated, which we call the ``parents". Next, for each parent in the population, we select one of five \emph{variation operators} at random and apply it to the parent, which produces a new tree which we call its ``child". We then evaluate the objective function for the parent and child, and keep whichever of the two produces a smaller value for the objective. The resulting list of winners then becomes the new population of parents, and the entire procedure repeats iteratively until some termination conditions are met.  The best tree is then returned. If the population does not satisfy the termination conditions after a maximum number of iterations, then the best tree is returned. We describe each of these steps in more detail below.

\begin{algorithm}
\caption{EA for stratification trees}\label{alg:gen_tree}
\begin{algorithmic}[1]

\Procedure{EA}{${\tt pop\_size, \hspace{1mm} term\_cond, \hspace{1mm} max\_itera}$}  
    \State Initialize population of size ${\tt pop\_size}$.
    \For{\# iterations in {\tt max\_itera}}
    	\For{parent in population}
	  \State Compute optimal strata proportions for parent
	  \State Select variation operator ${\tt variation(\cdot)}$ at random
  	  \State Set child $\leftarrow {\tt variation(\text{parent})}$
	  \State Compute optimal strata proportions for child
  	  \State Set parent $\leftarrow {\tt selection(\text{child, parent})}$
  	  \EndFor
	  \If{population satisfies {\tt term\_cond}}
	       \State Return best parent
	  \Else
	       \State Continue
	  \EndIf
    \EndFor
    \State Return best parent
\EndProcedure

\end{algorithmic}
\end{algorithm}

Although we do note prove that this algorithm converges to a global minimum, it is shown in \cite{cerf1995} that similar algorithms will converge to a global minimum in probability, as the number of iterations goes to infinity. In practice, our algorithm converges to the global minimum in simple verified examples, and consistently achieves a lower minimum than a greedy search. Moreover, it reliably converges to the same minimum in repeated runs (that is, with different starting populations) for all of the examples we consider in the paper.

\subsubsection{Tuning Parameter Recommendations for Practitioners}
The proposed algorithm has several potential tuning parameters which affect its performance: 1. population size, 2. termination conditions, 3. variation operator probabilities. In this section we study the impact of each of these parameters on computing the optimal tree, and then provide recommendations for practitioners based on our results. We proceed by first generating a dataset of size 500 using all three models from Section \ref{sec:simulations} and computing the minimum value of our empirical objective in each case. We then vary the tuning parameters of the algorithm, minimize the objective function, and record the percentage gap relative to the ``true" optimum. We repeat this $100$ times for each choice of tuning parameter and report the average percentage gap below. We vary the tuning parameters as follows:

\begin{itemize}
\item {\bf Population Size:} We vary the population size.
\item {\bf Perturbation Percentage:} We vary the probability that the algorithm chooses a ``local" perturbation (split, prune) vs a ``global" perturbation (mutate, crossover). Here $30\%$ means ``$30\%$ chance of performing a global perturbation".
\item {\bf Termination Percentile:} We vary the percentage of the population that must be within a given tolerance of the best tree.
\item {\bf Termination Iterations:} We vary the number of iterations we wait once the best trees are within a given tolerance before terminating.
\end{itemize}

In all cases the default values for the parameters being held fixed are a population size of $500$, $60\%$ perturbation percentage, 50 termination iterations and termination percentile of $5$. Figure \ref{fig:tuning} presents the results. 

\begin{figure}[H]
\hspace{-1cm}
\begin{tikzpicture}
    \begin{axis}[
        xlabel=Population Size,
        ylabel=Average Percentage Gap,
        xmin=0, xmax=550,
        ymin=0, ymax=25,
        xtick={10,50,100,500},
        ytick={5,10,15,20,25,30,40,50,60,70,80,90,100}
        ]
    \addplot[smooth,mark=*,blue] plot coordinates {
        (10, 8)
        (50, 2.8)
        (100, 1.5)
        (500,0.4)
    };
    \addlegendentry{Model 1}

    \addplot[smooth,color=red,mark=x]
        plot coordinates {
            (10, 20.4)
            (50, 12.7)
            (100, 4.6)
            (500, 1.24)
        };
    \addlegendentry{Model 2}
    
        \addplot[smooth,color=black,mark=triangle]
        plot coordinates {
            (10, 16.1)
            (50, 10.4)
            (100, 2.1)
            (500, 0.7)
        };
    \addlegendentry{Model 3}
    
    \end{axis}
\hspace{9cm}

    \begin{axis}[
        xlabel=Perturbation Percentage,
        ylabel=Average Percentage Gap,
        xmin=0, xmax=100,
        ymin=0, ymax=25,
        xtick={10,30,60,90},
        ytick={5,10,15,20,25,30,40,50,60,70,80,90,100}
        ]
    \addplot[smooth,mark=*,blue] plot coordinates {
        (10, 0.43)
        (30, 0.34)
        (60, 0.4)
        (90, 1.3)
    };
    \addlegendentry{Model 1}

    \addplot[smooth,color=red,mark=x]
        plot coordinates {
            (10, 1.82)
            (30, 1.31)
            (60, 1.24)
            (90, 3.6)
        };
    \addlegendentry{Model 2}
    
        \addplot[smooth,color=black,mark=triangle]
        plot coordinates {
            (10, 1.84)
            (30, 0.68)
            (60, 0.7)
            (90, 0.88)
        };
    \addlegendentry{Model 3}
    
    \end{axis}
     
      \end{tikzpicture}
    
\hspace{-1cm}
\begin{tikzpicture}
    \begin{axis}[
        xlabel=Termination Percentile,
        ylabel=Average Percentage Gap,
        xmin=0, xmax=11,
        ymin=0, ymax=25,
        xtick={1,2,5,10},
        ytick={5,10,15,20,25,30,40,50,60,70,80,90,100}
        ]
    \addplot[smooth,mark=*,blue] plot coordinates {
        (1, 0.49)
        (2, 0.39)
        (5, 0.4)
        (10, 0.35)
    };
    \addlegendentry{Model 1}

    \addplot[smooth,color=red,mark=x]
        plot coordinates {
            (1, 1.78)
            (2, 1.33)
            (5, 1.24)
            (10, 0.93)
        };
    \addlegendentry{Model 2}
    
        \addplot[smooth,color=black,mark=triangle]
        plot coordinates {
            (1, 0.49)
            (2, 0.75)
            (5, 0.7)
            (10, 0.69)
        };
    \addlegendentry{Model 3}
    
    \end{axis}
\hspace{9cm}
\begin{axis}[
        xlabel=Termination Iterations,
        ylabel=Average Percentage Gap,
        xmin=0, xmax=55,
        ymin=0, ymax=25,
        xtick={1,10,20,50},
        ytick={5, 10,15,20,25,30,40,50,60,70,80,90,100}
        ]
    \addplot[smooth,mark=*,blue] plot coordinates {
        (1, 0.54)
        (10, 0.47)
        (20, 0.41)
        (50, 0.4)
    };
    \addlegendentry{Model 1}

    \addplot[smooth,color=red,mark=x]
        plot coordinates {
            (1, 1.55)
            (10, 1.48)
            (20, 1.2)
            (50, 1.24)
        };
    \addlegendentry{Model 2}
    
        \addplot[smooth,color=black,mark=triangle]
        plot coordinates {
            (1, 0.8)
            (10, 0.78)
            (20, 0.8)
            (50, 0.7)
        };
    \addlegendentry{Model 3}
    
    \end{axis}

    \end{tikzpicture}
    \caption{Impact of Tuning Parameters on Optimality Gap}\label{fig:tuning}
  \end{figure}  
  
The tuning parameter with the largest impact on performance is the population size parameter. Increasing the population size slows down computation but gives the algorithm more opportunities for exploration. The variation operator probabilities have a smaller impact on performance, but their effect on computation time is hard to predict ex-ante. The termination conditions play a very small role for these specific designs. For this reason, we recommend that practitioners leave the termination conditions and variation operator probabilities fixed at some default value (for example those described below) and consider tuning the population size parameter so that the output of the algorithm stabilizes as the population size is increased (in our experience a population size of $1000$ is often more than adequate).
    
\subsubsection{More Algorithm Details}\label{sec:alg_details}
\noindent {\bf Initialize Population:} We generate a user-defined number of depth 1 stratification trees (typically between 500 and 1000). For each tree, a covariate and a split point is selected at random, and then the optimal proportions are computed for the resulting strata.

 \noindent {\bf Optimal Strata Proportions}: Recall that for a given stratum, the optimal proportion is given by 
$$\pi^* = \frac{\sigma_1}{\sigma_0 + \sigma_1}~,$$
where $\sigma_0$ and $\sigma_1$ are the within-stratum standard deviations for treatments $0$ and $1$. In practice the user must also select an overlap parameter $\nu$ (as required in Assumption \ref{ass:prop/cell_size}). We select $\nu = 0.1$, so that if $\pi^* < 0.1$ then we assign a proportion of $0.1$, and if $\pi^* > 0.9$ then we assign a proportion of $0.9$.

\noindent {\bf Variation Operators:}
One of the following operators is chosen uniformly at random.
\begin{itemize}[topsep = 1pt]
\item \emph{Split}: Takes a tree and returns a new tree that has had one branch split into two new leaves. The operator begins by walking down the tree at random until it finds a leaf. If the leaf is at a depth smaller than $L$, then a random (valid) split occurs. 
\item \emph{Prune}: Takes a tree and returns a new tree that has had two leaves pruned into one leaf. The operator begins by walking down the tree at random until it finds a node whose children are leaves, and destroys those leaves. The optimal proportions are computed for the resulting strata.
\item \emph{Minor Tree Mutation}: Takes a tree and returns a new tree where the splitting value of some internal node is perturbed in such a way that the tree structure is not destroyed. To select the node, it walks down the tree a random number of steps, at random. The optimal proportions are computed for the resulting strata.
\item \emph{Major Tree Mutation}: Takes a tree and returns a new tree where the splitting value and covariate value of some internal node are randomly modified. To select the node, it walks down the tree a random number of steps, at random. This modification may result in a partition which no longer obeys a tree structure, in which case it destroys any subtrees that viloate the tree stucture. The optimal proportions are computed for the resulting strata.
\item \emph{Crossover}: Takes a tree and returns a new tree which is the result of a ``crossover". The new tree is produced by selecting a second tree from the population at random, and replacing a subtree of the original tree with a subtree from this randomly selected candidate. The subtrees are selected by walking down both trees at random. This may result in a partition which no longer obeys a tree structure, in which case it destroys any subtrees that violate the tree structure. The optimal proportions are computed for the resulting strata.
 \end{itemize}
 
 \noindent{\bf Selection:} For each parent-child pair (call these $T_p$ and $T_c$) we evaluate $\widetilde{V}(T_p)$ and $\widetilde{V}(T_c)$ and then keep whichever tree has the lower value. If it is the case that for a given $T$ any stratum has less than a user-defined number of observations per treatment (typically two), we set $\widetilde{V}(T) = \infty$ (this acts as a rough proxy for the minimum cell size parameter $\delta$, as specified in Assumption \ref{ass:prop/cell_size}).
 
\noindent {\bf Termination Conditions:} We terminate if the top 5\% of trees with respect to the objective are within a given tolerance of each other for at least 50 iterations. The best tree is then returned. If the algorithm does not terminate after 5000 iterations, then the best tree is returned.

\subsection{Supplementary Simulation Details}
In this section we provide additional details on our implementation of the simulation study. 

For each design we compute the ATE numerically. For Model 1 we find $ATE_1 = 0.1257$, for Model 2 we find $ATE_2 = 0.0862$ and for Model 3 we find $ATE_3 = 0.121$. To compute the optimal infeasible trees, we use an auxiliary sample of size $300,000$. The infeasible trees we compute are depicted in Figures \ref{fig:strat_m1}, \ref{fig:strat_m2} and \ref{fig:strat_m3} below.

\begin{figure}[H]\center
\begin{tikzpicture}
  [
    grow                    = down,
    level distance          = 5em,
    edge from parent/.style = {draw, -latex},
    sloped,
    level 1/.style={sibling distance=12em,font=\footnotesize},
    level 2/.style={sibling distance=6em,font=\footnotesize},
    level 3/.style={sibling distance=3em,font=\footnotesize},
  ]
  \node [root] {}
  child {node [root] {}
    child { node [root] {}
    	child {node [leaf]{$0.17$}
	  edge from parent node [above] {$x_1 \le 0.4$} }
	child {node [leaf]{$0.19$}
	  edge from parent node [above] {$x_1 > 0.4$} }
      edge from parent node [above] {$x_1 \le 0.48$} }
    child { node [root] {}
    	child {node[leaf]{$0.26$}
	  edge from parent node [above] {$x_1 \le 0.61$} }
	child {node[leaf]{$0.56$}
	  edge from parent node [above] {$x_1 > 0.61$} }
      edge from parent node [above] {$x_1 > 0.48$} }
      edge from parent node [above] {$x_2 \le 0.4$} }
       child {node [root] {}
    child { node [root] {}
    	child {node [leaf]{$0.21$}
	  edge from parent node [above] {$x_2 \le 0.6$} }
	child {node [leaf]{$0.41$}
	  edge from parent node [above] {$x_2 > 0.6$} }
      edge from parent node [above] {$x_1 \le 0.4$} }
    child { node [root] {}
    	child {node[leaf]{$0.36$}
	  edge from parent node [above] {$x_1 \le 0.59$} }
	child {node[leaf]{$0.56$}
	  edge from parent node [above] {$x_1 > 0.59$} }
      edge from parent node [above] {$x_1 > 0.4$} }
      edge from parent node [above] {$x_2 > 0.4$} }
;
      \end{tikzpicture}
\caption{Optimal Infeasible Tree for Model 1}\label{fig:strat_m1}
\end{figure}

\begin{figure}[H]\center
\begin{tikzpicture}
  [
    grow                    = down,
    level distance          = 5em,
    edge from parent/.style = {draw, -latex},
    sloped,
    level 1/.style={sibling distance=12em,font=\footnotesize},
    level 2/.style={sibling distance=6em,font=\footnotesize},
    level 3/.style={sibling distance=3em,font=\footnotesize},
  ]
  \node [root] {}
  child {node [root] {}
    child { node [root] {}
    	child {node [leaf]{$0.2$}
	  edge from parent node [above] {$x_2 \le 0.4$} }
	child {node [leaf]{$0.23$}
	  edge from parent node [above] {$x_2 > 0.4$} }
      edge from parent node [above] {$x_1 \le 0.4$} }
    child { node [root] {}
    	child {node[leaf]{$0.22$}
	  edge from parent node [above] {$x_1 \le 0.45$} }
	child {node[leaf]{$0.22$}
	  edge from parent node [above] {$x_1 > 0.45$} }
      edge from parent node [above] {$x_1 > 0.4$} }
      edge from parent node [above] {$x_1 \le 0.49$} }
       child {node [root] {}
    child { node [root] {}
    	child {node [leaf]{$0.23$}
	  edge from parent node [above] {$x_1 \le 0.54$} }
	child {node [leaf]{$0.23$}
	  edge from parent node [above] {$x_1 > 0.54$} }
      edge from parent node [above] {$x_1 \le 0.6$} }
    child { node [root] {}
    	child {node[leaf]{$0.57$}
	  edge from parent node [above] {$x_1 \le 0.72$} }
	child {node[leaf]{$0.66$}
	  edge from parent node [above] {$x_1 > 0.72$} }
      edge from parent node [above] {$x_1 > 0.6$} }
      edge from parent node [above] {$x_1 > 0.49$} }
;
      \end{tikzpicture}
\caption{Optimal Infeasible Tree for Model 2}\label{fig:strat_m2}
\end{figure}

\begin{figure}[H]\center
\begin{tikzpicture}
  [
    grow                    = down,
    level distance          = 5em,
    edge from parent/.style = {draw, -latex},
    sloped,
    level 1/.style={sibling distance=12em,font=\footnotesize},
    level 2/.style={sibling distance=6em,font=\footnotesize},
    level 3/.style={sibling distance=3em,font=\footnotesize},
  ]
  \node [root] {}
  child {node [root] {}
    child { node [root] {}
    	child {node [leaf]{$0.39$}
	  edge from parent node [above] {$x_3 \le 0.4$} }
	child {node [leaf]{$0.44$}
	  edge from parent node [above] {$x_3 > 0.4$} }
      edge from parent node [above] {$x_2 \le 0.4$} }
    child { node [root] {}
    	child {node[leaf]{$0.43$}
	  edge from parent node [above] {$x_3 \le 0.4$} }
	child {node[leaf]{$0.48$}
	  edge from parent node [above] {$x_3 > 0.4$} }
      edge from parent node [above] {$x_2 > 0.4$} }
      edge from parent node [above] {$x_1 \le 0.4$} }
       child {node [root] {}
    child { node [root] {}
    	child {node [leaf]{$0.43$}
	  edge from parent node [above] {$x_3 \le 0.4$} }
	child {node [leaf]{$0.47$}
	  edge from parent node [above] {$x_3 > 0.4$} }
      edge from parent node [above] {$x_2 \le 0.4$} }
    child { node [root] {}
    	child {node[leaf]{$0.48$}
	  edge from parent node [above] {$x_3 \le 0.4$} }
	child {node[leaf]{$0.49$}
	  edge from parent node [above] {$x_3 > 0.4$} }
      edge from parent node [above] {$x_2 > 0.4$} }
      edge from parent node [above] {$x_1 > 0.4$} }
;
      \end{tikzpicture}
\caption{Optimal Infeasible Tree for Model 3}\label{fig:strat_m3}
\end{figure}
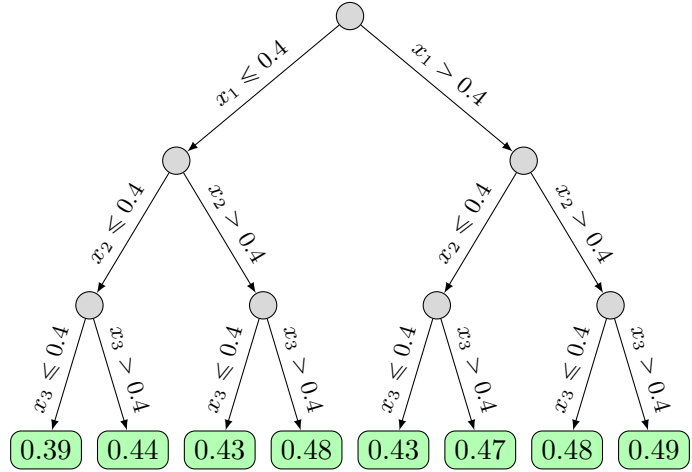

For the application-based design, the ATE is computed to be $0.61$. The infeasible tree we computed is depicted in Figure \ref{fig:strat_app}.

\begin{figure}[H]\center
\begin{tikzpicture}
  [
    grow                    = down,
    level distance          = 7em,
    edge from parent/.style = {draw, -latex},
    sloped,
    level 1/.style={sibling distance=12em,font=\footnotesize},
    level 2/.style={sibling distance=6em,font=\footnotesize},
  ]
  \node [root] {}
    child { node [root] {}
    	child {node [leaf]{$\pi(1) = 0.48$}
	  edge from parent node [above] {${\tt pre\hspace{1mm}gift \le 128}$} }
	child {node [leaf]{$\pi(2) = 0.86$}
	  edge from parent node [above] {${\tt pre\hspace{1mm}gift > 128}$} }
      edge from parent node [above] {${\tt pre\hspace{1mm}gift \le 154}$} }
    child { node [root] {}
    	child {node[leaf]{$\pi(3) = 0.43$}
	  edge from parent node [above] {${\tt total\hspace{1mm}gift \le 1980}$} }
	child {node[leaf]{$\pi(4) = 0.78$}
	  edge from parent node [above] {${\tt total\hspace{1mm}gift > 1980}$} }
      edge from parent node [above] {${\tt pre\hspace{1mm}gift > 154}$} };
      \end{tikzpicture}
\caption{Infeasible Optimal Tree for App.-based Simulation}\label{fig:strat_app}
\end{figure}
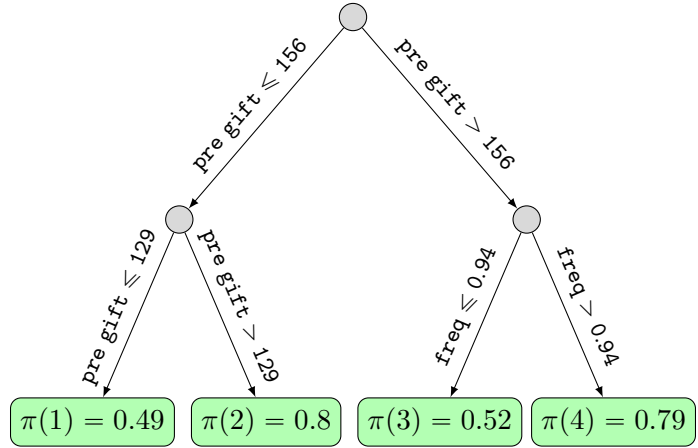

\clearpage	
\end{small}
\nocite{*}
\clearpage

\end{document}